\definecolor{coralred}{rgb}{1.0, 0.25, 0.25}
\DeclareMathOperator{\sinc}{sinc}
\DeclareMathOperator{\vdiv}{div}
\newcommand{\norm}[1]{\left\lVert#1\right\rVert}
\def \N{\mathbb N}
\def \Z{\mathbb Z}
\def \R{\mathbb R}
\def \C{\mathbb C}
\def \pa{{\partial}}
\def \Sc{\mathcal{S}}
\def \Fc{\mathcal{F}}
\def \E{\mathbb{E}}
\def\build#1_#2^#3{\mathrel{\mathop{\kern 0pt#1}\limits_{#2}^{#3}}}
\numberwithin{equation}{section}
\theoremstyle{plain}
\newtheorem{thm}{Theorem}[section]
\newtheorem{lem}[thm]{Lemma}
\newtheorem{prop}[thm]{Proposition}
\newtheorem{cor}[thm]{Corollary}
\newtheorem{rk}[thm]{Remark}
\theoremstyle{definition}
\newtheorem{defn}[thm]{Definition}
\theoremstyle{remark}
\theoremstyle{plain}
\title[A linear model of turbulent cascades]{A linear stochastic model of turbulent cascades and fractional fields}
\author[G. B. Apolin\'ario]{Gabriel B. Apolin\'ario}
\address{Theoretical Physics I, University of Bayreuth, Universit\"atsstr. 30, 95447 Bayreuth, Germany}
\email{gapolinario@uni-bayreuth.de} 
\author[G. Beck]{Geoffrey Beck}
\address{Univ Rennes, IRMAR UMR 6625 \& Centre Inria de l'Universit\'e de Rennes (MINGuS) \& ENS Rennes, France}
\email{geoffrey.a.beck@inria.fr} 
\author[L. Chevillard]{Laurent Chevillard}
\address{Univ Lyon, ENS de Lyon, Univ. Claude Bernard, CNRS, Laboratoire de Physique, 46 all\'ee d'Italie, 69342 Lyon, France \& Univ Lyon, Universit\'e Claude Bernard Lyon 1, CNRS UMR 5208, Institut Camille Jordan, 43 boulevard du 11 Novembre 1918, F-69622 Villeurbanne, France}
\email{laurent.chevillard@cnrs.fr} 
\author[I. Gallagher]{Isabelle Gallagher}
\address{D\'epartement de math\'ematiques et applications, \'Ecole normale sup\'erieure, CNRS, PSL University,  and Universit\'e Paris Cit\'e, 75005 Paris, France}
\email{isabelle.gallagher@ens.fr} 
\author[R. Grande]{Ricardo Grande}
\address{D\'epartement de math\'ematiques et applications, \'Ecole normale sup\'erieure, CNRS, PSL University, 75005 Paris, France}
\email{ricardo.grande@ens.fr}
\begin{document}
	
\begin{abstract}

Turbulent cascades characterize the transfer of energy injected by a random force at large scales towards the small scales. In hydrodynamic turbulence, when the Reynolds number is large, the velocity field of the fluid becomes irregular and the rate of energy dissipation remains bounded from below even if the fluid viscosity tends to zero. A mathematical description of the turbulent cascade is a very active research topic since the pioneering work of Kolmogorov in hydrodynamic turbulence and that of Zakharov in wave turbulence. In both cases, these turbulent cascade mechanisms imply power-law behaviors of several statistical quantities such as power spectral densities. For a long time, these cascades were believed to be associated with nonlinear interactions, but  recent works have shown that they can also take place in a dynamics governed by a linear equation with a pseudo-differential operator of degree 0. In this spirit, we construct a linear equation that mimics the phenomenology of energy cascades when the external force is a statistically homogeneous and stationary stochastic process. In the Fourier variable, this equation can be seen as a linear transport equation, which corresponds to an operator of degree 0 in physical space. Our results give a complete characterization of the solution: it is smooth at any finite time, and, up to smaller order corrections, it converges to a fractional Gaussian field at infinite time.
\end{abstract}

	\maketitle
	
\section{Introduction}

\subsection{Background and motivation}

This work is mainly motivated by some important aspects of the phenomenology of three-dimensional homogenous and isotropic fluid turbulence \cite{monin1971statistical,tennekes1972first,frisch1995turbulence}, of which several aspects have been also observed and formalized for waves in various situations  when they are weakly interacting \cite{zakharov2012kolmogorov}.  As has been repeatedly observed in geophysical and laboratory flows, and in numerical simulations of the incompressible Navier-Stokes equations, a fluid that is stirred by a statistically stationary random force $f(t,x)$, assumed to be smooth in space, will eventually reach a statistically stationary state in which the velocity variance is finite. To dissipate all the energy that is constantly injected into the system in such an efficient way, the velocity field of that fluid will develop a complex multiscale structure ending up with high values of   velocity gradients such that viscosity can easily transform mechanical energy into heat. In other words, the fluid has transferred the energy pumped at large scales by the forcing towards small scales, at which viscous diffusion efficiently acts. This picture is known as the cascading process of energy.

The purpose of this article is to model and reproduce this phenomenon of transfer of energy as a cascading process through the scales. We propose a partial differential equation, which is of course much simpler than the nonlinear Navier-Stokes equations, stochastically forced by an additive random force $f(t,x)$ that we take to be smooth in space and correlated over a typical large lengthscale  (known in the turbulence literature as the integral lengthscale), whose solution develops roughness as time goes on. More precisely, our goal is to generate rough fractional Gaussian H\"{o}lder continuous random fields of parameter $H$ (see for instance the textbook \cite{cohen2013fractional}) from smooth forcing through a dynamical evolution, which can be seen as a simple stochastic representation of the phenomenology mainly developed by Kolmogorov  \cite{1941DoSSR..32...16K}.

As mentioned earlier, a striking feature of three-dimensional turbulent motion is its ability to efficiently dissipate the energy that is injected at large scales in a statistically stationary and homogeneous manner. To be more precise, let us consider a solution of the incompressible Navier-Stokes equation, i.e. a divergence-free velocity field $u(t,x)\in\R^3$ with periodic boundary conditions. This dynamics is stirred by a divergence-free vector forcing term $f(t,x)$, that we take delta-correlated in time and smooth in space, say Gaussian, of zero-average and of covariance
\[
\E \left[f(t,x)\otimes f(s,y)\right] = \delta_{t-s} C_f(x-y),
\]
where $\otimes$ stands for the matrix product, and the matrix $C_f(x)$ is made of a linear combination of the matrix $x\otimes x$ and the identity, with multiplicative coefficients depending only on $|x|$, i.e. a typical covariance matrix of a statistically homogeneous and isotropic vector field \cite{batchelor1953theory,pope2000turbulent}. We furthermore require that these scalar functions of $|x|$ are smooth and compactly supported over a range of the size order of the aforementioned large length scale, so as to mimic the energy injection at the so-called integral length scale. As time goes on, it has been repeatedly observed that the velocity field $u$ reaches a statistically stationary state, which is furthermore statistically homogeneous, of finite variance, and with the additional striking property that it becomes independent of the viscosity $\nu$ as $\nu$ goes to zero, i.e.
\begin{equation}\label{eq:AsymptVarianceNS} 
\quad \lim_{\nu\to 0}\lim_{t\to \infty}\E\left[ |u(t,x)|^2\right] <+\infty \quad \text{for all} \,  \,x.
\end{equation}
The former asymptotic behavior of the velocity variance illustrates clearly how a turbulent fluid can dissipate energy with  high efficiency. For instance, in the same setup but considering the heat equation instead of the Navier-Stokes equations, a statistically stationary regime would also be reached at $t \to \infty$. However, the variance of the solution is then inversely proportional to the viscosity $\nu$, see \cite{chevillard:tel-01212057}. Instead, turbulent motion dissipates energy in a way that the velocity variance is eventually independent of viscosity, which is a far more efficient way of dissipating energy. In order to ensure the independence of said variance on viscosity, \eqref{eq:AsymptVarianceNS}, the fluid develops a rough behavior of H\"{o}lder-type at small scales, in such a way that the variance of the velocity increments asymptotically behaves as follows:
\begin{equation}\label{eq:AsymptVarianceIncrNS} 
\lim_{\nu\to 0}\lim_{t\to \infty}\E\left[ |u(t,x+\ell)-u(t,x)|^2\right] \build{\propto}_{|\ell|\to 0}^{}|\ell|^{2H}  \quad \text{for all}  \,\,  x,
\end{equation}
where the power-law exponent is determined by Kolmogorov's prediction $H\approx 1/3$ \cite{frisch1995turbulence}. Much more could be said on a more precise characterization of the distribution of the increments than only its variance, \eqref{eq:AsymptVarianceIncrNS}, such as its higher order moments that quantify its non Gaussian, skewed and intermittent nature \cite{frisch1995turbulence}. In this article, we will focus on a second-order modeling of these fluctuations, \eqref{eq:AsymptVarianceNS} and \eqref{eq:AsymptVarianceIncrNS}; we leave finer descriptions for future research.

A first precise formalization of the cascade phenomenon could be built by imposing a particular dynamical relation between the coefficients of a decomposition of the velocity field, such as a continuous wavelet transform or a discrete (dyadic) decomposition on a tree \cite{daubechies1992ten}. This has been explored in the literature \cite{barbato2011smooth,barbato2013dyadic,cheskidov2022dyadic} leading to precise statements on H\"{o}lder regularity and its relationship with scaling behaviors of the coefficients.
Although great progress has been made in the understanding of such models and their formalization, which usually exploits a typical quadratic interaction between neighboring coefficients, these approaches often avoid the important question of the relation of these coefficients in space. This is necessary in order to design a model that leads to statistically homogeneous velocity fields, as observed in nature and in numerical simulations. Nonetheless, these models can be seen as a sophistication of the so-called shell models\footnote{See for instance \cite{bohr1998dynamical,biferale2003shell} which consist in exploring quadratic interactions between shells, that share some behaviors with velocity Fourier modes and wavelet coefficients, along a single branch of a tree decomposition, lacking thus a discussion of the spatial relationships between coefficients.}. In this spirit, we believe an important step was made in \cite{mattingly2007simple}, where the authors investigate a simple linear relation between shells, which is shown to be able to transfer energy from large to small scales. Let us also mention \cite{mailybaev2015continuous} where some ideas to build a PDE from these shell models are proposed.

From a somewhat different side of fluid mechanics, more focused on the implications of global rotation \cite{rieutord1997inertial,rieutord2001inertial} or stratification of the density field \cite{maas1997observation,scolan2013nonlinear} on a flow, it has been evidenced a phenomenon of focusing of waves onto attractors, whose precise shape are determined by the boundaries. Based on a linearization of the fluid equations, this phenomenon has been interpreted as a cascading process through scales. These ideas have been then formalized and rigorously studied from a mathematical viewpoint in a series of recent articles \cite{CSR,dyatlov2019microlocal}, which underline the importance of operators of degree 0 as a deterministic mechanism able to transfer energy through scales. 
%In light of these theoretical investigations, in \cite{ACM,AC} the authors propose \gbrep{a simple and natural way to propose} new kinds of models able to reproduce in a certain sense this cascading mechanism of energy transfer as a transport in Fourier space.

\subsection{Main results}\label{Sec:MainResults}

The rough and disordered nature of a turbulent velocity field $u(t,x)$ has been repeatedly observed in laboratory and numerical flows, and in geophysical situations \cite{tennekes1972first,frisch1995turbulence}. From this signal, considering for instance a component of the velocity vector field as a function of space, depending on the experimental possibilities and the large-scale geometry of the flows, one can construct the energy spectrum $|k| \mapsto \mathbb{E} | \widehat{u}(t,k) |^2$ where $\widehat{u}$ stands for the spatial Fourier transform. According to the standard phenomenology of fluid turbulence, which has been multiply confirmed by   observations in very different situations, the energy-spectrum resembles a curve \cite{tennekes1972first,frisch1995turbulence} that can be schematically decomposed as follows:
\begin{itemize}
\item {\it{(injection range)}} for small $|k|$ of the order of the characteristic wavelength of energy injection, the energy-spectrum is mainly determined by the forcing and the associated large-scale geometry of the flow,
\item {\it{(inertial range)}} for intermediate $|k|$, the energy-spectrum develops a power-law behavior whose exponent is found universal, i.e. independent of viscosity and of the nature of the flow, and can be interpreted as the generation of small scales by the internal motion of the fluid following a transfer of energy from small wave-numbers to large wave-numbers,
\item {\it{(dissipative range)}} for large $|k|$, the energy-spectrum is governed by dissipation processes which damp efficiently all the energy coming from the large scales, making the spatial velocity profile   a smooth function. 
\end{itemize}
The intermediate range of scales, called the inertial range in the turbulence literature  \cite{tennekes1972first,frisch1995turbulence}, is where this mechanism of transport of energy takes place. The universally observed power-law exponent of the energy-spectrum can be written as $-(2H+d)$, i.e. $\mathbb{E} | \widehat{u}(t,k) |^2 \sim |k|^{-(2H+d)}$, where we have introduced for the sake of generality the space dimension $d$, and the parameter $H$ that will be eventually interpreted as a Hurst, or H\"{o}lder, exponent, in a statistically averaged sense. In real situations, for $d=3$, it is indeed observed that $H\approx 1/3$, as predicted by dimensional arguments mainly attributed to Kolmogorov \cite{1941DoSSR..32...16K,LewanPini}.

The main goal of this paper is to propose a family of partial differential equations, such that, when stirred by a statistically stationary and spatially homogenous, smooth in space forcing term, its solution $u(t,x)$ reaches  at long times a statistically stationary state which displays the typical spectral behavior detailed above. We will achieve this with the following transport equation in Fourier space:
\begin{equation}\label{eq:maineq-div-intro}
\begin{cases}
\pa_t \widehat{u}(t,k) +  \vdiv_k \left(  \frac{c k}{| k |} \, \widehat{u}(t,k) \right) + c \displaystyle \frac{H+ \frac{1}{2}}{|k|} \,  \widehat{u}(t,k) = \widehat{f}(t,k) & t>0, k \in \mathbb{R}^d,  |k| >\kappa >0,\\
\widehat{u}(t, k)=0 &  t>0, k \in \mathbb{R}^d, |k| \leq \kappa, \\
\widehat{u}(0,k)=0.
\end{cases}
\end{equation}
Here $c,\kappa>0$ and $H\in\R$ are fixed, and the source $f$ satisfies
$$
\E [ f(t,x) f(s,y)] = \delta_{t-s}\, C_f (x-y),
$$
where $C_f$ is smooth and satisfies some additional assumptions detailed below. Our main result is the following:
%then the solution of \eqref{eq:maineq-div-intro} satisfies for all $k,k'$
%\begin{equation}\label{corr-Fourier-intro}
%\lim_{t \to \infty} \E [\widehat{u}(t,k)\overline{\widehat{u}(t,k')}] =  \chi_{|k| > \kappa} |k|^{-(2H+d)} \left( C(d,H) - {\widehat{\mathcal{G}}}_H(k)  \right) \delta_{k- k'}
%\end{equation}
%where $C(d,H)$ is a constant and ${\widehat{\mathcal{G}}}_H(k)$ is a fast decaying function (and thus it inverse Fourier transform is smooth). 
%We say that it's an heuristical result since in proof of theorem \ref{heuristic-theorem_nu}, we assume that $\hat{f}(t,k)$ is pointwise with respect to $k$ defined which is not the case for a statically homogenous source. The section \ref{sec-rigorous} is devoted to translate \eqref{eq:maineq-div-intro} from wave-number space $\mathbb{R}_k^d$ to physical space $\mathbb{R}_x^d$ and show a rigorous version of \eqref{corr-Fourier-intro}. Thus the main theorem is the following.

\begin{thm}\label{thm:main_intro}
 Let $H \in (0,1)$ and let the forcing $f$ be 
\begin{equation}\label{eq:rforcing-phy-intro}
f(t,x)= \int_{\R_y^d} \varphi (x-y)\, dW(t,y),
\end{equation}
where $dW$ is a space-time Gaussian real white noise and $\varphi \in  \Sc (\R_x^d)$ is a radial function such that~$\widehat{\varphi}(k)=0$ for all $|k| < \kappa$.

\begin{enumerate}[(i)]
\item The transport equation in wavenumber space \eqref{eq:maineq-div-intro} with source \eqref{eq:rforcing-phy-intro} can be rigorously formulated in physical space as an a.s.\  well-posed PDE.
Moreover, at any $t>0$, the solution $u(t,x)$ has finite variance and a.s.\ smooth paths with respect to $x$. 
\item As $t\rightarrow\infty$, $u(t,x)$ converges in law to a zero-mean Gaussian field $u_{\infty}(x)$ which has a.s.\ $\alpha$-H\"older continuous paths for any $0<\alpha<H$. 
\item  The correlations are given by
\begin{equation*}\label{eq:correlations_infty-intro}
\E [u_{\infty} (x_1) u_{\infty}(x_2)] = C(d,H) \,\mathcal{K}_H(x_1-x_2) - (\mathcal{K}_H \ast \mathcal{J}_H) (x_1-x_2),
%\E [ u (t,x_1) u(t,x_2)] = C(d,H) \,\mathcal{K}_H(x_1-x_2) - (\mathcal{K}_H \ast \mathcal{G}_H) (x_1-x_2) + O \Big((ct)^{-(2H+d)} \Big),
\end{equation*}
where 
$$
\mathcal{K}_H := \mathcal{F}^{-1} \left[ \chi_{|k| > \kappa} |k|^{-(2H+d)} \right] ,
$$
while~$C(d,H)$ is an explicit constant and the function $\mathcal{J}_H \in \Sc(\mathbb{R}_x^d)$ depends explicitly on $\varphi$ in~\eqref{eq:rforcing-phy-intro}. 
\end{enumerate}
\end{thm}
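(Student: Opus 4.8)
The plan is to solve \eqref{eq:maineq-div-intro} explicitly in the Fourier variable by the method of characteristics, and then to deduce (i)--(iii) from an explicit computation of the covariance of the resulting Gaussian solution.

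\emph{Step 1: characteristics.} Writing $k=r\omega$ with $r=|k|$, $\omega\in\S^{d-1}$, the vector field $k\mapsto ck/|k|$ is radial, so the characteristics of the transport operator are the rays $s\mapsto(r-c(t-s))\,\omega$, along which $|k|$ grows at speed $c$. Since $\vdiv_k(ck/|k|)=c(d-1)/|k|$, \eqref{eq:maineq-div-intro} reads $\pa_t\widehat u+(ck/|k|)\cdot\nabla_k\widehat u+c(H+d-\tfrac12)|k|^{-1}\widehat u=\widehat f$, a linear ODE along each ray with explicit integrating factor; fixing the constant via the homogeneous data $\widehat u(0,\cdot)=0$, $\widehat u(t,\cdot)\big|_{|k|\le\kappa}=0$ and substituting $\rho=r-c(t-s)$ yields
\begin{equation*}
\widehat u(t,r\omega)=\frac1c\int_{\rho_\ast}^{r}\Big(\frac{\rho}{r}\Big)^{H+d-\frac12}\widehat f\!\Big(t-\tfrac{r-\rho}{c},\,\rho\omega\Big)\,d\rho,\qquad \rho_\ast:=\max(\kappa,\,r-ct).
\end{equation*}
This exhibits $u(t,\cdot)$ as a linear functional of the noise $dW$, hence a zero-mean Gaussian field; reading the integral through the Wiener isometry makes it a well-defined random tempered distribution, and transferring the multipliers $k/|k|$, $|k|^{-1}$ and $\vdiv_k$ to physical space (Riesz transforms, the Riesz potential $(-\Delta)^{-1/2}$, and multiplication by $x$) identifies \eqref{eq:maineq-div-intro} with an order-$0$ pseudodifferential SPDE, giving the first assertion of (i).

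\emph{Step 2: covariance; parts (i) and (ii).} Since $f$ is Gaussian, $\delta$-correlated in time and spatially stationary with spectral density $|\widehat\varphi(k)|^2$, the white-in-time structure forces the two time arguments, and hence (via the spatial $\delta$) the radii and directions, to coincide, collapsing the double integral from Step~1 to a single radial integral; one obtains that $u(t,\cdot)$ is stationary with power spectral density
\begin{equation*}
S_t(k)=C_0\,|k|^{-(2H+d)}\int_{\rho_\ast(|k|,t)}^{|k|}|\widehat\varphi(\rho)|^2\,\rho^{\,2H+d}\,d\rho ,
\end{equation*}
the exponent $-(2H+d)$ coming from the polar Jacobian together with the two integrating factors. (These $\delta$-computations are made rigorous by pairing $\widehat u(t,\cdot)$ with Schwartz functions and invoking the Wiener isometry.) Part (i) now follows from $\widehat\varphi\in\Sc$: for fixed $t$ and $|k|\to\infty$ the integration range $[\,|k|-ct,|k|\,]$ concentrates at $\rho\approx|k|$, so $S_t$ decays faster than any polynomial, giving finite variance ($\int S_t<\infty$) and, by the usual Gaussian regularity criteria applied to all derivatives, a.s.\ $C^\infty$ paths. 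For (ii), as $t\to\infty$ we have $\rho_\ast(|k|,t)\downarrow\kappa$, so $S_t\uparrow S_\infty$ with $S_\infty(k)=C_0|k|^{-(2H+d)}\chi_{|k|>\kappa}\int_\kappa^{|k|}|\widehat\varphi(\rho)|^2\rho^{2H+d}\,d\rho$; the uniform bound $S_t\le S_\infty\le C_0\big(\int_\kappa^\infty|\widehat\varphi(\rho)|^2\rho^{2H+d}\,d\rho\big)|k|^{-(2H+d)}\chi_{|k|>\kappa}\in L^1_{\mathrm{loc}}$ and dominated convergence yield convergence of all covariance functionals, hence, since the fields are centred Gaussian, convergence in law to the Gaussian field $u_\infty$ with spectral density $S_\infty$.

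\emph{Step 3: H\"older regularity (completing (ii)) and the correlations (iii).} Because $\int_\kappa^{|k|}|\widehat\varphi(\rho)|^2\rho^{2H+d}d\rho\to\int_\kappa^\infty|\widehat\varphi(\rho)|^2\rho^{2H+d}d\rho$ as $|k|\to\infty$, one has $S_\infty(k)\asymp|k|^{-(2H+d)}$ at infinity, so the standard bound $\E|u_\infty(x+\ell)-u_\infty(x)|^2=\int S_\infty(k)\,|e^{ik\cdot\ell}-1|^2\,dk\lesssim|\ell|^{2H}$ (split at $|k|\sim|\ell|^{-1}$, using $0<H<1$), combined with Gaussianity and Kolmogorov--Chentsov, gives a.s.\ $\alpha$-H\"older paths for every $\alpha<H$. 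Finally, splitting $\int_\kappa^{|k|}=\int_\kappa^\infty-\int_{|k|}^\infty$ rewrites $S_\infty(k)=\chi_{|k|>\kappa}|k|^{-(2H+d)}\big(C(d,H)-\widehat{\mathcal J_H}(k)\big)$, where $C(d,H)$ is the resulting explicit constant and $\widehat{\mathcal J_H}(k):=C_0\int_{\max(\kappa,|k|)}^\infty|\widehat\varphi(\rho)|^2\rho^{2H+d}\,d\rho$ is smooth and rapidly decreasing, so $\mathcal J_H\in\Sc(\R^d_x)$; taking inverse Fourier transforms and using $\chi_{|k|>\kappa}|k|^{-(2H+d)}\widehat{\mathcal J_H}=\widehat{\mathcal K_H}\,\widehat{\mathcal J_H}=\widehat{\mathcal K_H\ast\mathcal J_H}$ produces exactly the formula in (iii).

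\emph{The main obstacle} is the first claim of (i): recasting the Fourier-space transport equation as a bona fide well-posed SPDE in physical space, since this combines an operator of degree $0$ with the unbounded multiplier $\vdiv_k\leftrightarrow x\cdot$, so one must isolate a functional framework in which the explicit characteristic formula is \emph{the} solution. Once that is in place, the probabilistic arguments (Wiener isometry, collapse of the double integral, interchange of limit and integral) are routine.
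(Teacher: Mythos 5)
Your computational core coincides with the paper's: the explicit solution along radial characteristics with integrating factor $(\rho/r)^{H+d-\frac12}$, the collapse of the covariance to the spectral density $|k|^{-(2H+d)}\int_{\rho_\ast}^{|k|}|\widehat{\varphi}(\rho)|^2\rho^{2H+d}\,d\rho$, the monotone limit $\rho_\ast\downarrow\kappa$ as $t\to\infty$, the splitting at $|k|\sim|\ell|^{-1}$ for the H\"older bound, and the decomposition $\int_\kappa^{|k|}=\int_\kappa^{\infty}-\int_{|k|}^{\infty}$ producing $C(d,H)-\widehat{\mathcal J}_H(k)$ and hence the convolution formula in (iii) are all exactly the arguments of \Cref{heuristic-theorem}, \Cref{thm:correlations}, \Cref{thm:correlations_infty} and \Cref{thm:Holder_sol}. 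Parts (ii) and (iii) of your proposal are therefore correct and follow the paper's route.

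The genuine gap is the first assertion of (i), which you correctly name as ``the main obstacle'' but then dispose of in one sentence. The difficulty is not merely to ``isolate a functional framework'': as the paper stresses in its list of issues with the heuristic proof, $\widehat f(t,\cdot)$ and $\widehat u(t,\cdot)$ are not functions of $k$ (the solution is not in any $L^p(\R^d_x)$, see \eqref{notL^p}), so your characteristic formula involves evaluating a distribution along rays and your $\delta$-computations multiply distributions with overlapping singular supports; ``pairing $\widehat u(t,\cdot)$ with Schwartz functions and invoking the Wiener isometry'' cannot be carried out on the object as you have written it. What the paper actually does is reroute everything through physical space: it introduces the space $X$ in \eqref{def:X}, the operator $A$ with domain \eqref{def:domain_A} (where the trace condition $\widehat u|_{|k|=\kappa}=0$ is given meaning by a Sobolev embedding), computes the adjoint $A^\ast$ in \Cref{prop:adjoint} noting that $D(A)\subsetneq D(A^\ast)$, and defines weak solutions by testing against $C^\infty((0,\infty),D(A^\ast)\cap\Sc(\R^d_x))$ as in \Cref{def:weak_sol}. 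Uniqueness is then proved by exhibiting explicit solutions of the backward adjoint equation $(-\pa_s+A^\ast)g_t=0$, and existence by verifying that $u(t,x)=\int_0^t\int_{\R^d_y}G(t-s,x,y)\,dW(s,y)$ with $G(t,x,y)=e^{-tA}[\tau_y\varphi](x)$ solves the weak formulation; the crucial point, established in \Cref{reg-G}, is that $G$ is Schwartz in $x$ and in $y$ separately (with locally uniform seminorms), which is what legitimises every Wiener integral, every Fubini interchange, and the covariance computation by Plancherel in the $y$-variable rather than by $\delta_{k-k'}$ manipulations. Your finite-time smoothness claim in (i) is also only sketched (``the usual Gaussian regularity criteria applied to all derivatives''); the paper obtains it, together with H\"older-$\frac12^-$ continuity in $t$, from quantitative increment bounds on $\pa_x^\beta u$ and the Kolmogorov continuity theorem, all of which rest on the same Schwartz-class properties of $G$. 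None of this is routine bookkeeping: it is the content of the well-posedness half of the theorem, and your proposal does not supply it.
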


A more detailed version of this result is presented in \Cref{thm:main_H_pos}, page~\pageref{thm:main_H_pos}.

\begin{rk}\label{rk-kappa-source}
The parameter $\kappa$ can be chosen as the smallest non-vanishing wavenumber in the support of the Fourier transform of the forcing. Following the analogy with the Navier-Stokes equations presented in the introduction, $\kappa$ may be interpreted as a quantity linked to the inverse of the \emph{integral lengthscale}\footnote{
If  $L_f$ is the integral lengthscale, then there exists two real positive numbers $a<b$ such that the support of $\widehat{f}$ is contained in the annulus of inner radius $\frac{a}{L_f}$ and outer radius $\frac{b}{L_f}$. Thus one may set $\kappa = \frac{a}{L_f}.$},
 i.e the typical lengthscale of the correlations of the forcing. The fact that our force acts at large but finite scales means that $\kappa$ is small but non-zero.
%Indeed if we take $\kappa < \kappa_\ast$ where $\kappa_\ast$ is the smallest non-vanishing wavenumber in the support of Fourier transform the source, then  
The kernel ${\mathcal K}_H$ is a function when $H \in (0,1)$ and $\kappa >0$. However, when $\kappa=0$ we  have the following operator: 
$$
{\mathcal K}_H \underset{\kappa \to 0}{\longrightarrow} (-\Delta)^{-\left(H+\frac{d}{2}\right)}.
$$
\end{rk}

\begin{rk} Note that the limiting Gaussian field $u_{\infty}$ shares some statistical properties, such as roughness, with statistical homogeneous {\it{fractional gaussian fields}} \cite{cohen2013fractional, LodhiaFGF} defined by $$(-\Delta)^{-\frac{H+\frac{d}{2}}{2}} dW,$$ that are classically encountered in the turbulence literature \cite{kraichnan1968small,chaves2003lagrangian,chevillard2019skewed}. Indeed, for $H \in (0,1)$ both have a.s.\ $\alpha$-H\"older continuous paths for any $0<\alpha<H$ and one can show that
\begin{equation}\label{FGF-L-model}
u_{\infty} \underset{\text{(in law)}}{=} C(d,H)  \mathcal{F}^{-1} \left[ \chi_{|k| > \kappa} \right] \ast (-\Delta)^{-\frac{H+\frac{d}{2}}{2}} dW - u_{\mathrm{reg}}.
\end{equation}
Here, $u_{\mathrm{reg}}$ is a smooth zero-mean Gaussian field with correlations
$$
\E [u_{{\mathrm{reg}}} (x_1) u_{{\mathrm{reg}}}(x_2)] =  (\mathcal{K}_H \ast \mathcal{J}_H) (x_1-x_2),
$$
which is a smooth function with respect to $(x_1-x_2)$ even if $H \leq 0$.
The case $H \in [-d/2,0]$ will be discussed in Section \ref{sec-rigorous}.
%later when we will introduce the appropriate test function space to correctly define the correlations.
\end{rk}

\begin{rk}\label{rk:negative_H}
When $H \in [-d/2,0]$, as $t\rightarrow\infty$, $u(t)$ still converges to a zero-mean Gaussian field $u_{\infty}$, but this field is not necessarily H\"older continuous with respect to $x$ anymore. In this case, one may view $u_{\infty}$ as a distribution living in the dual of an appropriate test function space $\mathcal{T}$, see \Cref{thm:main_H_neg} for more details. The correlation structure of the limiting Gaussian measure is given by:
\begin{equation}\label{corr-infty-intro}
 \E [  \langle u_{\infty} , g_1 \rangle {\langle u_{\infty}, g_2 \rangle}] = 
 \int_{\R_k^d}  \chi_{|k| >  \kappa} \,|k|^{-(2H+d)} \,  \left(  C(d,H) - {\widehat{\mathcal{J}}}_H(k) \right)\,\widehat{g_1}(k) \,\overline{\widehat{g_2}(k)}  \, dk 
\end{equation}
for any test functions $g_1,g_2 \in \mathcal{T}$, where $\langle \cdot ,\cdot \rangle $ stands for the duality product in $\mathcal{T}$. 
In \Cref{thm:main_H_neg}, we show that, for any test functions in $\mathcal{T} \cap \Sc(\mathbb{R}_d^x)$, \eqref{corr-infty-intro} yields a rate of convergence proportional to $(ct)^{-(2H+d+2n)}$ for $n$ as large as desired. The expression \eqref{corr-infty-intro}  corresponds to the energy-spectrum picture described above. Indeed, $|k|^{-(2H+d)}$ corresponds precisely to the inertial range previously described, while ${\widehat{\mathcal{J}}}_H(k)$ captures the contribution from the forcing, which  is a correction in the injection range. 
In fact, if the source is spectrally supported in small wavenumbers, then ${\widehat{\mathcal{J}}}_H(k)$ vanishes in the inertial range.
\end{rk}

Finally, let us highlight the difference between the properties of the solution at finite and infinite time. At finite time, the solution is smooth with respect to $x$, whereas at infinite time the solution is only H\"older continuous (or even rougher if $H\leq 0$, as explained in \Cref{rk:negative_H}). This loss of regularity at infinite time is what is expected in linear turbulence. Turbulence is usually associated to a nonlinear equation. For example, in the case of wave turbulence, nonlinearities create wave interactions which allow the transfer of energy to higher and higher wavenumbers. Such transfers of energy typically result in a loss of regularity.
However, nonlinearities might not be the only way in which such loss of regularity can occur. Indeed, Y. Colin de Verdi\`ere and L. Saint-Raymond~\cite{CSR} have shown that, in the context of internal waves, a loss of regularity can also take place in the case of a linear equation with an operator of degree 0.
Linear equations with operators of degree 0 are also common whenever one introduces a dispersive perturbation in a hyperbolic system. In such cases, these operators of degree 0 are used to model wave propagation under strong dispersive effects and they are responsible for memory effects.
For example, in the context of wave-energies, the second author and D. Lannes show that the waves generated by a moving floating object are governed in the linear regime by a non-local transport equation of degree 0, see \cite{Beck-Lannes}. In the context of electrical circuits, there are cases in which 1D models of electromagnetic waves propagating along a coaxial cable are governed by operators of degree 0, see for instance \cite[Chapter 5]{beck:tel-01421740}. \\

One issue of our model is that it only features a single H\"older exponent. The velocity field of a concrete turbulent fluid consists of many H\"older exponents, i.e. the H\"older-regularity of the velocity field $u(t,x)$ around a point $x \in \mathbb{R}^d$ depends on the point itself. This is known as the multifractal formalism \cite{chevillard:tel-01212057,cohen2013fractional}. The term multifractal refers to the fact that the sets of points with same regularity are often fractal. Moreover, our model does not capture finer descriptions (beyond the variance) of the distribution of the increments of the velocity field. Such descriptions should quantify its non-Gaussian and intermittent nature \cite{chevillard2019skewed}, and therefore our linear model does not suffice. 
It is known that one can construct a multifractal and intermittent field with the theory of Gaussian multiplicative chaos \cite{Rhodes:2013iua}, however our actual goal is to obtain a multifractal and intermittent field dynamically, i.e. as the solution to a non-linear equation forced by a white-noise in time, but that admits a rigorous mathematical treatment.
We can also consider a forcing which is not a white-noise in time whose temporal correlation function is given by an oscillating function in order to make a comparison with \cite{CSR, Zworski,AlmoNigam} and \cite{Carles-quasi-rec}. Finally, we could investigate other linear models of cascade such as in the case of a compact operator plus a potential of degree 0 as in \cite{Maspero}. These issues will be tackled in future papers.

%In light of these works, in \cite{ACM,AC} the authors propose new kinds of models able to reproduce this cascading mechanism of energy transfer as a transport in Fourier space. 

\subsection{Outline} 

The article is organized as follows. In \Cref{sec-1D}, we present a simple transport equation that converges to a complex white noise (up to lower order terms). A small tweak to this model allows us to construct a model that gives rise to a \emph{real} white noise. In \Cref{sec-heuristic}, we explain how to generalize the latter model to higher dimensions and give a heuristic proof of the main results in this paper. In \Cref{sec-rigorous}, we provide a mathematically rigorous study of our model: we introduce the right functional setting, we develop a global well-posedness theory and give a complete description of the asymptotic behavior of the solution, as well as its properties. This constitutes the proof of \Cref{thm:main_intro}. Finally, in \Cref{sec-simulations} we propose a numerical method and conduct numerical simulations in dimensions 1, 2 and 3 to validate our model.

\subsection{Acknowledgements}

We thank Oliver B\"{u}hler for his interesting suggestion about replacing the pure transport term $\pa_k$ in \eqref{eq:1d_cmain_fourier} by $\text{sign}(k)\pa_k$ as a way of fixing the physically undesirable behavior of the solution. 
We thank Charles-Edouard Br\'ehier and the referee for their careful reading and their corrections of a previous version of this manuscript.
All five authors were funded by the Simons Collaboration Grant on Wave Turbulence, Simons Award ID: 651475 and 651675.

\subsection{Notation}

For an integrable function $f:\R^d \rightarrow \C$, we denote by $\widehat{f}$ its {\bf{Fourier transform}}, namely
\[
\forall k \in  \R_k^d, \qquad  \widehat{f}(k) := \Fc f(k) :=\int_{\R_x^d} e^{-2\pi i x\cdot k} \, f(x)\, dx.
\]
Whenever defined, the inverse Fourier transform is
\[
\forall x \in  \R_x^d, \qquad  f(x) = \Fc^{-1} \widehat{f} (x) = \int_{\R_k^d} e^{2\pi i x\cdot k} \, \widehat{f}(k)\, dk.
\]
It is well know that the Fourier transform is an isometry from $L^2(\R_x^d)$ to $L^2(\R_k^d)$, from $\Sc(\R_x^d)$ to $\Sc(\R_k^d)$, where $\Sc(\R_x^d)$ denote the space of {\bf{Schwartz functions}} (i.e. smooth functions whose derivatives are rapidly decreasing), and from $\Sc'(\R_x^d)$ to $\Sc'(\R_k^d)$ where $\Sc'(\R_x^d)$ denote the space of tempered distribution (i.e. the dual space of $\Sc(\R_x^d)$). We will denote by $\langle \cdot , \cdot \rangle$ the duality product between $\Sc'$ and $\Sc$.

%Sometimes we will need to localise in a certain range of 
We will also need some spaces that quantify the regularity of functions more precisely. For a fixed integer $n$,  the {\bf{Sobolev space}} of order~$n$ is defined by
$$
H^n(\mathbb{R}^d) := \{ u \in L^2(\R^d) \, | \, \partial_{x_i}^j u \in L^2(\R^d) \, \text{ with } 1 \leq i \leq d \, \text{ and }  0 \leq j \leq n \},
$$
 and their dual spaces are denoted by $H^{-n}(\mathbb{R}^d)$. We will denote by $\langle \cdot , \cdot \rangle_{H^{-n},H^n}$ the duality product between $H^n$ and $H^{-n}$. For $\alpha\in (0,1)$ the {\bf{H\"older space}} $C^{0,\alpha}(\mathbb{R}^d)$ is defined by
$$
C^{0,\alpha}(\mathbb{R}^d):=  \{ u \text{ continuous and bounded } | \, \exists C >0, \,  \forall x, \ell \in \mathbb{R}^d, | \ell | \leq 1,  | \delta_\ell u (x) | \leq C |\ell|^{\alpha} \},
$$
where $\delta_\ell$ denotes the {\bf{increment}} defined by
\[
\delta_\ell u(x):= u(x+ \ell)- u(x)
\quad \text{for } x, \ell \in \mathbb{R}^d.
\]

We will denote by $\chi_A$ the {\bf{characteristic function}}\footnote{This means that $\chi_A(k)=1$ if $k\in A$ and $\chi_A(k)=0$ if $k\notin A$.} of the set $A$. \\

Let $(\Omega, \sigma(\Omega), \mathbb{P})$ be a probability space.
 A {\bf{Gaussian field}} $u: \mathbb{R}^d \to L^2(\Omega)$ is a field such that for all $n \geq 1$ and for all $(x_1, x_2, \dots, x_n)\in (\R^d)^n$, the random vector $(u(x_1), u(x_2), \dots, u(x_n))$ is a Gaussian random vector. When $d=1$, a 1D Gaussian field is usually called a Gaussian process. A {\bf{Gaussian random measure}} $\mu$ acting on $\Sc(\mathbb{R}^d)$ is a random tempered distribution such that for every $g \in \Sc(\mathbb{R}^d)$, the random variable $\langle \mu, g \rangle$ is a centered Gaussian. A {\bf{white noise}} $dW(x)$ is a Gaussian random measure acting on $L^2(\mathbb{R}^d)$ that satisfies the following for any functions $f,g \in L^2(\mathbb{R}^d)$
\[
 \mathbb{E} \left[ \left( \int_{\mathbb{R}^d} f(x) dW(x) \right) \left( \int_{\mathbb{R}^d} g(x) dW(x) \right) \right] =  \int_{\mathbb{R}^d} f(x) g(x) dx.
\]
Since we always integrate a deterministic function against $dW(x)$, the choice between It\^{o} and Stratonovich integrals is unimportant, since Wiener integration suffices (see, for instance, \cite{Janson}).

\section{One-dimensional transport in wavenumber space}\label{sec-1D}

\subsection{Building one-dimensional white noise: real vs complex}

In order to mimic the transport of energy from large scales to small scales, the authors in \cite{ACM,AC} proposed a simple transport equation in Fourier space. To present these ideas, we first consider a one-dimensional model for a velocity field $u(t,x)$, whose spatial Fourier transform aims to solve the linear evolution
\begin{equation}\label{eq:1d_cmain_fourier}
\begin{cases}
\pa_t \widehat{u} (t,k) + c\,\pa_k \widehat{u}(t,k)= \widehat{f}(t,k), \quad (t,k)\in (0,\infty) \times \R,  \\
\widehat{u}(t,k)|_{t=0}=0.
\end{cases}
\end{equation}
Here $c>0$ is fixed and can be viewed as a transport rate in wavenumber space. On the right-hand side of \eqref{eq:1d_cmain_fourier}, we have included an additive term $ \widehat{f}$ which is the Fourier transform of a spatial forcing term. The support of $\widehat{f}$ is localized at small wavenumbers, which is consistent with the assumption that the forcing term in physical space acts at large scales. 

As we can see, the dynamical evolution proposed in \eqref{eq:1d_cmain_fourier} is a genuine transport equation, and only the presence of a forcing makes it inhomogeneous. Adopting such a setup immediately imposes the complex nature of the velocity field in physical space, as it can be seen when formally taking the inverse Fourier transform of \eqref{eq:1d_cmain_fourier}, and obtaining the following evolution in physical space:
\begin{equation}\label{eq:1d_cmain}
\begin{cases}
\pa_t u(t,x) - 2\pi icx\, u(t,x) = f(t,x), \quad (t,x)\in (0,\infty) \times \R,  \\
u(t,x)|_{t=0}=0.
\end{cases}
\end{equation}
Note that the operator in \eqref{eq:1d_cmain} corresponds to multiplication by the space variable $2\pi icx$. In \cite{ACM}, it is shown that when the forcing $f$ is a white noise in time and statistically homogeneous in space, then the solution to \eqref{eq:1d_cmain}, $u:(0,\infty)_t\times\R_x \longrightarrow \C$, converges\footnote{Up to lower order terms, see \Cref{thm:1d_cmain} for a full asymptotic expansion.
} 
to a \emph{complex} white noise in space as $t\rightarrow\infty$. In other words, the evolution that has been proposed, expressed in Fourier space as a transport equation \eqref{eq:1d_cmain_fourier} and in physical space as an equation \eqref{eq:1d_cmain} involving an operator of degree 0 (multiplication by $-2\pi icx$), is able to transfer energy through scales. Moreover, the solution is statistically homogeneous at any time and it develops the regularity of a white noise as time goes on (technically it's a sudden drop in regularity at $t=\infty$). To complete the program suggested by the phenomenology of turbulence, the additional linear action of a fractional operator allows, in a similar setup, to generate a solution with asymptotic H\"{o}lder-type regularity of parameter $H\in(0,1)$ instead of the one of the white noise, as explained in \cite{ACM}.%{\color{red}{This model suffers from three elements that distance it from the phenomenology it aims to represent. Firstly, mode zero is not null. Secondly, the cascade go from infinite negative wavenumber to infinite positive wavenumber while we want a cascade process from small $|k|$ to big $|k|$. Thirdly, we want to build a real Gaussian field  in higher dimension.}}

If some energy is introduced by the forcing at a negative wavelength $k<0$, the transport equation \eqref{eq:1d_cmain_fourier} will move it to smaller negative wavenumbers, going through $k=0$, and then to infinitely large positive wavenumbers $k\to\infty$. In order to avoid this pathological behavior, one could replace $\pa_k$ by $\pa_{|k|}=\text{sign}(k)\pa_k$ which leads to a transport in the direction of $|k|$ instead of $k$. Note however that $\text{sign}(k)\pa_k$ is not properly defined at $k=0$, and so one needs to be careful in order to propose a well-posed mathematical problem. With this in mind, the heart of this article will be the theoretical and numerical study of the following formal evolution:
\begin{equation}\label{eq:1d_rmain_fourier}
\begin{cases}
\pa_t \widehat{u} (t,k) + c\,\pa_{|k|} \widehat{u}(t,k)= \widehat{f}(t,k), \quad (t,k)\in (0,\infty) \times \R,  \\
\widehat{u}(t,k)|_{t=0}=0, \\
\widehat{u}(t,k)|_{|k|=0}=0.
\end{cases}
\end{equation}
Note that it is necessary to introduce a transmission condition between negative and positive $k$. In~\eqref{eq:1d_rmain_fourier}, we have decided to add the boundary condition $\widehat{u}(t,k)|_{|k|=0}=0$ to decouple negative from positive wavenumbers, so that no energy crosses $k=0$. In particular, this means that the integral over space of $u$ is zero for all times. This new dynamics proposed in \eqref{eq:1d_rmain_fourier} can be written in physical space after formally applying the inverse Fourier transform:
\begin{equation}\label{eq:1d_rmain}
\begin{cases}
\pa_t u + 2\pi c\, x \mathcal{H} (u) = f, \quad (t,x)\in (0,\infty) \times \R,  \\
u|_{t=0}=0, \\
\displaystyle \int_{\R_x} u \, dx=0,
\end{cases}
\end{equation}
where $\mathcal{H}$ denotes the Hilbert transform defined in the usual way:
\begin{equation}
\mathcal{H}f(x):= \frac{1}{\pi}\mbox{p.v.} \int_{\R_y} \frac{f(y)}{x-y}\, dy =-\frac{1}{\pi} \lim_{\epsilon \rightarrow 0^{+}} \int_{\epsilon}^{\infty} \frac{f(x+y)-f(x-y)}{y} \, dy.
\end{equation}
Notice that we have used the fact the integral of $u(t,x)$ over space vanishes to get the expression of~\eqref{eq:1d_rmain}. Notice also that, despite the fact that the spectral evolutions \eqref{eq:1d_cmain_fourier} and \eqref{eq:1d_rmain_fourier} (whose equivalent expressions in physical space are provided respectively in  \eqref{eq:1d_cmain} and \eqref{eq:1d_rmain}), look very similar, the solution to the new dynamics  \eqref{eq:1d_rmain} is now real-valued. Equivalently, the dynamics in Fourier space \eqref{eq:1d_rmain_fourier} conserves the Hermitian symmetry of an appropriate initial condition, here assumed to be zero.  Moreover, the solution $u:(0,\infty)_t \times \R_x \rightarrow \R$  of \eqref{eq:1d_rmain} can be shown to asymptotically converge to a \emph{real} white noise in space. As we will explain in the sequel, the additional linear action of a fractional operator will allow the generation, from smooth forcing, of a real fractional Gaussian field. 

%Finally, it is tempting to generalize \eqref{eq:1d_rmain_fourier}, where enters a derivatives with respect to the absolute value $|k|$ of the wavenumbers $k$, to any dimensions $d\ge 1$, while replacing the absolute value by the norm  $|k|$  of the vector $k$, and introducing the operator $\partial_{|k|}\equiv \frac{k}{|k|}\cdot \nabla_k$. As we will develop in the present article, this eventually generates a statistically homogeneous and isotropic solution, that will asymptotically converge to a \emph{real} $d$-dimensional white noise in space {\red{rough field but rougher than a white noise for $d>1$}}. An equivalent expression of this dynamics in physical space based on generalizations of the Hilbert transform to $d$-dimensional scalar fields is not known. We will nonetheless propose in the course of the article the extension of the real dynamics of \eqref{eq:1d_rmain} to $d$-dimensions using pseudo-differential operators, that could be seen also as generalizations of Hilbert transform at higher dimensions (see \cite{hahn1992multidimensional}, and recent applications to image analysis \cite{schmitt20142d}) {\red{???}}.

Finally, it is tempting to generalize \eqref{eq:1d_rmain_fourier} to higher dimensions by replacing $\partial_{|k|}$ by $\frac{k}{|k|}\cdot \nabla_k$. As we will develop in \Cref{sec-rigorous}, this eventually generates a statistically homogeneous and isotropic solution that will converge to a \emph{real} $d$-dimensional Gaussian random measure which is rougher than a white noise (in space) whenever $d>1$. As we will explain in the sequel, the additional linear action of a fractional operator will allow us to generate a \emph{real} $d$-dimensional Gaussian random measure with the desired H\"older regularity.  Interestingly, it is not obvious to generalize \eqref{eq:1d_cmain_fourier} to space dimension $d\ge 1$ which would generate a similar statistically homogeneous and isotropic solution in physical space. We provide at the end of the section some additional discussions on this matter.

%\blue{The equivalent expression of this dynamics in physical space based on generalizations of the Hilbert transform to $d$-dimensional scalar fields is not known. We will nonetheless propose in the course of the article the extension of the real dynamics of \eqref{eq:1d_rmain} to $d$-dimensions using pseudo-differential operators, that could be seen also as generalizations of Hilbert transform at higher dimensions (see \cite{hahn1992multidimensional}, and recent applications to image analysis \cite{schmitt20142d}).}
%
%\red{R: I'm not sure I understand what the paragraph in blue means. There are many generalizations of the Hilbert transform to $d$-dimensions. One example is the Riesz transform. In fact I think our operator could be written in terms of Riesz transforms, so an equivalent expression to \eqref{eq:1d_rmain} in physical space is ``known''. It is just not very helpful, which is why we don't bother writing it.}

However, for the time being we focus on developing a good understanding in the one-dimensional setting. In the case of \eqref{eq:1d_rmain}, we have the following result:

\begin{thm}\label{thm:1d_rmain}
Let the forcing $f$ in \eqref{eq:1d_rmain} be 
\begin{equation}\label{eq:rforcing}
f(t,x)= \int_{\R_y} \varphi (x-y)\, dW(t,y),
\end{equation}
where $dW$ is a space-time Gaussian real white noise, and $\varphi\in\Sc (\R)$ is a non-negative, non-identically null, even function with null average. Then:
\begin{enumerate}[(i)]
\item Equation \eqref{eq:1d_rmain} admits a global (in time) solution $u(t,x)$, which is a Gaussian process with a.s.\ smooth paths in $x$, and with $\alpha$-H\"older continuous paths in $t$ for any $0<\alpha<1/2$. 
\item As $t\rightarrow\infty$, the solution $u(t)$ converges in $\Sc' (\R)$ to a random Gaussian measure $u_{\infty}$ acting on $\Sc (\R)$ with zero-mean, i.e. for every $g\in\Sc(\R)$, $\langle u(t),g\rangle$ converges in law to a complex Gaussian random variable $\langle u_{\infty},g\rangle$ with $\E [ \langle u_{\infty} , g \rangle] = 0$.
\item We have the following asymptotic behavior:
\begin{equation}\label{eq:corr_infty_1d_r}
\begin{split}
\E [ \langle u_{\infty}, g_1 \rangle {\langle u_{\infty}, g_2 \rangle}] =  & \ \lim_{t\rightarrow\infty} \E [ \langle u(t), g_1 \rangle {\langle u(t), g_2 \rangle}] \\
=& \ 
C
 \int_{\R_x} g_1(x) \,g_2(x)\, dx  - \int_{\R_x\times\R_y}  \mathcal{I} (x-y)\, g_1(x) \, g_2(y)\, dx\, dy.
\end{split}
\end{equation}
for any $g_1,g_2\in \Sc (\R)$. Here $C>0$ is a constant and $\mathcal{I}$ is an explicit continuous, even function that depends on $\varphi$ in \eqref{eq:rforcing}.
\end{enumerate}
\end{thm}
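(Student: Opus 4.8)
The plan is to integrate the Fourier-side equation~\eqref{eq:1d_rmain_fourier} explicitly along characteristics and then extract the three assertions from the resulting stochastic representation. On each half-line $\{\pm k>0\}$ the operator $c\,\pa_{|k|}$ is a transport at speed $c$ towards larger $|k|$, and the boundary condition $\widehat u(t,k)|_{|k|=0}=0$ decouples the two half-lines. The backward characteristic through $(t,k)$ either reaches $\{s=0\}$, where the datum vanishes, or is absorbed at $\{|k|=0\}$ at the time $t-|k|/c$, where $\widehat u$ again vanishes; recalling that $f=\varphi\ast dW$ has $\widehat f(t,k)=\widehat\varphi(k)\,\widehat{dW}(t,k)$, one gets, for $k>0$,
\[
\widehat u(t,k)=\int_{\max(0,\,t-k/c)}^{t}\widehat f\big(s,\,k-c(t-s)\big)\,ds ,
\]
with the mirror formula for $k<0$, the two being exchanged by complex conjugation, so that Hermitian symmetry is preserved and $u(t,\cdot)$ is real. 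Equivalently, $u$ is the mild solution $u(t)=\int_0^t e^{-(t-s)\mathcal{A}}f(s)\,ds$, where $e^{-\tau\mathcal{A}}$ is the solution operator at time $\tau$ of the homogeneous version of~\eqref{eq:1d_rmain_fourier} --- a shift of $|k|$ outward with absorption at the origin, equivalently generated by $-2\pi c\,x\mathcal{H}$ in physical space. The first step is to make this precise: interpret the right-hand side as a Wiener integral against $dW$ of the deterministic kernel obtained by composing $e^{-(t-s)\mathcal{A}}$ with convolution by $\varphi$, check it is a.s.\ well defined, and verify that it is the unique weak solution of~\eqref{eq:1d_rmain} --- uniqueness being immediate from the characteristics, since the homogeneous problem admits only the zero solution. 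Here the hypothesis $\widehat\varphi(0)=0$ (i.e.\ $\varphi$ has null average) is exactly what makes the forcing compatible with the constraint $\int_{\R_x}u\,dx=0$, and it also excludes any ill-defined contribution of the noise at the absorbing point.

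Next I would compute the second-order structure. Pairing with test functions and applying the Wiener isometry together with the explicit adjoint $(e^{-\tau\mathcal{A}})^{*}$ (the shift of $|k|$ towards the origin), one obtains, for $g_1,g_2\in\Sc(\R)$,
\[
\E\big[\langle u(t),g_1\rangle\,\langle u(t),g_2\rangle\big]=\int_{\R_k}\widehat{g_1}(-k)\,\widehat{g_2}(k)\,S(t,|k|)\,dk ,\qquad S(t,r):=\frac1c\int_{\max(0,\,r-ct)}^{r}|\widehat\varphi(\ell)|^{2}\,d\ell .
\]
The diagonal (in $k$) structure here --- equivalently, statistical homogeneity of $u(t,\cdot)$ at every finite time --- comes from the fact that distinct, non-conjugate characteristics see independent increments of $dW$. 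As $t\to\infty$, $S(t,r)$ increases to $S(\infty,r)=\frac1c\int_0^{r}|\widehat\varphi|^{2}$, which is bounded by $C:=\frac1{2c}\norm{\varphi}_{L^2}^{2}>0$; by dominated convergence the covariances converge to the limiting form, and writing $S(\infty,|k|)=C-\widehat{\mathcal I}(k)$ where $\widehat{\mathcal I}(k):=\frac1c\int_{|k|}^{\infty}|\widehat\varphi|^{2}$ lies in $L^1(\R)$ and is even, the function $\mathcal I:=\Fc^{-1}\widehat{\mathcal I}$ is continuous, even, and depends explicitly on $\varphi$; undoing the Fourier transforms yields exactly~\eqref{eq:corr_infty_1d_r}, the $C$-term contributing $C\int_{\R_x}g_1(x)g_2(x)\,dx$ and the remainder the convolution against $\mathcal I$. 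Assertion (ii) then follows because $\langle u(t),g\rangle$ is centered Gaussian with variance converging to the limiting one, and likewise for every finite linear combination, so all finite-dimensional distributions converge; since the limiting bilinear form is bounded by $C\norm{g_1}_{L^2}\norm{g_2}_{L^2}$, the Minlos--Bochner theorem produces the limiting random tempered distribution $u_\infty$ and identifies it as the weak limit of $u(t)$ in $\Sc'(\R)$.

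For (i), I would first observe that the isometry gives $\E|u(t,x)|^{2}=t\norm{\varphi}_{L^2}^{2}<\infty$, so that $u(t,\cdot)$ is a bona fide stationary Gaussian field with covariance function $\rho_t=\Fc^{-1}[S(t,|\cdot|)]$. Since $\widehat\varphi\in\Sc(\R)$, the spectral density $S(t,\cdot)$ decays faster than any polynomial, so $|k|^{m}S(t,|k|)\in L^1(\R)$ for every $m$ and $\rho_t\in C^{\infty}$; a stationary Gaussian field with a smooth covariance has, by the Kolmogorov--Chentsov criterion applied to each of its mean-square derivatives (and using Gaussianity to upgrade the second moment to all moments), a modification that is $C^m$ for every $m$, hence a.s.\ $C^\infty$ paths in $x$. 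For the $t$-regularity I would split $u(t+h)-u(t)$ into the new stochastic increment over $(t,t+h)$, of size $O(|h|^{1/2})$ in $L^2(\Omega)$ by the isometry, and the difference of semigroups over $(0,t)$, which is $O(|h|)$ in the relevant norm; this gives $\E|u(t+h,x)-u(t,x)|^{2}\lesssim|h|$ and, again by Kolmogorov--Chentsov, $\alpha$-Hölder paths in $t$ for every $\alpha<1/2$.

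I expect the main obstacle to be the rigorous plumbing rather than any single estimate: one must give precise meaning to the Fourier-space white noise and to the characteristic/mild formula, including the transmission condition at the absorbing point $k=0$ and the fact that $2\pi c\,x\mathcal{H}$ is unbounded and not translation-invariant (so $e^{-\tau\mathcal{A}}$ is not a Fourier multiplier, even though the solution turns out to be statistically homogeneous), and one must justify the stochastic Fubini and Fourier interchanges underpinning the covariance identity. A secondary subtlety is that $u(t,\cdot)\notin L^2(\R)$ --- indeed its variance grows linearly in $t$ --- so the smooth-paths claim cannot be obtained from Sobolev embedding and has to be read off from the smoothness of the stationary covariance $\rho_t$ instead.
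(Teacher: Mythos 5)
Your proposal is correct, and its computational core coincides with the paper's: an explicit solution along the outgoing characteristics of $\pa_{|k|}$ (equivalently a mild/Duhamel formula with the absorbing semigroup), the Wiener isometry to get a spectral density $S(t,|k|)=\frac1c\int_{\max(0,|k|-ct)}^{|k|}|\widehat\varphi|^2$, monotone convergence of $S(t,\cdot)$ to $C-\widehat{\mathcal I}$, and Kolmogorov--Chentsov for path regularity; this is exactly the function $F(t,|k|)$ of Remark~\ref{rk-F} specialized to $d=1$, $H=-1/2$. The packaging differs in two ways worth noting. First, the paper does not prove Theorem~\ref{thm:1d_rmain} directly: it derives it as the $d=1$, $H=-d/2$ instance of Theorems~\ref{thm:main_H_pos}--\ref{thm:main_H_neg}, whose rigorous setup (the space $X$, the operator $A$ and its adjoint, the kernel $G(t,x,y)$) insists on a spectral gap $\kappa>0$ with $\widehat\varphi\equiv 0$ on $|k|<\kappa$, whereas you work at $\kappa=0$ assuming only $\widehat\varphi(0)=0$, which is what the statement literally hypothesizes. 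In this particular case your route is legitimate --- for $d=1$ and $H=-1/2$ the Jacobian weights $\left(\frac{|k|-ct}{|k|}\right)^{H+d-\frac12}$ degenerate to $1$, so none of the singularities at $k=0$ that force the paper to take $\kappa>0$ actually appear --- but you should state explicitly that this is why the origin is harmless here, since the paper itself flags the $\kappa=0$ problem as open in general (Remarks~\ref{rk-epsilon=0-1} and~4.6). Second, you construct $u_\infty$ abstractly via convergence of finite-dimensional distributions plus Minlos--Bochner, and get smoothness in $x$ from the rapid decay of the stationary spectral density, while the paper builds $u_\infty$ concretely as the Wiener integral $\int_0^\infty\int G(s,\cdot,y)\,dW(s,y)$ and differentiates the Schwartz kernel $G$ under the integral; both are sound, the paper's version additionally yielding quantitative rates of convergence and extending verbatim to all $d$ and $H$.
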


A more detailed version of this result is presented in \Cref{thm:main_H_neg}, see also \Cref{sec-whitenoise}. Note that the first term on the right-hand side of \eqref{eq:corr_infty_1d_r} corresponds to a delta function, while the second term given by $\mathcal{I}$ is a smooth lower order term.

\begin{rk}\label{rk:rforcing-correlated}
The forcing introduced in \eqref{eq:rforcing} is indeed a Gaussian white noise in time and statistically homogeneous in space, i.e.
\begin{equation}\label{eq:rforcing-correlated}
\E [f(s,x) f(t,y)] = C_f ( x-y)\, \delta_{s-t} \, ,
\end{equation}
where the spatial correlation function $C_f = \varphi\ast\varphi$ is a convolution. Given that $\varphi\in\Sc (\R)$, $C_f\in\Sc (\R)$. The constant $C$ in \eqref{eq:corr_infty_1d_r} is precisely
\[
C=\frac{C_f(0)}{2c} =\frac{1}{2c} \, \int_{\R_x} |\varphi (x)|^2 \, dx >0 .
\]
\end{rk}
%\gbadd{(GBA: Added $c$ above. Check this.)}

\begin{rk}\label{rk:solution-loss-reg}
 The solution to the stochastic PDE \eqref{eq:1d_rmain} is an explicit Gaussian It\^{o} process. A precise formula will be given in \Cref{sec-rigorous}. Even if this solution is continuous in time and space, one can lose regularity at $t=+\infty$, which is why one needs to consider  $u_{\infty}$ on the left-hand side of \eqref{eq:corr_infty_1d_r} as a distribution.
\end{rk}

The proof of this theorem is posponed to the next section where a more general case, i.e. multidimesional white noise, will be tackled. Before we develop the techniques needed to prove \Cref{thm:1d_rmain}, it is important to understand the asymptotic behavior of solutions to \eqref{eq:1d_cmain} in the complex setting, which is less technical, but informative.

In this setting, we have the following result:

\begin{thm}\label{thm:1d_cmain}
Let the forcing $f$ in \eqref{eq:1d_cmain} be 
\begin{equation}\label{eq:cforcing}
f(t,x)= \int_{\R_y} \varphi (x-y)\, dW(t,y),
\end{equation}
where $dW$ is a space-time Gaussian complex white noise, and $\varphi\in\Sc (\R)$ is a complex, non-identically null, even function. Then:
\begin{enumerate}[(i)]
\item Equation \eqref{eq:1d_cmain} admits a global (in time) solution $u(t,x)$, which is a Gaussian process with a.s.\ continuous paths in time and space.
\item As $t\rightarrow\infty$, the solution $u(t)$ converges (in $\Sc'(\mathbb{R}^d)$) to a random Gaussian measure $u_{\infty}$ acting on $\Sc(\mathbb{R}^d)$, i.e for every $g\in\Sc(\R)$, $\langle u(t),g\rangle$ converges in law to $\langle u_{\infty},g\rangle$ where $u_{\infty}$ is a random Gaussian measure with $\E [ \langle u_{\infty} , g \rangle] = 0$.
\item We have the following asymptotic behavior (in the sense of distributions).\\
 For any $g_1,g_2\in \mathcal{S}(\R)$,
\begin{equation}\label{eq:casymp}
\begin{split}
 \E [ \langle u_{\infty}, g_1 \rangle \overline{\langle u_{\infty}, g_2 \rangle}]
&=\lim_{t\rightarrow\infty} \E [ \langle u(t), g_1 \rangle \overline{\langle u(t), g_2 \rangle}] 
\\
&= \frac{1}{2 c} \, C_f(0)\, \int_{\mathbb{R}_z} g_1(z) g_2(z) dz\\
& +  \frac{1}{2 \pi i c}  \,\mbox{p.v.}\ \int_{\mathbb{R}_z} \frac{C_f (z)}{z}  \left( \int_{\mathbb{R}_y} g_1(z + y) g_2(y) dy \right) dz\ .
\end{split}
\end{equation}
The function $C_f=\varphi\ast \overline{\varphi}$ is the spatial correlation function given by
\begin{equation}\label{eq:cforcing-correlated}
\E [f(s,x) \overline{f(t,y)}] = \delta_{s-t} \, C_f ( x-y),
\end{equation}
and $\mbox{p.v.}\ \frac{C_f(z)}{z}$ is the principal value of the distribution $C_f(z)/z$. That is, for any test function $g\in \mathcal{S}(\R)$,
\begin{equation}\label{eq:def_vp}
\left\langle \mbox{p.v.} \frac{C_f(z)}{z}, g \right\rangle  := \mbox{p.v.} \ \int_{\R} \frac{C_f(z) g(z)}{z}\, dz = \int_0^{\infty} C_f(z)\,\frac{g(z)-g(-z)}{z}\, dz.
\end{equation}
\end{enumerate}
\end{thm}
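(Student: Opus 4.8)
The plan is to use the fact that, for each fixed $x$, \eqref{eq:1d_cmain} is a scalar linear ODE in $t$ with multiplicative coefficient $2\pi i c x$, so that Duhamel's formula gives the explicit representation
\begin{equation*}
u(t,x)=\int_0^t e^{2\pi i c x(t-s)}\,f(s,x)\,ds=\int_0^t\!\!\int_{\R_y} e^{2\pi i c x(t-s)}\,\varphi(x-y)\,dW(s,y),
\end{equation*}
the last integral being the Wiener integral of the deterministic kernel $K_{t,x}(s,y)=\chi_{[0,t]}(s)\,e^{2\pi i c x(t-s)}\varphi(x-y)\in L^2((0,\infty)\times\R_y)$, with $\|K_{t,x}\|_{L^2}^2=t\,\|\varphi\|_{L^2}^2=t\,C_f(0)$. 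Part (i) then follows by soft arguments: every finite linear combination of the $u(t_j,x_j)$ is again a Wiener integral of a deterministic kernel, hence a complex Gaussian variable, so $u$ is a Gaussian field; and to produce a modification with a.s.\ continuous paths in $(t,x)$ it suffices, by Kolmogorov's continuity criterion and the equivalence of Gaussian moments, to bound the $L^2$-increment $\E|u(t,x)-u(t',x')|^2=\|K_{t,x}-K_{t',x'}\|_{L^2}^2$; splitting the $s$-integral into $[0,t']$ and $[t',t]$ gives $\E|u(t,x)-u(t',x)|^2\lesssim|t-t'|$ (locally in $t,x$), while the smoothness of $\varphi$ and the uniform bound $|t-s|\le t$ give $\E|u(t,x)-u(t,x')|^2\lesssim|x-x'|^2$.

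For part (ii), fix $g\in\Sc(\R)$. A stochastic Fubini argument rewrites $\langle u(t),g\rangle=\int_0^t\!\int_{\R_y} A_g(s,y;t)\,dW(s,y)$, where $A_g(s,y;t)=\int_{\R_x} e^{2\pi i c x(t-s)}\varphi(x-y)g(x)\,dx$ is a Schwartz function of $y$ with $L^2(dy)$-norm bounded uniformly in $s\in[0,t]$; hence $\langle u(t),g\rangle$ is a centered complex Gaussian, whose law is determined by $\E|\langle u(t),g\rangle|^2$ and the pseudo-covariance $\E[\langle u(t),g\rangle^2]$. The latter vanishes because the complex white noise is circularly symmetric ($\E[dW(x)dW(y)]=0$), and the former converges as $t\to\infty$ by part (iii); running the same argument on $(\langle u(t),g_1\rangle,\dots,\langle u(t),g_n\rangle)$ shows all finite-dimensional distributions converge in law. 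Since the limiting bilinear functional $(g_1,g_2)\mapsto\lim_{t\to\infty}\E[\langle u(t),g_1\rangle\overline{\langle u(t),g_2\rangle}]$ is continuous on $\Sc(\R)\times\Sc(\R)$ (it is given by the right-hand side of \eqref{eq:casymp}), it is the covariance functional of a Gaussian random tempered distribution $u_\infty$, and $u(t)\to u_\infty$ in law in $\Sc'(\R)$.

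The heart of the proof is part (iii). Applying the Wiener isometry to the two integrals above and integrating out $y$ via $\int_{\R_y}\varphi(x-y)\overline{\varphi(x'-y)}\,dy=C_f(x-x')$ (which uses that $\varphi$ is even), one obtains
\begin{equation*}
\E[\langle u(t),g_1\rangle\overline{\langle u(t),g_2\rangle}]=\int_0^t\!\!\int_{\R_x}\!\!\int_{\R_{x'}} e^{2\pi i c(t-s)(x-x')}\,C_f(x-x')\,g_1(x)\,\overline{g_2(x')}\,dx\,dx'\,ds,
\end{equation*}
and the substitutions $\tau=t-s$, then $z=x-x'$, $y=x'$, reduce this to $\int_{\R_z}\big(\int_0^t e^{2\pi i c\tau z}\,d\tau\big)\,C_f(z)\,G(z)\,dz$ with $G(z):=\int_{\R_y} g_1(z+y)\overline{g_2(y)}\,dy\in\Sc(\R)$. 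Everything now hinges on the behaviour as $t\to\infty$ of the distribution $\int_0^t e^{2\pi i c\tau z}\,d\tau=\frac{e^{2\pi i c t z}-1}{2\pi i c z}$: testing against a Schwartz function $h$, writing $h=h(0)+(h-h(0))$, and using $\mbox{p.v.}\!\int_{\R}\frac{e^{iaz}}{z}\,dz=i\pi\,\mathrm{sgn}(a)$ for the constant part and the Riemann--Lebesgue lemma for the remainder (whose amplitude $\frac{h(z)-h(0)}{z}$ is integrable), one gets the distributional limit $\frac{1}{2c}\,\delta_0(z)+\frac{1}{2\pi i c}\,\mbox{p.v.}\,\frac{1}{z}$, which is precisely the distributional value of $\int_0^{\infty} e^{2\pi i c\tau z}\,d\tau$ (the Fourier transform of a Heaviside function). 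Pairing it against $z\mapsto C_f(z)G(z)$ yields \eqref{eq:casymp}. (Alternatively, on the Fourier side one has $\widehat u(t,k)=\int_0^t\widehat f(s,k-c(t-s))\,ds$ and $\E[\widehat f(s,k)\overline{\widehat f(s',k')}]=\widehat{C_f}(k)\,\delta_{s-s'}\,\delta(k-k')$, whence $\E[\widehat u_\infty(k)\overline{\widehat u_\infty(k')}]=\delta(k-k')\,\frac1c\int_{-\infty}^{k}\widehat{C_f}(\xi)\,d\xi$ --- a flat ``white-noise'' plateau $\frac1c C_f(0)$ beyond the spectral support of $\varphi$, plus a Schwartz correction below it, matching the injection/inertial-range picture --- and transforming back gives \eqref{eq:casymp} again.)

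The single genuinely delicate step is this passage to the limit $t\to\infty$, which must be carried out in the sense of distributions: one has to justify the splitting $h=h(0)+(h-h(0))$ and the interchange of integrations uniformly in $t$ (guaranteed by $C_f,G\in\Sc(\R)$), and to keep careful track of the principal value that the oscillatory time integral produces. All the other ingredients --- existence of the Wiener integrals, Gaussianity of the finite-dimensional distributions, the convolution identity for $C_f$, the stochastic Fubini, and the increment estimates feeding Kolmogorov's theorem --- are routine.
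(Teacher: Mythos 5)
Your route is essentially the paper's: explicit Duhamel/Wiener-integral representation, Gaussianity plus Kolmogorov continuity for (i), convergence of the covariances for (ii), and for (iii) the reduction of the finite-time correlation to the oscillatory kernel $\frac{e^{2\pi i c t z}-1}{2\pi i c z}$ paired with $h(z)=C_f(z)G(z)$, followed by the splitting $h=h(0)+(h-h(0))$ and the identification of the limit as $\frac{1}{2c}\,\delta_0$ plus a principal value. The one substantive divergence is in how you dispose of the remainder term, and that is where there is a genuine gap. You assert that the amplitude $\frac{h(z)-h(0)}{z}$ is integrable and invoke the Riemann--Lebesgue lemma. This is false: the function is bounded near $z=0$, but it behaves like $-h(0)/z$ as $|z|\to\infty$ and hence is \emph{not} in $L^1(\R)$ --- the paper flags exactly this (``Note that $F(z):=\frac{\psi(z)-\psi(0)}{z}$ is not integrable in $\R_z$''). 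So Riemann--Lebesgue does not apply, and moreover both pieces $\int e^{2\pi i ctz}F(z)\,dz$ and $\int F(z)\,dz$ converge only conditionally (symmetric limits), which must be tracked. The paper's repair is \Cref{CV-t^-n-lemma-technic}: integrate by parts in $z$, using that $\partial_z^m F\in L^1$ for $m\ge 1$ and that $F$ and its derivatives vanish at infinity (killing the boundary terms at $\pm R$); this both closes the gap and yields the quantitative rate $O((ct)^{-n})$.

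It is worth noting that your parenthetical ``Fourier-side'' remark contains the germ of a correct and arguably cleaner fix: writing $\frac{e^{2\pi i ctz}-1}{2\pi i cz}=\frac{1}{c}\int_0^{ct} e^{2\pi i\sigma z}\,d\sigma$ and applying Fubini (legitimate since $h\in\Sc$) turns the pairing with $h$ into $\frac{1}{c}\int_{-ct}^{0}\widehat h(\xi)\,d\xi$, which converges by dominated convergence to $\frac{1}{c}\langle \mathcal{F}[\chi_{\xi<0}],h\rangle$, i.e.\ precisely the half-delta plus principal-value limit (this is also the computation behind your ``plateau $\frac1c\int_{-\infty}^{k}\widehat{C_f}$'' identity). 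Had you run that as the main argument instead of the Riemann--Lebesgue step, the proof would be complete --- though without the explicit $O((ct)^{-n})$ rate that the paper's integration-by-parts lemma provides. The remaining ingredients (Wiener isometry, the convolution identity for $C_f$, stochastic Fubini, the increment estimates, and the identification of $\lim_t\E[\cdot]$ with $\E[\langle u_\infty,\cdot\rangle]$, which the paper handles via the time-reversed process $v(t)$ and polarization) are handled correctly.
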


As we mentioned in \Cref{rk:rforcing-correlated}, the forcing in \eqref{eq:cforcing} is a complex Gaussian white noise in time and statistically homogeneous in space. 
Moreover, the solution admits an explicit formula:
\begin{equation}\label{eq:solution-proecc-complex}
u(t,x)= \int_0^t  \int_{\R_y} e^{2 \pi i c x(t-s)}\, \varphi (x-y)\, dW(s,y).
\end{equation}
The same comments as in \Cref{rk:solution-loss-reg} apply in this case.

%Pas necessaire vu que Stratonovich = Ito quand la fonction Ã  integrer est pas aleatoire 
%
%\begin{rk}
%{\color{blue}{(check it)
%The same asymptotics as in \eqref{eq:corr_infty_1d_r} (resp. \eqref{eq:casymp}) hold as long as the forcing is an It\^o process satisfying \eqref{eq:rforcing-correlated} (resp. \eqref{eq:cforcing-correlated}), although the regularity properties of the solution mentioned above might cease to hold. In particular, one can replace $dW$ in \eqref{eq:cforcing} (resp. \eqref{eq:rforcing}) by any delta-correlated (in time and space) process such as Stratonovich white noise.
%}}
%\end{rk}

\begin{rk} The asymptotic expansion \eqref{eq:casymp} remains valid when testing against functions with a finite number of derivatives. Indeed, $u_{\infty}$ in \eqref{eq:casymp} can also be interpreted as a Gaussian random measure in $H^{-n}(\mathbb{R})$ for any integer $n \geq 2$. More precisely, we will show that for any test functions $g_1,g_2 \in H^{n}(\mathbb{R})$ one gets
\begin{equation}
\begin{split}
 \E [\langle u(t),g_1\rangle_{H^{-n},H^n} \overline{\langle u(t),g_2\rangle_{H^{-n},H^n}}] 
\underset{t \to \infty }{\sim} &\ \frac{C_f(0)}{2 c}\, \int_{\R} g_1(z) g_2 (z)\, dz \\
& +  \frac{1}{2 \pi i c} \, \mbox{p.v.}\ \int_{\mathbb{R}_z} \frac{C_f (z)}{z}  \left( \int_{\mathbb{R}_y} g_1(z + y) g_2(y) dy \right) dz \\
& + \mathrm{d}(n) \, \norm{g_1}_{H^n} \, \norm{g_2}_{H^n}\, \left( \frac{1}{ct} \right)^{n-1}.
\end{split}
\end{equation}
where $\mathrm{d}(n)$ depends only on $n$. This characterizes the \emph{rate of convergence}. 
\end{rk}

\begin{rk}\label{rk-regular correction}
One interpretation of \eqref{eq:corr_infty_1d_r} (resp. \eqref{eq:casymp}) is that the correlation function (resp. the real part of it) asymptotically behaves like a white noise.
However, this theorem gives a lower-order correction, in the sense that the regularity of the correction is higher than that of the white noise.
In \eqref{eq:corr_infty_1d_r}, such a regular correction is given by a Schwartz function, whereas in \eqref{eq:casymp}, the regular correction is a purely imaginary principal value which has no singularity at zero: by \eqref{eq:def_vp}, the principal value is ``controlled'' by $C_f'(0)$ near zero. It is important to note that in both cases the regular correction is fast-decaying.
\end{rk}

\begin{rk} 
Note that we recover the result in proposition 2.1 in \cite{ACM}, i.e.
\begin{equation}\label{eq:v_space}
\lim_{t\rightarrow\infty} \E [u(t,x)\overline{u(t,y)}] = \frac{1}{2c} \, C_f (0) \delta_{x-y}
\end{equation}
as long as one only tests against even functions with respect to the variable $x-y$, as is easily seen from the right-hand side of \eqref{eq:def_vp}.

One way to recover a result similar to that in \cite{ACM} that holds for all test functions is to define the function $v(t,x)=e^{- \pi i ct x} u(t,x)$. This function now satisfies
\[
\E [v(t,x)\overline{v(t,y)}] = t\, \sinc \left(ct(x-y)\right) \, C_f (x-y)
\]
where $\sinc(x):= \frac{ \sin(\pi x)}{ \pi x}$ denotes the normalized sinc function.
It immediately follows that
\begin{equation}\label{eq:v_space2}
\lim_{t\rightarrow\infty} \E [v(t,x)\overline{v(t,y)}] = \frac{C_f (0)}{c} \,  \delta_{x-y}.
\end{equation}
%\gbadd{(GBA: I got a different constant here too. Will check again.)}
However, it is unclear whether this transformation of $u$ is an interesting object from the physical viewpoint. 
Assuming one can take the Fourier transform, \eqref{eq:v_space2} can be rewritten in wavenumber space as
\begin{equation}
\lim_{t\rightarrow\infty} \E [\widehat{v}(t,k)\overline{\widehat{v}(t,k')}] = \lim_{t\rightarrow\infty} \E \left[\widehat{u}\left(t,k+ \pi c t \right)\overline{\widehat{u} \left(t,k'+\pi c t\right)}\right] = \delta_{k-k'} \, C_f(0).
\end{equation}
The transformation given by $v$ is therefore equivalent to computing the correlation between the~$k+\pi c t$ and $k'+\pi c t$ Fourier modes as $t\rightarrow\infty$. 
\end{rk}

\subsection{Proof of \Cref{thm:1d_cmain}}
First of all, note that equation \eqref{eq:1d_cmain} admits the explicit solution~\eqref{eq:solution-proecc-complex} thanks to the Duhamel formula. \\

{\bf{Step 1:}} The solution $u$ is a well defined Gaussian field whose limit at $t \to \infty$ is a Gaussian random measure.

Clearly, $u(t,x)$ is a well defined It\^o process with zero average and variance
\begin{equation}\label{notL^p}
\E [ |u(t,x)|^2 ] = \int_0^t  \int_{\R_y} \, |\varphi (x-y) |^2 \, dy ds = t \norm{\varphi}_{L^2}^2.
\end{equation}
For any test function $g$ (we will soon see that actually $g\in H^1(\mathbb{R})$ suffices), one gets
\[
\langle u(t),g \rangle =  \int_0^t    \int_{\R_y} \left( \int_{\R_x} e^{2 \pi i c x(t-s)}\, \varphi (x-y)\, g(x) \, dx \right)\, dW(s,y).
\]

For fixed $t$, this random variable has the same distribution as
%Since Brownian motion has independent, stationary increments we can rewrite the above equation as
 \begin{equation}\label{eq:v}
\langle v(t),g \rangle :=  \int_0^t    \int_{\R_y} \left( \int_{\R_x} e^{-2 \pi i c x s}\, \varphi (x-y)\, g(x) \, dx \right)\, d W(s,y).
\end{equation}
%where $d \widetilde{W}(s,y)$ is another Gaussian white noise. 
As $t \to \infty$, one has the mean-square convergence property
\begin{equation}\label{eq:u_infty}
\lim_{t \to \infty} \langle v(t),g \rangle = \langle v_{\infty} ,g \rangle=  \int_0^\infty   G(s,y) \, d W(s,y)
\end{equation}
where
$$
G(s,y) = \int_{\R_x} e^{-2 \pi i c x s}\, \varphi (x-y)\, g(x) \, dx.
$$
In order to justify the convergence result above, it is sufficient to assume that $G \in L^2( \R_s^+ \times \R_y)$, which we set out to prove next. Firstly, the Young convolution inequality immediately yields
\[
\norm{G}_{L^2([0,1]_s \times \R_y)} \leq \norm{\varphi}_{L^1} \norm{g}_{L^2}.
\]
Next, to handle the case $|s|\geq 1$, we assume that $g\in H^1(\R)$ and we integrate by parts:
$$
2 \pi i c s\, G(s,y) = - \int_{\R_x} e^{-2 \pi i c x s}\, \partial_x \big( \varphi (x-y)\, g(x) \big) \, dx.
$$
By the Young convolution inequality,
$$
\norm{2 \pi i c  s G(s,y)}_{L^2_y}  \leq  \norm{g'}_{L^2} \norm{ \varphi}_{L^1} + \norm{ g}_{L^2} \norm{ \varphi'}_{L^1} .
$$
Thus one easily finds that:
$$
\norm{ G(s,y)}_{L^2( \R_s^+ \times \R_y)} \lesssim \frac{1}{c} \norm{g}_{H^1} \norm{ \varphi}_{W^{1,1}},
$$
which justifies \eqref{eq:u_infty}. 

Since the mean and the variance of $\{ \langle v(t),g\rangle\}_{t>0}$ converge and since they coincide with those of $\{ \langle u(t),g\rangle \}_{t>0}$ for each fixed $t$, we deduce that $\langle u(t),g\rangle$ converges in law to a Gaussian random variable
\begin{equation}\label{eq:u_infty_law}
\langle u_{\infty} ,g \rangle \underset{\mathrm{law}}{=}  \int_0^\infty   G(s,y) \, d W(s,y).
\end{equation}
Moreover,  all the moments of $\{ \langle u(t),g\rangle \}_{t>0}$ converge to those of $\langle u_{\infty} ,g \rangle$.

{\bf Step 2:} 
%Exchanging expectation and limit as $t\rightarrow\infty$. 
%One could directly use the right-hand side of \eqref{eq:u_infty} in order to compute the correlations of the limiting Gaussian measure. However, this doesn't allow us to quantify the speed of convergence as~$t\rightarrow\infty$, so we take a slightly different approach.
We would like to show that 
\begin{equation}\label{eq:u_convergence}
  \E [ \langle u_{\infty}, g_1 \rangle \overline{\langle u_{\infty}, g_2 \rangle}]
  %= \E [ \,\lim_{t\rightarrow\infty} \,\langle u(t), g_1 \rangle \overline{\langle u(t), g_2 \rangle}]
  =\lim_{t\rightarrow\infty} \E [ \langle u(t), g_1 \rangle \overline{\langle u(t), g_2 \rangle}] .
\end{equation}
%which happens to be of independent interest. 
%We justify this exchange of expectation and limit using the Dominated Convergence theorem. 
By  \eqref{eq:u_infty} and \eqref{eq:u_infty_law},
\[
  \E [ \langle u_{\infty}, g_1 \rangle \overline{\langle u_{\infty}, g_2 \rangle}] =   \E [ \langle v_{\infty}, g_1 \rangle \overline{\langle v_{\infty}, g_2 \rangle}] 
\]
for $v$ defined in \eqref{eq:v}. 

Since $\E [ \langle v(t), g_1 \rangle \overline{\langle v(t), g_2 \rangle}]= \E [ \langle u(t), g_1 \rangle \overline{\langle u(t), g_2 \rangle}]$ for each fixed $t$, our objective \eqref{eq:u_convergence} follows from the identity 
\begin{equation}\label{eq:v_convergence}
\E [ \langle v_{\infty}, g_1 \rangle \overline{\langle v_{\infty}, g_2 \rangle}] = \lim_{t\rightarrow\infty} \E [ \langle v(t), g_1 \rangle \overline{\langle v(t), g_2 \rangle}].
\end{equation}

In order to justify \eqref{eq:v_convergence}, we may simply use the mean-square convergence of $\langle v(t),g\rangle$ to $\langle v_{\infty}, g\rangle$ for any $g\in\Sc (\R^d)$, together with the following ``polarization'' identity:
\[
\begin{split}
\langle v(t), g_1 \rangle \overline{\langle v(t), g_2 \rangle} =&\  \frac{1}{4} \,\left( |\langle v(t), g_1+g_2\rangle|^2 - |\langle v(t), g_1-g_2\rangle|^2 \right) \\
& + \frac{i}{4} \, \left( |\langle v(t), g_1+ig_2\rangle|^2 - |\langle v(t), g_1-ig_2\rangle|^2\right).
\end{split}
\]

%\[
%a\, \bar{b} = \frac{1}{4} \left( |a+b|^2 - |a-b|^2 + i \, |a+ib|^2 - i |a-ib|^2\right)
%\]
%with $a=\langle v(t),g_1\rangle$ and $b=\langle v(t),g_2\rangle$.

%We justify this exchange of expectation and limit using the Dominated Convergence theorem. More precisely, it suffices to show that there exists a random variable~$X$ with finite expectation such that 
%\[
%|\langle v(t), g_1 \rangle \overline{\langle v(t), g_2 \rangle}|\leq X  \quad \forall t\in [0,\infty).
%\]
%In fact one can choose 
%\[
%X := \sup_{t\geq 0} |\langle v(t), g_1 \rangle|^2 + \sup_{t\geq 0} |\langle v(t), g_2 \rangle|^2.
%\]
%In order to show that $\E X <\infty$, we use the Monotone Convergence theorem and Doob's submartingale inequality (see for instance Theorem 3.8 in \cite{Karatzas}):
%\[
%\E [\, \sup_{t\geq 0}\, |\langle v(t), g \rangle|^2] = \lim_{N\rightarrow \infty} \E [ \sup_{0\leq t\leq N} |\langle v(t), g \rangle|^2] \leq \lim_{N\rightarrow \infty}  4 \, \E [ |\langle v(N), g \rangle|^2] = 4 \norm{ G(s,y)}_{L^2( \R_s^+ \times \R_y)}^2,
%\]
%which is finite by Step 1. 

{\bf Step 3:} Calculation of the correlations as $t \to \infty$.

We start by computing the correlations for a finite time $t>0$.
\[
\E [ u(t,x_1) \overline{u(t,x_2)}] = \int_0^t e^{2 \pi i c (x_1-x_2)s} C_f (x_1-x_2)\, ds = \frac{e^{2 \pi i c t(x_1-x_2)}-1}{2 \pi i c(x_1-x_2)} \, C_f (x_1-x_2).
\]
This is a well-defined function for each finite $t$, but we must treat it as a distribution if we want to take the limit $t\to\infty$. To do so, we test it against some $g_1,g_2 \in \Sc(\R)$:
\begin{equation}\label{eq:def_psi}
\begin{split}
\E [ \langle u(t),g_1 \rangle  \overline{\langle u(t),g_2 \rangle } ] & =
 \int_{\mathbb{R}_{x_1} \times \mathbb{R}_{x_2}}  \E [ u(t,x_1) \overline{u(t,x_2)}] g_1(x_1) g_2(x_2) \, dx_1 dx_2 \\
& = \int_{\mathbb{R}_z}  \frac{e^{2 \pi i c tz}-1}{2 \pi i cz} \, \psi (z) \, dz 
\end{split}
\end{equation}
with $\psi(z)= C_f(z) \int_{\mathbb{R}_{y}} g_1(z+y)g_2(y)dy  $, $z=x_1-x_2$ and $y=x_2$.
Our goal is to study the last integral as $t \to \infty$. 

We start with a simple identity that gives us a way to integrate the function $(e^{ 2 \pi i c tz}-1)/(2 \pi i c z)$. Note that this function is not absolutely integrable in $\R$. However, its integral does converge conditionally, i.e. the final result might depend on how we integrate it. More precisely, we recall that for all $t >0$
\begin{equation}\label{eq:identity}
\int_{\R} \frac{e^{itz}-1}{iz}\,dz := \lim_{R\rightarrow\infty} \int_{-R}^{R} \frac{e^{itz}-1}{iz}\,dz = \pi. 
\end{equation}

As a result of \eqref{eq:identity}, we have that
\[
\lim_{R\rightarrow\infty} \int_{-R}^{R} \frac{e^{- 2 \pi i c tz}-1}{2 \pi i cz} \psi (z)  dz - \frac{\psi (0)}{2c} = \lim_{R\rightarrow\infty} \int_{-R}^{R} \frac{e^{- 2 \pi i c tz}-1}{2 \pi i cz} [\psi (z)  - \psi (0)] \, dz .
\]
In view of 
$$
\psi(0) = C_f(0) \int_{\mathbb{R}_y} g_1(y) g_2(y) dy ,
$$
it suffices to prove the following in order to obtain \eqref{eq:casymp}:
\begin{equation}\label{eq:main_goal}
\begin{split}
\lim_{t\rightarrow\infty}   \lim_{R\rightarrow\infty} \int_{-R}^{R} \frac{e^{- 2 \pi i c tz}-1}{2 \pi i cz} [\psi (z)  - \psi (0)] \, dz dy &=
 \frac{1}{2 \pi c} \int_0^\infty \frac{\psi(z)-\psi(-z)}{i z}\,dz 
 \\ &= \frac{1}{2 \pi i c} \, \mbox{p.v.} \int_{ \mathbb{R}_z} \frac{\psi(z)}{z} dz.
 \end{split}
\end{equation}
To prove \eqref{eq:main_goal}, we rewrite the left-hand side as follows: 
\begin{multline}\label{eq:towards_delta}
\lim_{R\rightarrow\infty} \int_{-R}^R \frac{e^{-2 \pi i ct z}-1}{2 \pi c} \frac{\psi(z)-\psi(0)}{iz}\,dz = \frac{1}{2 \pi i c} \Big( \lim_{R\rightarrow\infty}\int_{-R}^R e^{-2 \pi i ct z} \, \frac{\psi(z)-\psi(0)}{z}\,dz \\
- \lim_{R\rightarrow\infty}\int_{-R}^R \frac{\psi(z)-\psi(0)}{z}\,dz \Big)
\end{multline}
Note that the last term gives the desired limit after using the fact that
\[
 \int_{-R}^R \frac{\psi(z)-\psi(0)}{z}\,dz =  \int_0^R \frac{\psi(z)-\psi(-z)}{z}\,dz
\]
and taking $R\to\infty$. 

The final step is to show that the first term on the right-hand side of \eqref{eq:towards_delta} tends to zero as~$t\rightarrow\infty$. Note that 
\[  F(z):=\frac{\psi(z)-\psi(0)}{z}\]
 is not integrable in $\R_z$. However the following lemma shows that its derivatives have better properties. Its proof  is postponed to the end of the proof of \Cref{thm:1d_cmain}. Recall that $\psi$ was defined in terms of $g_1,g_2$ right after \eqref{eq:def_psi}.
 
\begin{lem}\label{CV-t^-n-lemma-technic}
Let $n \geq 1$. If $g_1\in H^{n+1}(\mathbb{R}) $ and $ g_2 \in L^2(\R)$ then $F \in W^{n,1}(\mathbb{R}_{z})$ and 
 \begin{align}
\lim_{|z| \to \infty} (\pa_z^{m} F) (z) & =0,  \text{ for any $0 \leq m\leq n$},
 \label{eq:boundary} \\
 \norm{\partial_z^m F}_{L^1(\mathbb{R}_z)} & \lesssim \norm{g_1}_{H^{m+1}} \, \norm{g_2}_{L^2}  \text{ for any $1 \leq m\leq n$}.  \label{eq:bilinear}
 \end{align}
\end{lem}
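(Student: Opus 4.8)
The plan is to work directly with the explicit formula $\psi(z) = C_f(z)\int_{\mathbb{R}_y} g_1(z+y)g_2(y)\,dy$ and the definition $F(z) = (\psi(z)-\psi(0))/z$. First I would record that $\psi$ itself is smooth (indeed $\psi \in \Sc(\mathbb{R})$ if $g_1,g_2\in\Sc$, and more generally has $n+1$ bounded derivatives with $L^2$-type decay when $g_1\in H^{n+1}$, $g_2\in L^2$, since differentiating under the integral moves derivatives onto $g_1$ via $\partial_z\int g_1(z+y)g_2(y)\,dy = \int g_1'(z+y)g_2(y)\,dy$, and each such term is controlled by $\norm{g_1^{(j)}}_{L^2}\norm{g_2}_{L^2}$ by Cauchy--Schwarz, while $C_f=\varphi\ast\varphi\in\Sc$ contributes only Schwartz factors). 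The key structural observation is that $F$ is \emph{not} singular at $0$: writing $F(z) = \int_0^1 \psi'(sz)\,ds$ exhibits $F$ as a smooth function near the origin, with $\partial_z^m F(0) = \psi^{(m+1)}(0)/(m+1)$.

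Next I would handle the behavior for large $|z|$. For $m\geq 1$, applying the Leibniz rule to $F(z) = (\psi(z)-\psi(0))\cdot z^{-1}$ away from the origin gives $\partial_z^m F(z) = \sum_{j=0}^m \binom{m}{j} \psi^{(j)}(z)\,\partial_z^{m-j}(z^{-1}) \;+\; (\text{a }\psi(0)\,\partial_z^m(z^{-1})\text{ term})$, i.e. a finite sum of terms of the form $c\,\psi^{(j)}(z)\,z^{-(m-j+1)}$ for $0\leq j\leq m$ plus $c\,\psi(0)\,z^{-(m+1)}$. On the region $|z|\geq 1$ each such term is bounded by $|z|^{-1}$ times either $|\psi^{(j)}(z)|$ (which has $L^2$ decay in $z$, hence is in $L^2(|z|\geq 1)$ and tends to $0$ along... more precisely is bounded and decaying) or $|\psi(0)|$; combining $|z|^{-1}\in L^2(|z|\geq1)$ with the bounded decaying factors gives both $\partial_z^m F(z)\to 0$ as $|z|\to\infty$ and $\partial_z^m F\in L^1(|z|\geq 1)$ via Cauchy--Schwarz (one factor in $L^2$, the other, namely $|z|^{-(m-j+1)}$ or a decaying derivative of $\psi$, also in $L^2$ on that region). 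On the complementary region $|z|\leq 1$, smoothness of $F$ (from the integral representation above, differentiated $m$ times, which is legitimate since $\psi\in C^{n+1}$) gives boundedness of $\partial_z^m F$, hence integrability there and continuity up to $0$; this establishes \eqref{eq:boundary} for $0\le m\le n$ and $F\in W^{n,1}(\mathbb{R}_z)$.

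Finally, for the quantitative bound \eqref{eq:bilinear} with $1\leq m\leq n$, I would be careful to split $\mathbb{R}_z = \{|z|\leq 1\}\cup\{|z|\geq 1\}$ and track constants. On $|z|\ge 1$, the estimate above gives $\norm{\partial_z^m F}_{L^1(|z|\ge 1)} \lesssim \sum_{j=0}^m \norm{\psi^{(j)}}_{L^2} + |\psi(0)| \lesssim \norm{g_1}_{H^m}\norm{g_2}_{L^2}$ (each $\psi^{(j)}$, $0\le j\le m$, and $\psi(0)$ being bounded by $\norm{g_1}_{H^j}\norm{g_2}_{L^2}\le\norm{g_1}_{H^m}\norm{g_2}_{L^2}$ after absorbing the Schwartz factors from $C_f$ and its derivatives). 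On $|z|\le 1$, from $\partial_z^m F(z) = \partial_z^m\!\int_0^1 \psi'(sz)\,ds = \int_0^1 s^m \psi^{(m+1)}(sz)\,ds$ one gets $\norm{\partial_z^m F}_{L^1(|z|\le 1)}\lesssim \norm{\psi^{(m+1)}}_{L^\infty}\lesssim \norm{g_1}_{H^{m+1}}\norm{g_2}_{L^2}$, using $\norm{\psi^{(m+1)}}_{L^\infty}\le\norm{C_f}_{\text{(suitable Schwartz norm)}}\,\norm{g_1}_{H^{m+1}}\norm{g_2}_{L^2}$. This is precisely where the loss of one derivative ($g_1\in H^{n+1}$ rather than $H^n$) enters, and it is the only place it is needed. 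Summing the two regions yields \eqref{eq:bilinear}. The main obstacle is purely bookkeeping: organizing the Leibniz expansion of $\partial_z^m(z^{-1}(\psi(z)-\psi(0)))$ so that the apparent singularities at $z=0$ manifestly cancel, which the integral representation $F(z)=\int_0^1\psi'(sz)\,ds$ resolves cleanly; once that is in hand everything reduces to Cauchy--Schwarz and the Schwartz decay of $C_f$.
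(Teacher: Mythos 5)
Your proposal is correct, and it rests on the same two pillars as the paper's proof: the integral representation $F(z)=\int_0^1\psi'(sz)\,ds$ (the paper's \eqref{tech-CV-1}), which removes the apparent singularity at $z=0$, and the Leibniz-plus-Cauchy--Schwarz bound $\norm{\psi^{(j)}}_{L^p}\lesssim\norm{C_f}_{W^{j,p}}\norm{g_1}_{H^j}\norm{g_2}_{L^2}$ (the paper's \eqref{proof-psi-y}). Where you diverge is in the far-field treatment. For the $L^1$ estimate \eqref{eq:bilinear} the paper never splits $\R_z$: it integrates the representation \eqref{tech-CV-1} directly, using the dilation identity $\norm{\psi^{(m+1)}(s\cdot)}_{L^1}=s^{-1}\norm{\psi^{(m+1)}}_{L^1}$ to get $\norm{\partial_z^m F}_{L^1}\le\tfrac{1}{m}\norm{\partial_z^{m+1}\psi}_{L^1}$ in one line, whereas you decompose into $\{|z|\le 1\}$ and $\{|z|\ge 1\}$ and run a Leibniz expansion of $z^{-1}(\psi(z)-\psi(0))$ on the outer region. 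Your version is longer but has the small advantage of making explicit that the extra derivative on $g_1$ is only consumed near $z=0$ (the outer region needs only $g_1\in H^m$); the paper's version is shorter and avoids the bookkeeping you flag as the main obstacle. For the decay \eqref{eq:boundary} the paper uses its second representation \eqref{tech-CV-2} together with Cauchy--Schwarz to get the quantitative rate $|\partial_z^m F(z)|\le\norm{\psi^{(m+1)}}_{L^2}|z|^{-1/2}$, while you deduce decay from boundedness of the $\psi^{(j)}$ and the negative powers of $z$ in the Leibniz expansion; both are valid, though your parenthetical ``$L^2$ hence decaying'' should be excised in favor of the correct justification you then give (pointwise decay of $\psi^{(j)}$ comes from $C_f\in\Sc$, not from square-integrability). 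No gap; the differences are in execution, not substance.
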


We integrate by parts the first term on the right-hand side of \eqref{eq:towards_delta}
 \[
 \begin{split}
\int_{-R}^R e^{-ictz} \, \frac{\psi(z)-\psi(0)}{z}\,dz & = -\int_{-R}^R \frac{1}{ict} \, \pa_z e^{-ictz} \, F(z) dz \\
& =   -\frac{1}{ict}\,e^{-ictz} \,F(z) \Big |_{z=-R}^{R} +  \frac{1}{ict}\, \int_{-R}^{R} e^{-ictz} \partial_z F(z) \, dz.
\end{split}
 \]
 Using \Cref{CV-t^-n-lemma-technic}, we are able to take the limit $R \to \infty$:
  \[
 \lim_{R \to \infty} \int_{-R}^R e^{-ictz} \, \frac{\psi(z)-\psi(0)}{z}\,dz =   \frac{1}{ict}\, \int_{\mathbb{R}_z} e^{-ictz} \partial_z F(z) \, dz.
 \]
 Next, we continue to integrate by parts using \eqref{eq:boundary} (which gets rid of the boundary terms)
 \[
  \lim_{R \to \infty} \int_{-R}^R e^{-ictz} \, \frac{\psi(z)-\psi(0)}{z}\,dz = \left( \frac{1}{ict} \right)^n \, \int_{\mathbb{R}_z} e^{-ictz} \partial_z^n F(z) \, dz,
  \]
% Then we integrate over $y $
% \[
% \lim_{R \to \infty}  \int_{\R_y} \int_{-R}^R e^{-ictz} \, \frac{\psi_y(z)-\psi_y(0)}{z}\,dz = \left( \frac{1}{ict} \right)^n \, \int_{\mathbb{R}_z \times \mathbb{R}_y} e^{-ictz} \partial_z^n F_y(z) \, dz
%  \]
thus we have
 $$
  \left|  \lim_{R \to \infty} \int_{-R}^R e^{-ictz} \, \frac{\psi(z)-\psi(0)}{z}\,dz \right| \leq \left( \frac{1}{ct} \right)^n \norm{\partial_z^n F}_{L^1(\mathbb{R}^2)}.
 $$
 We use \eqref{eq:bilinear} to finish the proof of \Cref{thm:1d_cmain}. Now, we need to prove the technical \Cref{CV-t^-n-lemma-technic}.

\begin{proof}[Proof of \Cref{CV-t^-n-lemma-technic}]
  First we can easily show that
  \begin{align}
  \partial_z^{n} F(z) &= \int_{0}^{1} s^n   \partial_z^{n+1} \psi(zs) ds \label{tech-CV-1}\\
   &=  \int_{0}^{z} \frac{s^n}{z^{n+1}}    \partial_z^{n+1} \psi(s) ds \label{tech-CV-2} .%\\
   %&= \frac{1}{z}  \left(  \partial_z^{n}  \psi_y(z) - n   \partial_z^{n-1} F_y (z) \right) \label{tech-CV-3}.
  \end{align}
  
  Then we show that for all $0 \leq m \leq n+1$ and all $p\in [1, \infty ]$
\begin{equation}\label{proof-psi-y}
\norm{ \partial_z^{m} \psi}_{L^p(\R)} \lesssim || C_f ||_{W^{m,p}(\R_z)} \norm{ g_1}_{H^m} \, \norm{ g_2}_{L^2}.
\end{equation}
Indeed, the Leibniz rule yields
 $$
 \partial_z^{m} \psi (z)= \sum_{j=0}^m \begin{pmatrix} m \\ j \end{pmatrix} \partial_z^j C_f(z) \int_{\R_y} \pa_z^{m-j} g_1(y+z) g_2(y) dy.
 $$
By Cauchy-Schwarz inequality,
$$
|\partial_z^{m} \psi(z) | \leq \left( \sum_{j=0}^m \begin{pmatrix} m \\ j \end{pmatrix} |\partial_z^j C_f(z)| \right) \norm{g_1}_{H^m} \norm{g_2}_{L^2}
$$
thus one gets \eqref{proof-psi-y}.   \\

  Then we show that for all $0 \leq m \leq n$,
  $$
\lim_{|z| \to \infty}   \partial_z^{m} F(z)=0.
$$
Indeed, by the Cauchy-Schwarz inequality, \eqref{tech-CV-2} and \eqref{proof-psi-y},
$$
|   \partial_z^{m} F(z) | \leq \norm{  \partial_z^{m+1} \psi}_{L^2(\R_z)} \left( \int_0^{|z|} \frac{s^{2m}}{|z|^{2(m+1)}} ds \right)^{\frac{1}{2}} = \frac{ \norm{  \partial_z^{m+1} \psi}_{L^2(\R_z)} }{ |z|^{\frac{1}{2}}} \underset{|z| \to \infty}{\longrightarrow} 0.
$$
Finally, for all $1 \leq m \leq n$, \eqref{tech-CV-1} implies
$$
\norm{\partial_z^{m} F (z) }_{L^1(\R^2)} \leq \int_{0}^{1} s^m   \norm{ \partial_z^{m+1} \psi (zs)}_{L^1(\R^2)} ds \leq \norm{ \partial_z^{m+1} \psi}_{L^1(\R^2)} \left(  \int_{0}^{1} s^{m-1} ds \right)
$$
and thus we conclude with \eqref{proof-psi-y}.
 \end{proof}  

Natural generalizations of the dynamics in \eqref{eq:1d_cmain_fourier}-\eqref{eq:1d_cmain} to higher dimensions could be obtained in two ways. Firstly, the product $c\,\pa_k \widehat{u}$ in the transport equation \eqref{eq:1d_cmain_fourier} could be generalized to a scalar product of a given unit vector $e\in \R^d$ with the gradient $\nabla_k \widehat{u}(t,k)$.
%For instance, considering $(t,x)\in (0,\infty) \times \R^d$, the transport equation provided in \eqref{eq:1d_cmain_fourier} could be modified while introducing a scalar product of a given unit vector $e\in \R^d$ with the gradient $\nabla_k \widehat{u}(t,k)$. 
Unfortunately, this puts too much weight on the constant vector $e$ and results in an obvious statistical anisotropy in physical space. For applications to turbulence, we must require that statistical laws are not only invariant by translation, but also under rotation (i.e. statistical isotropy), as commonly observed in laboratory and numerical experiments, and as expected from a physical point of view. 
Another option would be to replace the multiplication by $ix$ in the physical space formulation \eqref{eq:1d_cmain} by the multiplication by $i|x|$, where $|x|$ is the modulus of $x\in\R^d$. 
Once again, this would introduce anisotropy in the system, and more importantly, it would break statistical homogeneity even in dimension $d=1$. One may check these claims directly using the exact solution \eqref{eq:1d_cmain} to compute the covariance function at a given time $t$ and  any two positions $x,y$ (see \cite{ACM} for details). Indeed, this covariance function eventually depends on the difference $|x|-|y|$, and not on $x-y$ as would be desirable.
Beyond these issues, none of these propositions would ensure that a given real-valued initial condition $u(0,x)\in \R$ gives rise to a real-valued solution $u(t,x)\in\R$ at all future times. In other words, in order to construct a real-valued solution in physical space, one needs to propose a dynamical picture able to preserve the Hermitian symmetry of the Fourier transform $\widehat{u} (t,k)$, as does the dynamics in \eqref{eq:1d_rmain_fourier}.

\section{Higher dimensional real fractional gaussian fields: heuristic}\label{sec-heuristic}

In the previous section we gave a rigorous proof of the construction of a dynamical complex white noise. This proof was carried out in physical space. For the dynamical real white noise of \eqref{eq:1d_rmain_fourier}, on the other hand, it is more convenient to think of its wavenumber formulation.
However, even in the case of  a dynamical complex white noise, the solution $u(t,x)$ is not in $L^p(\mathbb{R}_x)$ for any $1 \leq p < \infty$ (see \eqref{notL^p}), hence its Fourier transform is not defined pointwise.
In conclusion, it is difficult to make sense of equation \eqref{eq:1d_cmain_fourier} in wavenumber space. However, we will not concern ourselves with such difficulties in this section, and we will work as if the solution to \eqref{eq:1d_cmain_fourier} were well-defined pointwise: we refer to Section~\ref{sec-rigorous} for a rigorous analysis. As we will later see, working in wavenumber space is very convenient to formally show that \eqref{eq:1d_rmain_fourier} builds a dynamical real white noise, as well as to extend this construction to $d$-dimensional fractional Gaussian fields.

\subsection{A transport equation in wavenumber space}

We propose the following initial value problem as a generalization of \eqref{eq:1d_rmain_fourier}
\begin{equation}\label{eq:maineq-div}
\begin{cases}
\pa_t \widehat{u}(t,k) +  \vdiv_k \left(  \frac{c k}{| k |} \, \widehat{u}(t,k) \right) + c \displaystyle \frac{H+ \frac{1}{2}}{|k|} \,  \widehat{u}(t,k) = \widehat{f}(t,k) & t>0, k \in \mathbb{R}^d,  |k| >\kappa >0,\\
\widehat{u}(t, k)=0 &  t>0, k \in \mathbb{R}^d, |k| \leq \kappa, \\
\widehat{u}(0,k)=0.
\end{cases}
\end{equation}
where $c,\kappa >0$ are fixed, $H$ is a real constant (which will be eventually connected with the H\"older exponent of the solution),
and $\vdiv_k$ stands for the usual divergence operator in wavenumber space.
As one can easily verify with a few vector calculus identities,
\begin{equation}\label{vector-identities}
 \vdiv_k \left(  \frac{ck}{| k |}\,  \widehat{u}(k) \right) + c \, \displaystyle \frac{H+ \frac{1}{2}}{|k|}\, \widehat{u}(k) = c \, \partial_{|k|} \, \widehat{u}(k) + c \, \frac{H+d-\frac{1}{2}}{|k|}\, \widehat{u}(k)
 \quad \text{with} \quad \partial_{|k|}:= \frac{k}{|k|} \cdot \nabla_k.
\end{equation}

When $d=1$ and $H = -1/2$, one recovers \eqref{eq:1d_rmain_fourier} from the above. The initial value problem \eqref{eq:maineq-div} can be regarded as a conservation law in wavenumber space with a source term $\widehat{f}(t,k)$, a damping term $(H+ 1/2) \,  \widehat{u}(t,k)/|k|$ and Dirichlet boundary conditions at the sphere $|k|= \kappa$. 

Equation \eqref{eq:maineq-div} has been thoroughly studied when $\widehat{f}(t,k)$ is regular enough (see \cite{Bressanone}, \cite{guy}), but in our case $\widehat{f}(t,k)$ is too rough for such classical results to be applicable. In particular, we will assume that the forcing term $ \widehat{f}$ satisfies
\begin{equation}\label{eq:cond_f_FT}
\E [\widehat{f}(t,k_1)\overline{\widehat{f}(s,k_2)}] =  \widehat{C_f}(k_1) \, \delta_{t-s}\, \delta_{k_1-k_2} ,
\end{equation}
where $\widehat{C_f}(k)$ is radial and null in the ball $|k| <\kappa$. Condition \eqref{eq:cond_f_FT} formally follows from considering the Fourier transform of a white noise in time satisfying
\begin{equation}\label{eq:corr_f_cont}
\E [ f(t,x) f(s,y)] = \delta_{t-s}\, C_f (x-y).
\end{equation}

\begin{rk}\label{rk-epsilon=0-1}
As part of equation \eqref{eq:maineq-div} one has the following technical condition
\begin{equation}\label{cond-eps}
\widehat{u}(t, k)=0 \qquad  t>0,\quad k \in \mathbb{R}^d,\quad |k| \leq \kappa,
\end{equation}
for $\kappa >0$. 
Indeed, if $|k|=0$, then $\vdiv_k \left( \frac{k}{| k |} \, \cdot \right)$ and $\frac{H+ \frac{1}{2}}{|k|}$ are not well-defined.
This condition \eqref{cond-eps} implies in particular that $ \mathcal{F}^{-1} \widehat{u}$ has null spatial average, whenever defined. It might be possible to make sense of the problem \eqref{eq:maineq-div} for $\kappa=0$ by adequately changing condition \eqref{cond-eps}, imposing $\widehat{f}(t,k=0)=0$ and an appropriate behaviour near $k=0$, but this is outside the scope of this paper.
\end{rk}

\subsection{Asymptotic behavior: power-law}

In this section, we show that the two-point correlation of the solution to \eqref{eq:maineq-div} displays a power-law behavior. In our first result (\Cref{heuristic-theorem}), we obtain this asymptotic behavior as $t\to\infty$ under fairly mild assumptions on the forcing $\widehat{f}$. Under stronger assumptions on $\widehat{f}$, we derive a second result (\Cref{prop-window-power-law}) showing this power law behavior in finite time.
Among other things, such power laws are important because their exponent determines the H\"older regularity of the solution in physical space (should it be possible to take the inverse Fourier transform). 
The main idea of the heuristic proof of our desired results is to perform a change of variables to rewrite \eqref{eq:maineq-div} as a 1D transport equation with respect to $|k|$ and parametrized by the \enquote{angular variable} $\frac{k}{|k|}$. Such an equation admits an explicit solution that we will exploit.
We will further discuss such consequences in \Cref{sec-rigorous}.

\begin{thm}[Heuristic version]\label{heuristic-theorem}
Let the forcing $\widehat{f}$ satisfy \eqref{eq:cond_f_FT} in such a way that $\widehat{C_f}(k)=\psi(|k|)$ is radial, non-negative, non identically null, with~$s^{2H+d}  \, \psi(s) \in L^1(\mathbb{R}_s^+)$. Furthermore, we assume that~$\psi$  is null when $|k| < \kappa$.
Then, \eqref{eq:maineq-div} admits a solution that satisfies the following asymptotic behavior
$$
\lim_{t \to \infty} \E [\widehat{u}(t,k)\overline{\widehat{u}(t,k')}] =  |k|^{-(2H+d)} \left( C(d,H) - \Psi_{d,H}(|k|)  \right) \delta_{k- k'}
$$
where
\begin{equation}\label{eq:BigPsi}
\Psi_{d,H} (|k|):= \frac{1}{c} \int_{|k|}^{\infty} s^{2H+d}  \, \psi(s) \, ds.
\end{equation}
is a positive non-increasing absolutely continuous function and
\begin{equation}\label{def:C}
C(d,H):=  \Psi_{d,H} (0) >0.
\end{equation}
\end{thm}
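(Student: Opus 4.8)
The plan is to reduce the PDE \eqref{eq:maineq-div} to a family of one-dimensional transport equations indexed by the angular variable, solve them explicitly via Duhamel, and then compute the long-time limit of the two-point correlation. Using the identity \eqref{vector-identities}, write $\rho := |k|$ and $\omega := k/|k| \in \mathbb{S}^{d-1}$, so that on the region $\rho > \kappa$ the equation becomes
\begin{equation*}
\pa_t \widehat{u} + c\, \pa_\rho \widehat{u} + c\, \frac{H + d - \frac12}{\rho}\, \widehat{u} = \widehat{f}, \qquad \widehat{u}(t,\kappa,\omega) = 0, \quad \widehat{u}(0,\rho,\omega)=0.
\end{equation*}
The multiplicative factor $\rho^{-1}$ term can be absorbed: I would look for an integrating factor of the form $\rho^{H+d-1/2}$, i.e. set $\widehat{v}(t,\rho,\omega) := \rho^{H+d-\frac12}\, \widehat{u}(t,\rho,\omega)$, which turns the equation into a pure transport equation $\pa_t \widehat{v} + c\, \pa_\rho \widehat{v} = \rho^{H+d-\frac12}\, \widehat{f}$ along characteristics $\rho - ct = \mathrm{const}$, with the boundary condition $\widehat{v} = 0$ at $\rho = \kappa$. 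The Duhamel formula then gives, for $\rho > \kappa$ and $t$ large enough that the characteristic through $(t,\rho)$ has crossed $\rho=\kappa$ (namely $t > (\rho - \kappa)/c$),
\begin{equation*}
\widehat{u}(t,\rho,\omega) = \rho^{-(H+d-\frac12)} \int_{0}^{(\rho-\kappa)/c} \big(\rho - c\tau\big)^{H+d-\frac12}\, \widehat{f}\big(t-\tau,\ (\rho - c\tau)\,\omega\big)\, d\tau,
\end{equation*}
where for smaller $t$ the integral runs only over $[0,t]$.

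Next I would compute the correlation. Using the white-noise-in-time, delta-correlated-in-$k$ structure \eqref{eq:cond_f_FT}, namely $\E[\widehat{f}(t,k_1)\overline{\widehat{f}(s,k_2)}] = \widehat{C_f}(k_1)\,\delta_{t-s}\,\delta_{k_1-k_2}$ with $\widehat{C_f}(k) = \psi(|k|)$, the double integral collapses: the $\delta_{t-s}$ forces $\tau = \tau'$, and the $\delta_{k_1-k_2}$ forces the two characteristics to coincide, producing a factor $\delta_{k-k'}$ (formally, after a change of variables turning the radial delta into $\delta_{k-k'}$ up to the appropriate angular/Jacobian factors, which I expect to be the delicate bookkeeping point). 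The upshot is
\begin{equation*}
\E[\widehat{u}(t,k)\overline{\widehat{u}(t,k')}] = \rho^{-(2H+2d-1)} \left( \int_0^{\min(t,\, (\rho-\kappa)/c)} (\rho - c\tau)^{2H+2d-1}\, \psi(\rho - c\tau)\, d\tau \right) \delta_{k-k'}.
\end{equation*}
A substitution $s = \rho - c\tau$ (so $ds = -c\,d\tau$) converts the $\tau$-integral into $\frac1c \int_{\max(\kappa,\, \rho - ct)}^{\rho} s^{2H+2d-1}\,\psi(s)\, ds$. Since $\psi$ vanishes on $[0,\kappa)$, the lower limit $\max(\kappa, \rho-ct)$ can be replaced by $\max(0,\rho-ct)$, and since $\rho^{-(2H+2d-1)} s^{2H+2d-1} = \rho^{-(2H+d)} \rho^{d-1}(s/\rho)^{\ldots}$ — here I must be careful with the exact exponent arithmetic; reconciling $s^{2H+2d-1}$ against the claimed $s^{2H+d}$ in \eqref{eq:BigPsi} is exactly the spot where the Jacobian from the delta-function change of variables (a factor $\rho^{d-1}$, i.e. $s^{d-1}$) must enter, so that after pulling out $\rho^{-(2H+d)}$ one is left precisely with $\frac1c\int s^{2H+d}\psi(s)\,ds$.

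Finally, the long-time limit. As $t \to \infty$ the lower limit $\max(0, \rho - ct) \to 0$, so the inner integral converges to $\frac1c \int_0^{\rho} s^{2H+d}\,\psi(s)\,ds$, which is finite by the hypothesis $s^{2H+d}\psi(s) \in L^1(\mathbb{R}^+_s)$. Writing $\frac1c\int_0^\rho = \frac1c\int_0^\infty - \frac1c\int_\rho^\infty$ and recognizing the second piece as $\Psi_{d,H}(\rho)$ from \eqref{eq:BigPsi} and the first as $\Psi_{d,H}(0) =: C(d,H)$ from \eqref{def:C}, we obtain
\begin{equation*}
\lim_{t\to\infty}\E[\widehat{u}(t,k)\overline{\widehat{u}(t,k')}] = |k|^{-(2H+d)}\big( C(d,H) - \Psi_{d,H}(|k|) \big)\,\delta_{k-k'},
\end{equation*}
as claimed. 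That $\Psi_{d,H}$ is non-negative, non-increasing and absolutely continuous is immediate from \eqref{eq:BigPsi}: it is the tail integral of the non-negative $L^1$ function $s \mapsto \frac1c s^{2H+d}\psi(s)$, hence absolutely continuous with derivative $-\frac1c|k|^{2H+d}\psi(|k|) \le 0$; and $C(d,H) = \Psi_{d,H}(0) > 0$ because $\psi$ is non-negative and not identically zero. I expect the main obstacle to be making the formal manipulation of the product of delta functions rigorous — tracking how $\delta_{k_1 - k_2}$ composed with the characteristic flow reduces to $\delta_{k-k'}$ together with the correct Jacobian factor $|k|^{d-1}$ — which at the heuristic level of this theorem we treat formally, deferring the fully rigorous treatment to \Cref{sec-rigorous}.
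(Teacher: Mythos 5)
Your proposal is correct and follows essentially the same route as the paper: the integrating factor $|k|^{H+d-\frac12}$ reducing \eqref{eq:maineq-div} to a pure radial transport equation, the Duhamel formula along characteristics, the collapse of the double integral via the delta-correlation \eqref{eq:cond_f_FT} with the Jacobian factor $\bigl(\frac{|k|-ct+cs}{|k|}\bigr)^{-(d-1)}$ from rewriting the composed delta as $\delta_{k-k'}$, and the substitution $s=\rho-c\tau$ identifying $\Psi_{d,H}$. Your intermediate displayed correlation formula omits that Jacobian factor, but you flag the exponent mismatch and resolve it correctly, landing on exactly the paper's expression $|k|^{-(2H+d)}\frac1c\int s^{2H+d}\psi(s)\,ds$ and the stated limit.
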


As we have already pointed out in \Cref{rk-regular correction}, the function $\Psi_{d,H}$ can be seen as lower-order correction in comparison with the Dirac distribution. Moreover, if we assume that $\psi$ is fast-decaying, then $\Psi_{d,H}$ will also be fast-decaying.
 
\begin{proof}[Heuristic proof]
In order to find a formal solution to \eqref{eq:maineq-div}, we let 
$$
\widehat{v}(t,k):=|k|^{H+d-\frac{1}{2}} \widehat{u}(t,k)
, \quad \widehat{g}(t,k):= |k|^{H+d-\frac{1}{2}}\widehat{f}(t,k).
$$ 
Next we rewrite \eqref{eq:maineq-div} in terms of $\widehat{v}$, namely
\begin{equation}\label{eq:maineq2}
\begin{cases}
\pa_t \widehat{v}(t,k) +c \, \pa_{|k|}  \widehat{v}(t,k)  = \widehat{g}(t,k) & t>0, |k| >\kappa,\\
\widehat{v}(t, k)=0 &  t>0, |k| \leq \kappa, \\
\widehat{v}(0,k)=0,
\end{cases}
\end{equation}
where we use the fact that $\partial_{|k|}= \frac{k}{|k|} \cdot \nabla_k$.

Equation \eqref{eq:maineq2} can be regarded as a 1D transport equation with respect to $|k|$ and parametrized by the \enquote{angular variable} $k/|k|$.  Thus it is easy to give an explicit solution:
\begin{equation}\label{eq:maineq2-sol}
 \widehat{v}(t,k) = \int_{\left(t- \frac{|k| - \kappa}{c}\right)_{+}}^{t}  \widehat{g}\left(s,(|k|-ct+cs)\, \frac{k}{|k|}\right)\, ds.
\end{equation}
Next we compute the correlations of $\widehat{v}(t)$ using those of $\widehat{f}$. We have
\begin{multline*}
\E [\widehat{v}(t,k_1)\overline{\widehat{v}(t,k_2)}] 
= \\
\int_{\left(t- \frac{|k_1| - \kappa}{c}\right)_{+}}^{t} \int_{\left(t- \frac{|k_2| - \kappa}{c}\right)_{+}}^{t}
\delta_{cs_1-cs_2}
\delta_{\frac{|k_1|-ct+cs_1}{|k_1|}\, k_1-\frac{|k_2|-ct+cs_2}{|k_2|}\, k_2}  \, \widehat{C_g}\left(\frac{|k_1|-ct+cs_1}{|k_1|}\, k_1\right) \, ds_1 ds_2.
\end{multline*}
We assume that we can write
\begin{equation*}
\delta_{cs_1-cs_2}
\delta_{\frac{|k_1|-ct+cs_1}{|k_1|}\, k_1-\frac{|k_2|-ct+cs_2}{|k_2|}\, k_2} =
\delta_{cs_1-cs_2}
\delta_{\frac{|k_1|-ct+cs_1}{|k_1|}\, k_1-\frac{|k_2|-ct+cs_1}{|k_2|}\, k_2}
\end{equation*}
even if this not mathematically rigorous. Moreover, by the change of variables from cartesian to polar coordinates
$$
\delta_{\frac{|k_1|-ct+cs}{|k_1|}\, k_1-\frac{|k_2|-ct+cs}{|k_2|}\, k_2} =  \left( \frac{|k_1|-ct+cs}{|k_1|} \right)^{-(d-1)} \delta_{k_1-k_2}.
$$
This is possible since the Jacobian $ \left( \frac{|k_1|-ct+cs}{|k_1|} \right)^{-(d-1)}$ has no singularities in the region of integration thanks to $\kappa >0$. As a result, one obtains
$$
\E [\widehat{v}(t,k)\overline{\widehat{v}(t,k')}] 
= \left(  \int_{\left(t- \frac{|k| - \kappa}{c}\right)_{+}}^{t} \left(\frac{|k|-ct+cs}{|k|}\right)^{-(d-1)} 
\,  \widehat{C_g}\left(\frac{|k|-ct+cs}{|k|}\, k\right) \, ds \right) \delta_{k- k'} .
$$
This immediately implies
$$
\E [\widehat{u}(t,k)\overline{\widehat{u}(t,k')}] 
=  |k|^{-(2H+d)} \left( \int_{\left(t- \frac{|k| - \kappa}{c}\right)_{+}}^{t} \left(|k|-ct+cs\right)^{2H+d} 
\,  \widehat{C_f}\left(\frac{|k|-ct+cs}{|k|}\, k\right) \, ds \right) \delta_{k- k'} .
$$
%Moreover, using the fact that $\widehat{C_f}$ is null in the ball $|k| < \kappa$, one can show that
%$$
%\E [\widehat{u}(t,k)\overline{\widehat{u}(t,k')}] 
%=  |k|^{-(2H+d)} \left(  \int_{\left(t- \frac{|k| - \kappa}{c}\right)_{+}}^{t} \left(|k|-ct+cs\right)^{2H+d} 
%\,  \widehat{C_f}\left(\frac{|k|-ct+cs}{|k|}\, k\right) \, ds \right) \delta_{k- k'} .
%$$
The change of variables $s \mapsto |k_1|-ct+cs$ yields
\[
\begin{split}
 \E [\widehat{u}(t,k)\overline{\widehat{u}(t,k')}] 
 = & \  |k|^{-(2H+d)}  \left( \chi_{|k|>ct+\kappa} \int_{|k|-ct}^{|k|}s^{2H+d} 
\,  \widehat{C_f}\left(\frac{s}{|k|}\, k\right) \, \frac{ds}{c}  \right) \delta_{k- k'} \\
&  +   |k|^{-(2H+d)}  \left( \chi_{|k| \leq ct+ \kappa} \int_{\kappa}^{|k|} s^{2H+d}
\,  \widehat{C_f}\left(\frac{s}{|k|}\, k\right) \, \frac{ds}{c} \right) \delta_{k- k'}
\end{split}
\]
where $\chi_A$ denotes the characteristic function of the set $A$.
Remember that $\psi (|k|):=\widehat{C_f}(k)$ since~$C_f$ is radial. Using \eqref{eq:BigPsi}, we may rewrite
\begin{equation}\label{cor-u-heuristic-chi}
\begin{split}
\E [\widehat{u}(t,k)\overline{\widehat{u}(t,k')}] =  |k|^{-(2H+d)}  \left( \chi_{|k|>ct+\kappa} \right. &\ \left[  \Psi_{d,H}(|k|-ct)- \Psi_{d,H} (|k|) \right]\\
& + \left. \chi_{|k| \leq ct+\kappa} \left[ \Psi_{d,H}(\kappa) - \Psi_{d,H} (|k|) \right] \right) \delta_{k- k_2}\ .
\end{split}
\end{equation}
We conclude by taking the limit $t \to \infty$ and noting that $\Psi_{d,H}(\kappa)= \Psi_{d,H}(0)$.
\end{proof}

%\begin{rk}\label{rk-limit-order}
%The limits $t \to \infty$ and $ | k |  \to \infty$ in \eqref{cor-u-heuristic} don't commute. Indeed, one gets
%$$
%\lim_{t \to \infty}  \E [\widehat{u}(t,k)\overline{\widehat{u}(t,k')}] 
%%\left( C(d,H) +  \Psi(k) - \chi_{|k|>t} \int_{0}^{|k|-t} s^{2H+d}  \, \psi(s) \, ds  \right) \delta_{k- k'} 
%\underset{|k | \to \infty}{\sim}  |k|^{-(2H+d)} \, C(d,H) \,  \delta_{k- k'},
%$$
%whereas for any fixed time $t>0$
%$$
%\lim_{|k| \to \infty}  \E [\widehat{u}(t,k)\overline{\widehat{u}(t,k')}] =0.
%$$
%\end{rk}

\begin{rk}\label{rk-F}
In the previous proof, we have shown that (see \eqref{cor-u-heuristic-chi}) for any fixed $t >0$
$$
 \E [\widehat{u}(t,k)\overline{\widehat{u}(t,k')}] =   |k|^{-(2H+d)} \, F(t,|k|) \, \delta_{k-k'} 
 $$
 where $F$ is a non-negative absolutely continuous function in $t$ and $|k|$ that admits the following expression
\begin{equation}\label{F-general-psi}
  F(t,|k|) = 
  \begin{cases}
  %0 \, & |k| \leq \kappa, \\
 \Psi_{d,H}(\kappa)-\Psi_{d,H} (|k|) , \, & |k| < ct+\kappa, \\
%\displaystyle \int_{|k|-ct}^{|k|} s^{2H+d} \, \psi(s) \, \frac{ds}{c} \, 
\Psi_{d,H}(|k|-ct)-\Psi_{d,H} (|k|), \,  & |k| > ct+\kappa.
  \end{cases}
\end{equation}
In particular, for any time $ t_1 \leq t_2$, $F(t_1,|k|)=F(t_2,|k|)$ for all $|k| \leq ct_1+ \kappa$.
As $t$ grows the window of $|k|$ where  $F(t,|k|)$ is stationary (and displays a power-law behavior) grows too (see figure \ref{fig:F(t,|k|)}). 
As a result, it is not necessary to wait until $t =\infty$ in order to observe the power-law. This is helpful when performing numerical simulations.
\end{rk}

\begin{rk}\label{rk-limit-order}
The limits $t \to \infty$ and $ | k |  \to \infty$ in \eqref{F-general-psi} don't commute. Indeed, one gets
$$
\lim_{t \to \infty}  \E [\widehat{u}(t,k)\overline{\widehat{u}(t,k')}] 
%\left( C(d,H) +  \Psi(k) - \chi_{|k|>t} \int_{0}^{|k|-t} s^{2H+d}  \, \psi(s) \, ds  \right) \delta_{k- k'} 
\underset{|k | \to \infty}{\sim}  |k|^{-(2H+d)} \, C(d,H) \,  \delta_{k- k'},
$$
whereas for any fixed time $t>0$
$$
\lim_{|k| \to \infty}  \E [\widehat{u}(t,k)\overline{\widehat{u}(t,k')}] =0.
$$
\end{rk}

In simulations, it is usual to look for the power-law for large $|k|$. But, as pointed out in the previous remark, if we don't wait long enough then we will see nothing. The next proposition shows that if the two-point correlation of the source is spectrally located in the ball $|k| \leq r_\ast$, then it is enough to wait until $t\geq r_\ast$ to observe the desired power-law.

\begin{prop}\label{prop-window-power-law}
Let $\psi$ supported in $[\kappa, r_\ast]$ for $r_\ast > \kappa >0$. For any $t > r_\ast$ fixed, the function $F$ given in \Cref{rk-F} satisfies
 $$
 F(t,|k|)=
 \begin{cases}
 0 & | k | \in (0, \kappa) \\
 \text{stationary in $t$ and non-decreasing in $|k|$} &  | k | \in (\kappa, r_\ast), \\
\text{constant}  &   | k | \in ( r_\ast,ct+\kappa), \\
 \text{non-decreasing in $t$ and non-increasing in $|k|$}  &  | k | \in (ct+\kappa, ct+r_\ast ), \\
0 &  | k | \in (ct+r_\ast, \infty).
 \end{cases}
$$
\end{prop}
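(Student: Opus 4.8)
The plan is to substitute the closed form \eqref{F-general-psi} of $F$ and use three elementary properties of $\Psi_{d,H}$ that follow from the support hypothesis on $\psi$. First I would record these: since $\psi \ge 0$ and $s^{2H+d}>0$ for $s>0$, the function $\Psi_{d,H}$ defined in \eqref{eq:BigPsi} is non-increasing on $\R^+$; since $\psi$ vanishes on $(0,\kappa)$ we have $\Psi_{d,H}(x)=\Psi_{d,H}(\kappa)=C(d,H)$ for every $x\in[0,\kappa]$, the last equality being the definition \eqref{def:C}; and since $\psi$ vanishes on $(r_\ast,\infty)$ we have $\Psi_{d,H}(x)=0$ for every $x\ge r_\ast$. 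These three facts, together with \eqref{F-general-psi}, are all that is needed.

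Next I would split into cases according to the two branches of \eqref{F-general-psi} and the position of $|k|$ relative to $\kappa$, $r_\ast$, $ct+\kappa$, $ct+r_\ast$ (for $t>r_\ast$ these break points are ordered as in the statement, $\kappa<r_\ast\le ct+\kappa<ct+r_\ast$; in the degenerate case $ct+\kappa<r_\ast$ the plateau window is empty and one merges the two adjacent windows, which alters none of the monotonicity statements). On $|k|<ct+\kappa$ one has $F(t,|k|)=\Psi_{d,H}(\kappa)-\Psi_{d,H}(|k|)$, and this vanishes on $(0,\kappa)$ because $\Psi_{d,H}(|k|)=\Psi_{d,H}(\kappa)$ there, is independent of $t$ and non-decreasing in $|k|$ on $(\kappa,r_\ast)$ by monotonicity of $\Psi_{d,H}$, and equals $C(d,H)-0=C(d,H)$ on $(r_\ast,ct+\kappa)$ because $\Psi_{d,H}(|k|)=0$ there. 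On $|k|>ct+\kappa$ one has $F(t,|k|)=\Psi_{d,H}(|k|-ct)-\Psi_{d,H}(|k|)$, and since $|k|>ct+\kappa>r_\ast$ the second term vanishes, leaving $F(t,|k|)=\Psi_{d,H}(|k|-ct)$; on $(ct+\kappa,ct+r_\ast)$ the argument $|k|-ct$ lies in $(\kappa,r_\ast)$ and $t\mapsto|k|-ct$ decreases while $|k|\mapsto|k|-ct$ increases, so $\Psi_{d,H}(|k|-ct)$ is non-decreasing in $t$ and non-increasing in $|k|$, while on $(ct+r_\ast,\infty)$ the argument $|k|-ct$ exceeds $r_\ast$ so $\Psi_{d,H}(|k|-ct)=0$ and $F=0$. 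This reproduces every line of the table.

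I do not expect a genuine obstacle: the argument is pure bookkeeping, the only care needed being to track which branch of \eqref{F-general-psi} is active on each sub-interval and to order the break points when $t>r_\ast$. It may be worth adding that the same computation shows $F(t,\cdot)$ is continuous (the two branches of \eqref{F-general-psi} agree at $|k|=ct+\kappa$) and is built from shifted copies of $\Psi_{d,H}$, and that the plateau $F\equiv C(d,H)$ on $(r_\ast,ct+\kappa)$---the window on which the two-point correlation realizes the exact power law $|k|^{-(2H+d)}C(d,H)\,\delta_{k-k'}$---widens linearly in $t$, which is precisely the feature exploited in the numerical simulations.
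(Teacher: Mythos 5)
Your argument is correct and is exactly the elementary verification the paper leaves implicit (the proposition is stated without proof): one substitutes the closed form \eqref{F-general-psi} and uses that $\Psi_{d,H}$ is non-increasing, constant equal to $C(d,H)$ on $[0,\kappa]$, and zero on $[r_\ast,\infty)$. Your additional observations on continuity at $|k|=ct+\kappa$ and on the degenerate ordering of the break points are sound and do not change the conclusion.
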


 \begin{figure}%[t]
    \centering
 \includegraphics[width=17cm]{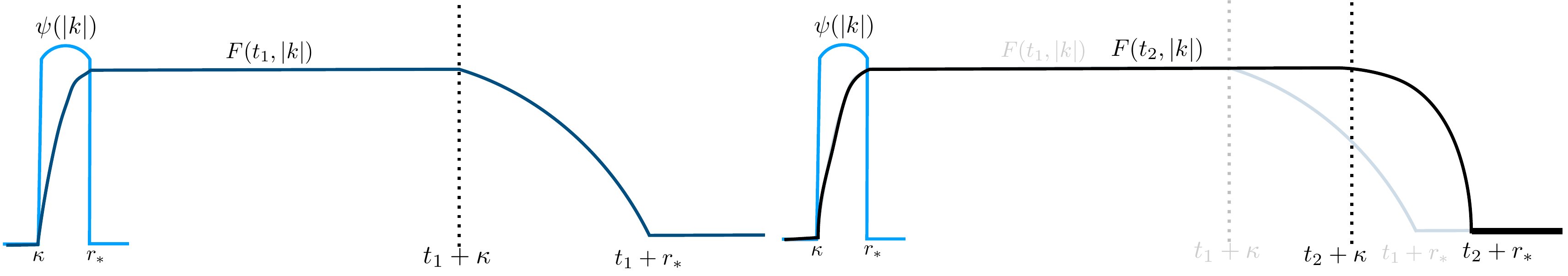}
    \caption{The functions $|k| \mapsto F(t_1,|k|)$ and $|k| \mapsto F(t_2,|k|)$ for $t_1 < t_2$ and $\psi$ supported in $[\kappa, r_\ast].$}\label{fig:F(t,|k|)}
  \end{figure}

\begin{rk}\label{rk-epsilon=0-2}
Since $\psi$ vanishes in the ball $|k|  < \kappa$,
$$
% \Psi_{d,H} (|k|)- \Psi(\kappa) = \Psi_{d,H} (|k|)- \Psi(|k|-ct) \quad \text{for} \quad ct \leq |k| \leq  ct + \kappa,
\Psi(0)=\Psi(\kappa) =  \Psi(|k|-ct) \quad \text{for} \quad ct \leq |k| \leq  ct + \kappa,
 $$
 and thus the expression
 $$
   F(t,|k|) = 
  \begin{cases}
  %0 \, & |k| \leq \kappa, \\
 \Psi_{d,H}(0) - \Psi_{d,H} (|k|),  \, & |k| < ct, \\
%\displaystyle \int_{|k|-ct}^{|k|} s^{2H+d} \, \psi(s) \, \frac{ds}{c} \, 
\Psi_{d,H}(|k|-ct)- \Psi_{d,H} (|k|) , \,  & |k| > ct.
  \end{cases}
  $$
is independent of $\kappa$. On this note, see \Cref{rk-epsilon=0-1}.
\end{rk}

%\red{Geoffrey proposes that we give another version of the theorem including viscosity. This is important since it is closer to the setup of the simulations in Section 5. I think it is a good idea, but I think it suffices to give the statement of the theorem since the strategy of proof is identical to the previous scenario (but of course, we must check it carefully).}{\color{magenta} I agree !}

In order that the solution of \eqref{eq:maineq-div-viscous} be closer to the experimental energy-spectrum picture (see beginning of section \ref{Sec:MainResults}), one can add some viscosity in the equation. More precisely, one can consider $f- \nu \Delta u$ instead of a sole forcing term $f$. 
\begin{thm}[Heuristic version]\label{heuristic-theorem_nu}
Let $\nu \geq 0$ and
let the forcing $\widehat{f}$ satisfy \eqref{eq:cond_f_FT} such that $\widehat{C_f}(k)=\psi(|k|)$ is radial, non-negative, non identically null, with~$s^{2H+d}  e^{\frac{8 \pi^2\nu}{3c} s^3} \, \psi(s) \in L^1(\mathbb{R}_s^+)$ and $\psi$   null in the ball $|k| < \kappa$.
Then 
\begin{equation}\label{eq:maineq-div-viscous}
\begin{cases}
\pa_t \widehat{u}_{\nu}(t,k) +  \vdiv_k \left(  \frac{c k}{| k |} \, \widehat{u}_{\nu}(t,k) \right) + \left( c \displaystyle \frac{H+ \frac{1}{2}}{| k|} \, + \nu |2 \pi k|^2 \right) \widehat{u}_{\nu}(t,k)  = \widehat{f}(t,k) 
& t>0, k \in \mathbb{R}^d,  |k| >\kappa >0,\\
\widehat{u}_{\nu}(t, k)=0 &  t>0, k \in \mathbb{R}^d, |k| \leq \kappa, \\
\widehat{u}_{\nu}(0,k)=0 
\end{cases}
\end{equation}
 admits a solution 
% \begin{equation}\label{eq:maineq2-sol-vscous}
% \widehat{u}(t,k) = |k|^{H+d-\frac{1}{2}} e^{\frac{\nu}{3c} |k|^3} \int_{\left(t- \frac{|k| - \kappa}{c}\right)_{+}}^{t}  \widehat{g}\left(s,(|k|-ct+cs)\, \frac{k}{|k|}\right)\, ds.
%\end{equation}
that satisfies for any fixed $t >0$
$$
 \E [\widehat{u}_{\nu}(t,k)\overline{\widehat{u}_{\nu}(t,k')}] =   |k|^{-(2H+d)} e^{- \frac{8 \pi^2 \nu}{3c} |k|^3} \, F_{\nu}(t,|k|) \, \delta_{k-k'} 
 $$
 where $F_{\nu}$ is a non-negative absolutely continuous function in $t$ and $|k|$ that admits the following expression
\begin{equation}\label{F-general-psi-viscous}
  F_{\nu}(t,|k|) = 
  \begin{cases}
  %0 \, & |k| \leq \kappa, \\
 \Psi_{d,H,\nu}(\kappa)-\Psi_{d,H,\nu} (|k|) , \, & |k| < ct+\kappa, \\
%\displaystyle \int_{|k|-ct}^{|k|} s^{2H+d} \, \psi(s) \, \frac{ds}{c} \, 
\Psi_{d,H,\nu}(|k|-ct)-\Psi_{d,H,\nu} (|k|), \,  & |k| > ct+\kappa.
  \end{cases}
\end{equation}
and where
\begin{equation}\label{eq:BigPsi_nu}
\Psi_{d,H,\nu} (|k|):= \frac{1}{c} \int_{|k|}^{\infty} s^{2H+d}  e^{\frac{8 \pi^2 \nu}{3c} s^3} \, \psi(s) \, ds.
\end{equation}
is a positive non-increasing absolutely continuous function.
\end{thm}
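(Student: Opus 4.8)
The plan is to repeat the heuristic proof of \Cref{heuristic-theorem} essentially verbatim, replacing only the algebraic substitution by one that simultaneously absorbs the viscous damping. First I would use the vector-calculus identity \eqref{vector-identities} to rewrite \eqref{eq:maineq-div-viscous}, on the region $|k|>\kappa$ and with the Dirichlet condition at $|k|=\kappa$, in the equivalent form
\begin{equation*}
\pa_t \widehat{u}_\nu + c\,\pa_{|k|}\widehat{u}_\nu + \Big(c\,\frac{H+d-\frac12}{|k|} + 4\pi^2\nu\,|k|^2\Big)\widehat{u}_\nu = \widehat{f},
\end{equation*}
using $|2\pi k|^2 = 4\pi^2|k|^2$ and $\pa_{|k|}=\frac{k}{|k|}\cdot\nabla_k$.

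Next I would look for an integrating factor $m=m(|k|)$ and set $\widehat{v}_\nu := m(|k|)\,\widehat{u}_\nu$ and $\widehat{g}_\nu := m(|k|)\,\widehat{f}$. Since $c\,\pa_{|k|}(m\widehat{u}_\nu) = m\,c\,\pa_{|k|}\widehat{u}_\nu + c\,m'(|k|)\widehat{u}_\nu$, the function $\widehat{v}_\nu$ solves the pure transport problem $\pa_t\widehat{v}_\nu + c\,\pa_{|k|}\widehat{v}_\nu = \widehat{g}_\nu$ on $|k|>\kappa$ (with zero initial and boundary data) exactly when $m$ satisfies the ODE $m'/m = (H+d-\frac12)/|k| + 4\pi^2\nu|k|^2/c$, that is
\begin{equation*}
m(|k|) = |k|^{H+d-\frac12}\,e^{\frac{4\pi^2\nu}{3c}|k|^3},
\qquad m(|k|)^2 = |k|^{2H+2d-1}\,e^{\frac{8\pi^2\nu}{3c}|k|^3}.
\end{equation*}
For $\nu=0$ this is exactly the substitution used in \Cref{heuristic-theorem}, and the second identity explains the origin of the constant $\frac{8\pi^2\nu}{3c}$.

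Then I would proceed exactly as in the proof of \Cref{heuristic-theorem}: write the explicit solution of the transport problem for $\widehat{v}_\nu$ as in \eqref{eq:maineq2-sol}, insert assumption \eqref{eq:cond_f_FT} — which gives $\E[\widehat{g}_\nu(s_1,k_1)\overline{\widehat{g}_\nu(s_2,k_2)}]= m(|k_1|)^2\,\widehat{C_f}(k_1)\,\delta_{s_1-s_2}\,\delta_{k_1-k_2}$ — formally collapse the product of Dirac masses $\delta_{cs_1-cs_2}\,\delta_{\cdots}$, and pass to polar coordinates, the Jacobian $\big(\frac{|k|-ct+cs}{|k|}\big)^{-(d-1)}$ being nonsingular on the integration range thanks to $\kappa>0$. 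Undoing the substitution contributes $m(|k|)^{-2} = |k|^{-(2H+2d-1)}e^{-\frac{8\pi^2\nu}{3c}|k|^3}$; together with the $|k|^{d-1}$ from the polar Jacobian and the factor $r^{2H+2d-1}$ coming from $m(r)^2$ in the spectral density of $\widehat{g}_\nu$ (with $r:=|k|-ct+cs$), this yields the announced prefactor $|k|^{-(2H+d)}e^{-\frac{8\pi^2\nu}{3c}|k|^3}$ and leaves the $s$-integral of $r^{2H+d}e^{\frac{8\pi^2\nu}{3c}r^3}\psi(r)$. Finally the change of variables $s\mapsto r$ turns this into $\frac1c\int r^{2H+d}e^{\frac{8\pi^2\nu}{3c}r^3}\psi(r)\,dr$ over $[\,\kappa,|k|\,]$ when $|k|<ct+\kappa$ and over $[\,|k|-ct,|k|\,]$ when $|k|>ct+\kappa$; recognizing these as $\Psi_{d,H,\nu}(\kappa)-\Psi_{d,H,\nu}(|k|)$ and $\Psi_{d,H,\nu}(|k|-ct)-\Psi_{d,H,\nu}(|k|)$ gives \eqref{F-general-psi-viscous}. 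The hypothesis $s^{2H+d}e^{\frac{8\pi^2\nu}{3c}s^3}\psi(s)\in L^1(\R_s^+)$ is precisely what makes $\Psi_{d,H,\nu}$ finite, and it is then non-negative, non-increasing and absolutely continuous because it is the tail integral of a non-negative $L^1$ function; non-negativity of $F_\nu$ follows from the monotonicity of $\Psi_{d,H,\nu}$.

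As in \Cref{heuristic-theorem}, the only genuinely nonrigorous step — and the reason this is stated as a heuristic version — is the manipulation of the product of Dirac masses $\delta_{cs_1-cs_2}\,\delta_{\cdots}$, compounded by the fact that the white-noise forcing $\widehat{f}$ is far too rough for \eqref{eq:maineq-div-viscous} to be treated by the classical transport theory of \cite{Bressanone, guy}; a fully rigorous argument would require the functional framework of \Cref{sec-rigorous}. A minor additional subtlety compared with the inviscid case is that $m$ is now unbounded, so one must check that the forcing decays fast enough for all the integrals to converge — but this is exactly what the stated $L^1$ hypothesis encodes.
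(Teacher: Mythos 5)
Your proposal is correct and follows essentially the same route as the paper: the authors likewise set $\widehat{v}(t,k):=|k|^{H+d-\frac{1}{2}} e^{\frac{4 \pi^2 \nu}{3c} |k|^3} \widehat{u}(t,k)$ (and similarly for the source) to reduce \eqref{eq:maineq-div-viscous} to the pure transport equation \eqref{eq:maineq2}, and then repeat the argument of \Cref{heuristic-theorem}. Your derivation of the integrating factor via the ODE $m'/m$ and your bookkeeping of the exponential weights through the Jacobian and the final change of variables are consistent with, and in fact more detailed than, the paper's one-line proof.
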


\begin{proof}
In order to find a formal solution to \eqref{eq:maineq-div-viscous}, we let 
$$
\widehat{v}(t,k):=|k|^{H+d-\frac{1}{2}} e^{\frac{4 \pi^2 \nu}{3c} |k|^3} \widehat{u}(t,k)
, \quad \widehat{g}(t,k):= |k|^{H+d-\frac{1}{2}} e^{\frac{4 \pi^2 \nu}{3c} |k|^3} \widehat{f}(t,k)
$$ 
and rewriting \eqref{eq:maineq-div-viscous} for $\widehat{v}$, we find that $\widehat{v}$ solves \eqref{eq:maineq2}. Thus the proof is similar to the one of \Cref{heuristic-theorem}.
\end{proof}

Note that in the presence of viscosity, the limits $\nu \to 0$ and $t \to \infty$ commute. 

\subsection{Main issues with the heuristic proof}
%\subsection{Between Scylla and Charybdis}

The calculations in this section provide invaluable intuition regarding the construction of an equation whose solution displays a power-law behavior as well as the right parameters involved. Nevertheless, there are several steps in our calculations that are difficult to justify from a rigorous mathematical viewpoint (if possible at all). More precisely:
\begin{enumerate}
\item {\it{Fourier-transform.}} As we already mentioned, the Fourier transform of a rough non-decrea\-sing source (and therefore the solution of equation \eqref{eq:maineq-div}) is unlikely to be defined pointwise.
\item{\it{Transport equation in wavenumber space.}} We have never showed that \eqref{eq:maineq-div} is well-posed. Even if it were in some weak sense, it is unclear whether one can define the trace of the solution at the boundary $|k|=\kappa$ or to define the domain of the operator $\vdiv_k \left( \frac{k}{|k|} \cdot \right)$.
\item{\it{Solution of a stochastic transport equation.}} In the simplest case, \eqref{eq:maineq-div} becomes \eqref{eq:maineq2} (for~$H=\frac{1}{2}-d$) whose solution we gave in \eqref{eq:maineq2-sol}, namely
 \begin{equation}\label{eq:maineq3-sol}
 \widehat{u}(t,k) = \int_{(ct- (|k| - \kappa))_{+}}^{t}  \widehat{f}\left(s,(|k|-ct+cs)\, \frac{k}{|k|}\right)\, ds.
\end{equation}
In the case where $f$ is a white noise in time of the form \eqref{eq:rforcing} (in $\mathbb{R}^d$ instead of $\mathbb{R}$), the integral on the right-hand side of \eqref{eq:maineq3-sol} is not a Riemann integral. Worse than that, one would need to define the object $\widehat{f}(t,k)$ and evaluate it along the characteristics of the transport equation. To sum up, one would need to make sense of the ``stochastic integral":
$$
\int_{(t- (|k| - \kappa))_{+}}^{t} \widehat{\varphi}\left((|k|-ct+cs)\, \frac{k}{|k|}\right) \widehat{dW}\left(s,(|k|-ct+cs)\, \frac{k}{|k|}\right).
$$
Moreover, without a rigorous functional setup it is hard to study the important problem of uniqueness of solutions. In fact, it is unclear whether the solution \eqref{eq:maineq3-sol} is unique.
\item {\it{Two-point correlations.}} Even if we can make sense of \eqref{eq:maineq-div} weakly, e.g. as a tempered distribution, the two-point correlation is not necessary well defined. Indeed, it involves a product of distributions whose singular supports might overlap.
\item {\it{Mixing coordinates in Dirac distributions}}. In the proof, we use the following identity
\begin{equation*}
\delta_{s_1-s_2}
\delta_{\tfrac{|k_1|-ct+cs_1}{|k_1|}\, k_1-\tfrac{|k_2|-ct+cs_2}{|k_2|}\, k_2} %=
=\delta_{s_1-s_2} \delta_{\tfrac{|k_1|-ct+cs_1}{|k_1|}\, k_1-\tfrac{|k_2|-ct+cs_1}{|k_2|}\, k_2} .
\end{equation*}
In other words, we first apply $\delta_{s_1-s_2}$ to $\delta_{\tfrac{|k_1|-ct+cs_1}{|k_1|}\, k_1-\tfrac{|k_2|-ct+cs_2}{|k_2|}\, k_2}$. Can this be justified?

This is in particular possible if the inner delta were a function, which is of course not the case. Later, in the proof, we perform a change of coordinates to obtain
 $$
\delta_{s_1-s_2}
\delta_{\tfrac{|k_1|-ct+cs_1}{|k_1|}\, k_1-\tfrac{|k_2|-ct+cs_2}{|k_2|}\, k_2} %=
%=\delta_{s_1-s_2} \delta_{\tfrac{|k_1|-ct+cs_1}{|k_1|}\, k_1-\tfrac{|k_2|-ct+cs_1}{|k_2|}\, k_2}
= \left( \frac{|k_1|-ct+cs_1}{|k_1|} \right)^{-(d-1)} \delta_{s_1-s_2}  \delta_{k_1-k_2}.
$$
Does the same final result hold if we exchanged the order in which the deltas on the left-hand side are applied?
%
%What happens if we exchange the order in which the deltas are applied:
%$$
%\delta_{\frac{|k_1|-t+s_1}{|k_1|}\, k_1-\frac{|k_2|-t+s_2}{|k_2|}\, k_2} \delta_{s_1-s_2} ?
%$$
%Does we finally obtain something like  $ \delta_{k_1-k_2}  \delta_{s_1-s_2}$ ?
%\item {\it{Equation in the physical space.}} The final goal is to write equation \eqref{eq:maineq-div} in physical space and to interpret the power-law in physical space. One first idea is to perform an inverse Fourier transform of \eqref{eq:maineq-div} but the difficulties mentioned above make it complicated.
\end{enumerate}

Ultimately, we would also like to write equation \eqref{eq:maineq-div} in physical space and to interpret the power-law in physical space. One first idea is to perform an inverse Fourier transform of \eqref{eq:maineq-div} but the difficulties mentioned above make it complicated.

\section{Higher dimensional real fractional gaussian fields: results}\label{sec-rigorous}

As mentioned in the previous subsection, it is difficult to justify the Fourier transform of a rough source such as the one that interests us. Similarly, the solution to such a problem \eqref{eq:maineq-div} may not admit a Fourier transform either. For that reason, our goal is to try to formulate a weaker notion of \eqref{eq:maineq-div} which is entirely given in physical space, thereby avoiding such problems. 
To do so, we first introduce the Hilbert space
\begin{equation}\label{def:X}
X:= \{ u \in L^2 (\mathbb{R}_x^d) \, | \, \hat{u}(k) = 0 \ \text{for all } |k| <  \kappa \, \} 
\end{equation}
together with the inner product
$$
(u,v):= \int_{\mathbb{R}_k^d \setminus B(0,\kappa)} \widehat{u}(k) \, \overline{\widehat{v}(k)} dk.
$$
Next we introduce the unbounded operator
\begin{equation}\label{def:A}
A : u \mapsto \mathcal{F}^{-1} \left[ c\, \mbox{div}_k \left( \frac{k}{| k |} \, \widehat{u} \right) + c\,  \frac{H+\frac{1}{2}}{|k|} \, \widehat{u} \right]  
\end{equation}
with domain
\begin{equation}\label{def:domain_A}
\begin{split}
D(A) &:=  
\{ u \in X\, | A u  \in X \, \text{and } \, \widehat u |_{|k| = \kappa}=0 \, \}
\\
&=  \{ u \in X\, | \partial_{|k|} \widehat{u} \in L^2 ( \mathbb{R}_k^d \setminus B(0,\kappa)) \, \text{and } \, \widehat u |_{|k| = \kappa}=0 \, \}.
\end{split}
\end{equation}
For any function $u \in X$ such that $\partial_{|k|} \widehat{u} \in L^2 ( \mathbb{R}_k^d \setminus B(0,\kappa))$, the trace in Fourier space $\widehat u |_{|k| = \kappa}$ is well-defined thanks to the Sobolev embedding theorem. This allows us to impose $\widehat u |_{|k| = \kappa}=0$. 
Consequently, the transport equation in wavenumber space \eqref{eq:maineq-div} can be written in physical space after formally applying the inverse Fourier transform:
\begin{equation}\label{eq:maineq-div-phy}
\begin{cases}
\pa_t u + Au = f \qquad \mbox{for}\ t>0,\\
{u}|_{t=0}=0.
\end{cases}
\end{equation}

At this stage, note that for any regular enough forcing $f\in L^1_{\mathrm{loc}}((0,+\infty)_t,X)$, the problem \eqref{eq:maineq-div-phy} is rigorously defined. Its mild solution is given by: 
\begin{equation}\label{duhamel-deterministic}
u(t)= \int_0^t  e^{- (t-s) A} f(s) ds
\end{equation}
where $e^{- tA}: X \to X$ is defined by
\begin{equation}\label{eq:semi-group}
e^{-tA}u_0:=  \mathcal{F}^{-1} \left[ \chi_{|k| > ct + \kappa} \,  \left( \frac{|k|-ct}{|k|} \right)^{H+d-\frac{1}{2}} \widehat{u_0} \left( \frac{|k|-ct}{|k|} k \right) \right].
\end{equation}
Indeed, it is easy to check that \eqref{duhamel-deterministic} solves \eqref{eq:maineq-div-phy}. Moreover, it is also easy to check that if $u_0 \in X \cap \Sc(\mathbb{R}_x^d)$ then $e^{- tA} u_0 \in X \cap \Sc(\mathbb{R}_x^d)$.

Unfortunately, our source $f$ is a white noise in time (and colored in space) and therefore the above considerations do not apply. Rigorously speaking, $f$ is a Gaussian random measure acting on $L^2((0,+\infty)_t)$, i.e.
\begin{equation}\label{eq:cforcing-phy}
\langle f, g \rangle(x) := \int_{(0,+\infty)_t \times \R_y^d} \tau_y \varphi (x)\, g(t) dW(t,y),
\end{equation}
for all test functions $g \in L^2((0,+\infty)_t)$ where $dW$ is a space-time Gaussian real white noise, $\tau_y$ is a translation by $y$ (i.e. $\tau_y g(x)=g(x-y)$) and $\varphi\in  X \cap \Sc (\R_x^d)$ is a real, non-identically null, radial function.

In order to give a meaning to solutions of \eqref{eq:maineq-div-phy} with rough source, we need a weak formulation. To propose one, one first needs to introduce the adjoint of $A$. 

\begin{prop}\label{prop:adjoint}
The adjoint of $A$ in \eqref{def:A} is defined by
\begin{equation}
A^\ast : u \mapsto \mathcal{F}^{-1} \left[ - {\normalfont{c \, \mbox{div}_k}} \left( \frac{k}{| k |}  \, \widehat{u} \right) + c \frac{H+d-\frac{1}{2}}{|k|} \, \widehat{u} \right]  
\end{equation}
and its domain is
$$
D(A^\ast):= \{ u \in X\, | \partial_{|k|} \widehat{u} \in L^2 ( \mathbb{R}_k^d \setminus B(0,\kappa))  \, \} \supsetneq D(A).
$$
\end{prop}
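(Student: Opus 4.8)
The plan is to compute the adjoint directly from the definition, using the fact that the Fourier transform is an isometry so that one can work entirely on the Fourier side over the domain $\mathbb{R}^d_k \setminus B(0,\kappa)$. Writing $a(k) := k/|k|$, the operator $A$ acts on $\widehat u$ as $c\,\mathrm{div}_k(a\,\widehat u) + c\,\tfrac{H+1/2}{|k|}\widehat u$, and by the vector identity \eqref{vector-identities} this equals $c\,\partial_{|k|}\widehat u + c\,\tfrac{H+d-1/2}{|k|}\widehat u$. So for $u \in D(A)$ and $v \in X$ with $\partial_{|k|}\widehat v \in L^2(\mathbb{R}^d_k\setminus B(0,\kappa))$, I would compute
$$
(Au,v) = \int_{\mathbb{R}^d_k \setminus B(0,\kappa)} \left( c\,\partial_{|k|}\widehat u + c\,\frac{H+d-\frac12}{|k|}\widehat u \right) \overline{\widehat v}\, dk
$$
and integrate by parts in the radial variable. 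Passing to polar coordinates $k = r\omega$ with $r = |k| \in (\kappa,\infty)$ and $\omega \in \mathbb{S}^{d-1}$, the measure is $r^{d-1}\,dr\,d\sigma(\omega)$ and $\partial_{|k|} = \partial_r$, so the radial integration by parts produces a factor from differentiating $r^{d-1}$: schematically $\int_\kappa^\infty (\partial_r \widehat u)\,\overline{\widehat v}\, r^{d-1}\,dr = [\,\widehat u\,\overline{\widehat v}\,r^{d-1}\,]_\kappa^\infty - \int_\kappa^\infty \widehat u\,(\partial_r \overline{\widehat v})\,r^{d-1}\,dr - (d-1)\int_\kappa^\infty \widehat u\,\overline{\widehat v}\,r^{d-2}\,dr$. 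The boundary term at $r = \infty$ vanishes because $\widehat u\,\overline{\widehat v}\,r^{d-1}$ is integrable with an integrable derivative in $r$ (an argument identical in spirit to the one used after \eqref{eq:u_infty} and in \Cref{CV-t^-n-lemma-technic}), and the boundary term at $r = \kappa$ vanishes precisely because $\widehat u|_{|k|=\kappa} = 0$ for $u \in D(A)$ — this is where the asymmetry between $D(A)$ and $D(A^\ast)$ enters. Collecting terms, the $-(d-1)r^{d-2}$ contribution is $-(d-1)/r$ times the volume measure, which combines with $+c(H+d-1/2)/r$ to leave $-c\,\partial_{|k|} + c(H+d-1/2)/|k| - c(d-1)/|k| = -c\,\partial_{|k|} + c(H+1/2)/|k| \cdot$ — wait, re-expressing via \eqref{vector-identities} in reverse, $-c\,\partial_{|k|}\widehat v - c\tfrac{d-1}{|k|}\widehat v + c\tfrac{H+d-1/2}{|k|}\widehat v = -c\,\mathrm{div}_k(a\widehat v) + c\tfrac{H+d-1/2}{|k|}\widehat v$ after using $\mathrm{div}_k(a\widehat v) = \partial_{|k|}\widehat v + \tfrac{d-1}{|k|}\widehat v$; this identifies the candidate $A^\ast$ exactly as claimed.

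Once the formal identity $(Au,v) = (u, A^\ast v)$ is established for all $u\in D(A)$ and all $v$ in the claimed domain, the remaining point is to verify that this domain is exactly $D(A^\ast)$, i.e. that no larger class of $v$ works. For the inclusion $\{v : \partial_{|k|}\widehat v \in L^2(\mathbb{R}^d_k\setminus B(0,\kappa))\} \subseteq D(A^\ast)$ the computation above suffices (the map $u \mapsto (Au,v)$ extends to a bounded functional on $X$, represented by $A^\ast v \in X$). For the reverse inclusion, suppose $v \in D(A^\ast)$, so there is $w \in X$ with $(Au,v) = (u,w)$ for all $u \in D(A)$; testing against $u$ with $\widehat u$ smooth and compactly supported in the open region $|k| > \kappa$ shows that $\partial_{|k|}\widehat v$ agrees with a distribution in $L^2$ there (it equals $-c^{-1}\widehat w$ plus the zeroth-order term), hence $\partial_{|k|}\widehat v \in L^2(\mathbb{R}^d_k\setminus B(0,\kappa))$; the key observation is that no boundary condition at $|k| = \kappa$ is forced, because the test functions $u \in D(A)$ already vanish there, so the boundary term in the integration by parts is automatically zero regardless of the value of $\widehat v|_{|k|=\kappa}$. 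This gives $D(A^\ast) = \{v : \partial_{|k|}\widehat v \in L^2(\mathbb{R}^d_k\setminus B(0,\kappa))\} \supsetneq D(A)$, the strict inclusion being witnessed by any $v$ in the larger set with $\widehat v|_{|k|=\kappa} \neq 0$ (such $v$ exist and lie in $X$ with $\partial_{|k|}\widehat v \in L^2$, e.g. a radial bump profile in $r$ that is nonzero at $r = \kappa$).

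The main technical obstacle is making the radial integration by parts and the vanishing of the boundary term at infinity fully rigorous at the level of generality of the domains, rather than merely for Schwartz-class $u,v$: one needs that $\widehat u, \widehat v \in L^2$ with $\partial_{|k|}\widehat u, \partial_{|k|}\widehat v \in L^2$ over the exterior region is enough to ensure $r^{d-1}\widehat u(r\omega)\overline{\widehat v(r\omega)} \to 0$ as $r\to\infty$ for a.e. $\omega$ and that the resulting boundary-at-infinity term vanishes after integrating in $\omega$. I would handle this by a density argument — smooth functions compactly supported in $\{|k| > \kappa\}$ (plus, for $D(A^\ast)$, functions not required to vanish at $|k|=\kappa$) are dense in the respective graph norms — together with the one-dimensional Sobolev embedding $H^1(\kappa,\infty) \hookrightarrow C_0([\kappa,\infty))$ applied fibrewise in $\omega$ (exactly as invoked right after \eqref{def:domain_A} to make sense of the trace $\widehat u|_{|k|=\kappa}$), controlling the $\omega$-integration by Fubini. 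Everything else — the algebra relating $\mathrm{div}_k(a\,\cdot)$ and $\partial_{|k|}$, the bookkeeping of the $(d-1)/|k|$ terms — is the routine vector-calculus computation recorded in \eqref{vector-identities} and needs no new idea.
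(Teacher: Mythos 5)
Your proposal is correct and follows essentially the same route as the paper: pass to polar coordinates via \eqref{vector-identities}, integrate by parts in the radial variable, use $\widehat u|_{|k|=\kappa}=0$ to kill the boundary term at $\kappa$, and recombine the $(d-1)/|k|$ term with the zeroth-order term to identify $A^\ast$. You are in fact somewhat more thorough than the paper, which stops at the formal computation and does not spell out the reverse inclusion $D(A^\ast)\subseteq\{v:\partial_{|k|}\widehat v\in L^2\}$ or the vanishing of the boundary term at infinity.
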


\begin{proof}
Let $u \in D(A)$ and $v \in X$. Using \eqref{vector-identities}, one gets
$$
c^{-1} ( Au,v)_{L^2} = \int_{\mathbb{R}_k^d \setminus B(0,\kappa)} \partial_{|k|} \widehat u(k) \, \overline{\widehat v}(k) \, dk 
+ \int_{\mathbb{R}_k^d \setminus B(0,\kappa)} \frac{H+d-\frac{1}{2}}{|k|}\,\widehat u (k)\, \overline{\widehat v}(k) \, dk.
$$
We perform a change of variables from cartesian to polar coordinates in wavenumber space:
$$
\int_{ \mathbb{R}_k^d} \partial_{|k|} \widehat u(k) \, \overline{\widehat{v}(k)} \, dk = \int_{ \mathbb{R}^+}  \int_{ \mathbb{S}_\theta} \partial_{|k|} \widehat{u}(k) \, \overline{\widehat{v}(k)} \, |k|^{d-1} d \sigma_\theta d |k|.
$$
We now assume that $\partial_{|k|} \widehat{v} \in L^2 ( \mathbb{R}_k^d \setminus B(0,\kappa))$ so that we can integrate by parts in variable $|k|$ and immediately return to cartesian coordinates:
$$
\int_{ \mathbb{R}_k^d} \partial_{|k|} \widehat u \, \overline{\widehat v} \, dk = - \int_{ \mathbb{R}_k^d} \widehat u \, \partial_{|k|} \overline{\widehat v}  \, dk - \int_{ \mathbb{R}_k^d}\frac{(d-1)}{|k|} \widehat u \, \overline{\widehat v}  dk
- \int_{ \mathbb{S}_\theta} \kappa^{d-1}\, \widehat u |_{|k| = \kappa} \, \overline{\widehat v}|_{|k| = \kappa}  d \sigma_\theta.
$$
Since $\widehat u |_{|k| = \kappa} =0$, the previous equality yields
$$
c^{-1} ( Au,v)_{L^2} = - \int_{ \mathbb{R}_k^d} \widehat u \, \partial_{|k|} \overline{\widehat v} + \int_{ \mathbb{R}_k^d}\frac{H+\frac{1}{2}}{|k|} \widehat u \, \overline{\widehat v} \, dk.
%- \int_{ \mathbb{S}_\theta} \widehat u |_{|k| = \kappa} \, \overline{\widehat v}|_{|k| = \kappa} \kappa^{d-1} d \sigma_\theta.
$$
Using \eqref{vector-identities} once again, one readily gets
$$
c^{-1} ( Au,v)_{L^2} = \int_{\mathbb{R}_k^d \setminus B(0,\kappa)} \widehat{u}(k) \left[ - \mbox{div}_k \left( \frac{k}{| k |}  \overline{\widehat v}(k) \right) + \frac{H+d-\frac{1}{2}}{|k|} \overline{\widehat v}(k)  \right] \, dk 
$$
for all $u \in D(A)$ and $v \in X$ such that $\partial_{|k|} \widehat{v} \in L^2 ( \mathbb{R}_k^d \setminus B(0,\kappa))$.

Note that it is not necessary to impose $\widehat v |_{|k| = \kappa} =0$ and thus $D(A^{*})$ is strictly larger than~$D(A)$.
\end{proof}

\begin{rk}
In the definition of $D(A)$ the condition $\widehat u |_{|k| = \kappa}=0$ can be interpreted as a boundary condition in wavenumber space. This boundary condition is lost in $D(A^{*})$, which is strictly larger than $D(A)$. It might be possible to take $\kappa=0$ by adequately exchanging $\widehat u |_{|k| = \kappa}=0$ by $\widehat{u}(t,k=0)=0$, and by imposing an appropriate behaviour near $k=0$, but this is outside the scope of this paper. 
%In a way, taking $\kappa >0$ allows some technical part of the proof.
\end{rk}

We are ready to give the weak formulation of \eqref{eq:maineq-div-phy}. 

\begin{defn}\label{def:weak_sol}
We say that a stochastic process $u$ is a \emph{weak solution} to \eqref{eq:maineq-div-phy} with $f$ given by~\eqref{eq:cforcing-phy}, if 
%$\E |u(t,x)|^2<\infty$ for every $(t,x)\in [0,\infty) \times \R^d$, and 
$u\in L^1_{\mathrm{loc}} ( [0,\infty)_t \times \R^d)$ almost surely and
\begin{multline}\label{eq:main_weak}
 \int_{\R_x^d} u(t,x)g(t,x) dx + \int_0^t \int_{\R_x^d} u(s,x) (-\pa_s + A^\ast) g(s,x) ds dx \\
 = \int_0^t  \int_{\R_y^d}   \left( \int_{\R_x^d}  \varphi (x-y) g(s,x) dx \right) dW(s,y)
%+ \int_0^t \langle u(s),A^{\ast}\psi\rangle\, ds = \int_0^t \langle f(s),\psi\rangle\, ds
\end{multline}
for all $g \in C^\infty( (0,\infty), D(A^\ast) \cap \Sc (\R_x^d))$, and such that 
\begin{equation}\label{eq:main_weak2}
\int_0^t \int_{\R_x^d} u(s,x)g(s,x) ds dx = 0 
\end{equation}
for all $g \in C^\infty( (0,\infty), \Sc (\R_x^d))$ with $\widehat{g}(t,k)=0$ for all $|k|>\kappa$.
\end{defn}

Note that for a test function $g\in X \cap \Sc (\R_x^d)$,  $g \in D(A^\ast)$ and $A^\ast g \in D(A^\ast) \cap \Sc (\R_x^d)$ and thus the terms on the left-hand side of \eqref{eq:main_weak} are well-defined.

We are ready to give the main result of this paper for positive $H\in (0,1)$.

\begin{thm}\label{thm:main_H_pos}
Let $H\in (0,1)$, and let the forcing $f$ in \eqref{eq:maineq-div-phy} be 
\begin{equation}\label{eq:rforcing-phy}
f(t,x)= \int_{\R_y^d} \tau_y\varphi (x)\, dW(t,y),
\end{equation}
where $dW$ is a space-time Gaussian real white noise, $\tau_y$ is a translation by $y$ (i.e. $\tau_y g(x)=g(x-y)$) and $\varphi\in  \Sc (\R_x^d)$ is a real, non-identically null, radial function such that $\widehat{\varphi}(k)=0$ for all $|k| < \kappa$. Then:
\begin{enumerate}[(i)]
\item Equation \eqref{eq:maineq-div-phy} admits a unique global (in time) weak solution $u(t,x)$ in $C ((0,\infty),(X \cap \Sc (\R_x^d))')$ given by
\begin{equation}\label{eq:mainsol}
u(t,x)=\int_0^t \int_{\R_y^d} e^{- (t-s) A} [\tau_y\varphi](x) \, dW(s,y)
\end{equation}
where $e^{- tA}: X \cap  \Sc (\R_x^d) \to X \cap  \Sc (\R_x^d)$ is defined in \eqref{eq:semi-group}.
\item For each $t>0$ fixed, $u(t,x)$ is a Gaussian process with a.s.\ smooth paths in $x$, and with $\alpha$-H\"older continuous paths in $t$ for any $0<\alpha<1/2$.
\item As $t\rightarrow\infty$, the solution $u(t,x)$ converges in law to a zero-mean Gaussian field $u_{\infty}(x)$ with $\alpha$-H\"older continuous paths in $x$ for any $0<\alpha<H$. %More precisely,
%\[
%\lim_{t\rightarrow\infty} \sup_{x\in\R^d} \E |u(t,x)- u_{\infty}(x)|^2 = 0.
%\]
\item The limiting process $u_{\infty}(x)$ is characterized by the correlations:
\begin{equation}\label{eq:correlations_infty}
\begin{split}
\E [u_{\infty} (x_1) u_{\infty}(x_2)] & =\lim_{t\rightarrow\infty} \E [u(t,x_1) u(t,x_2)]\\
& =  C(d,H) \,\mathcal{K}_H(x_1-x_2) - (\mathcal{K}_H \ast \mathcal{J}_H) (x_1-x_2),
\end{split}
\end{equation}
where 
\[
\mathcal{K}_H := \mathcal{F}^{-1} \left[ \chi_{|k| > \kappa} |k|^{-(2H+d)} \right]
\quad \text{and} \quad
\mathcal{J}_H := \mathcal{F}^{-1} \left[ \chi_{|k| >  \kappa} \,\Psi_{d,H}(|k|) \right]
\]
and where $C(d,H)$ and $\Psi_{d,H}$ are given in Theorem \ref{heuristic-theorem}.
%and $C(d,H)$ is a constant and the function $\mathcal{J}_H \in \Sc(\mathbb{R}_x^d)$ can be explicitly computed from the spatial correlation of the source.
\end{enumerate}
\end{thm}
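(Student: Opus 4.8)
I would take the explicit It\^o--Wiener formula \eqref{eq:mainsol} as the \emph{candidate} solution and verify the four assertions directly, in the order (i), (iv), (ii), (iii): the regularity and convergence statements both rest on the covariance formula produced while proving (iv). Throughout, the crucial structural facts are that $e^{-\sigma A}$ maps $X\cap\Sc$ into itself with $\widehat{e^{-\sigma A}\phi}$ supported in $\{|k|>c\sigma+\kappa\}$, and the duality identity $(Ae^{-\sigma A}\phi,g)=(e^{-\sigma A}\phi,A^\ast g)$ coming from \Cref{prop:adjoint}.

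\textbf{Part (i): well-posedness.} Since $\varphi\in\Sc$, the kernel $(s,y)\mapsto e^{-(t-s)A}[\tau_y\varphi](x)$ has $L^2_y$-norm at time $s$ equal to $\|e^{-(t-s)A}\varphi\|_{L^2}\le\|\varphi\|_{L^2}$, hence lies in $L^2((0,t)_s\times\R^d_y)$ for each fixed $t>0$, $x$; so \eqref{eq:mainsol} is a well-defined centered Gaussian field. To see it is a weak solution in the sense of \Cref{def:weak_sol}, I would insert \eqref{eq:mainsol} into \eqref{eq:main_weak}, use a stochastic Fubini to pull the increments $dW(r,y)$ outside, and check that for each fixed $(r,y)$ the deterministic quantity
\[
\int_{\R^d} e^{-(t-r)A}[\tau_y\varphi]\,g(t,\cdot)\,dx + \int_r^t\!\!\int_{\R^d} e^{-(s-r)A}[\tau_y\varphi]\,(-\pa_s+A^\ast)g(s,\cdot)\,dx\,ds = \int_{\R^d}\tau_y\varphi\,g(r,\cdot)\,dx,
\]
which is just the integrated form of $\pa_s e^{-(s-r)A}\phi=-Ae^{-(s-r)A}\phi$ combined with the above duality and $e^{0\cdot A}=\mathrm{Id}$. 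Condition \eqref{eq:main_weak2} is immediate from the Fourier support of $u(t)$. For uniqueness, the difference $w$ of two weak solutions solves the homogeneous problem a.s., so it suffices to prove a deterministic uniqueness statement; I would test against $g(s,\cdot)=e^{-(t-s)A^\ast}\phi$ for $\phi\in X\cap\Sc$ with $\widehat\phi$ supported in $\{|k|>\kappa+c(t-s_0)\}$ — chosen so that the backward-transported test function remains in $X\cap\Sc\subset D(A^\ast)\cap\Sc$ on $[s_0,t]$ — to get $\langle w(t),\phi\rangle=0$ for a family of $\phi$ that becomes dense as $t\downarrow s_0$, and then bootstrap in time using continuity of $t\mapsto w(t)$ in $(X\cap\Sc)'$.

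\textbf{Part (iv): the covariance.} By the Wiener isometry, $\E[u(t,x_1)u(t,x_2)]=\int_0^t\!\int_{\R^d}e^{-(t-s)A}[\tau_y\varphi](x_1)\,e^{-(t-s)A}[\tau_y\varphi](x_2)\,dy\,ds$. Integrating out $y$ by Plancherel and using the explicit Fourier form of $e^{-\sigma A}[\tau_y\varphi]$ turns the product into a single wavenumber integral; then a radial change of variables $k\mapsto\frac{|k|-c\tau}{|k|}k$ (harmless because $\kappa>0$) and the substitution $\tau=t-s$ give the rigorous counterpart of the heuristic identity \eqref{cor-u-heuristic-chi}, namely $\E[u(t,x_1)u(t,x_2)]=\int_{|k|>\kappa}e^{2\pi i(x_1-x_2)\cdot k}|k|^{-(2H+d)}F(t,|k|)\,dk$ with $F$ as in \Cref{rk-F}. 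Since $0\le F(t,|k|)\le C(d,H)$, $F(t,|k|)\to C(d,H)-\Psi_{d,H}(|k|)$ pointwise as $t\to\infty$, $\chi_{|k|>\kappa}|k|^{-(2H+d)}$ is bounded near $|k|=\kappa$, and $\widehat\varphi$ (hence $\Psi_{d,H}$) decays fast, dominated convergence yields \eqref{eq:correlations_infty}; the convolution identity $\mathcal K_H\ast\mathcal J_H=\mathcal F^{-1}[\chi_{|k|>\kappa}|k|^{-(2H+d)}\Psi_{d,H}(|k|)]$ is just $\chi_{|k|>\kappa}^2=\chi_{|k|>\kappa}$.

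\textbf{Parts (ii) and (iii).} Differentiating the finite-time covariance kernel in $x$ only multiplies the integrand by polynomials in $k$ and by powers of $\frac{|k|+c\tau}{|k|}$ that are bounded for fixed $t$, so every spatial derivative of $u(t,\cdot)$ has finite variance; being Gaussian, the Kolmogorov continuity theorem gives a.s.\ smooth paths. For time regularity I split $u(t)-u(s)$ into the fresh-noise part (variance $O(|t-s|)$) and the evolved part $\int_0^s(e^{-(t-r)A}-e^{-(s-r)A})[\tau_y\varphi]\,dW$, whose variance is $O(s|t-s|^2)$ because $\sigma\mapsto e^{-\sigma A}\varphi$ is Lipschitz in $L^2$ ($e^{-\sigma A}\varphi\in D(A)$ with $\|Ae^{-\sigma A}\varphi\|_{L^2}$ locally bounded); Gaussianity upgrades the resulting $O(|t-s|)$ bound to all $L^p$, so Kolmogorov gives $\alpha$-H\"older paths in $t$ for $\alpha<1/2$. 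For (iii): $u(t,\cdot)$ is centered Gaussian with a covariance converging pointwise to $C(d,H)\mathcal K_H-\mathcal K_H\ast\mathcal J_H$ by (iv), so its finite-dimensional laws converge; tightness in $C^{0,\beta}_{\mathrm{loc}}$ for $\beta<H$ follows from the \emph{uniform-in-$t$} bound $\E|u(t,x)-u(t,x')|^2=2\int_{|k|>\kappa}(1-\cos 2\pi(x-x')\cdot k)|k|^{-(2H+d)}F(t,|k|)\,dk\le 2C(d,H)\big(\mathcal K_H(0)-\mathcal K_H(x-x')\big)\lesssim|x-x'|^{2H}$, the classical increment estimate for $\mathcal K_H$ when $H\in(0,1)$; hence $u(t)\to u_\infty$ in law, and the same estimate applied to the limiting covariance gives $\E|u_\infty(x)-u_\infty(x')|^2\lesssim|x-x'|^{2H}$, so $u_\infty$ has a.s.\ $\alpha$-H\"older paths for every $\alpha<H$.

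\textbf{Main obstacle.} The delicate part is Part (i): verifying that \eqref{eq:mainsol} satisfies the weak formulation requires a careful stochastic Fubini combined with the boundary integration-by-parts for the non-self-adjoint transport operator $A$ at the Dirichlet sphere $|k|=\kappa$, and the uniqueness argument is awkward precisely because $A^\ast$ transports \emph{towards} that sphere, so its flow does not preserve $X$ and one is forced to use high-frequency test data and a time-bootstrap. Once (i) and the closed-form covariance are available, (ii)--(iv) reduce to the Wiener isometry, dominated convergence, and the Gaussian Kolmogorov continuity theorem.
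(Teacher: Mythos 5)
Your proposal is correct and follows the same overall skeleton as the paper: take the explicit It\^o formula \eqref{eq:mainsol} as the candidate, verify the weak formulation via a stochastic Fubini reducing to the transport equation $(\pa_t+A)G=0$ for the kernel $G(t,x,y)=e^{-tA}[\tau_y\varphi](x)$, compute the covariance by the Wiener isometry, Plancherel in $y$ and the radial change of variables $k\mapsto\frac{|k|-cs}{|k|}k$ to recover $F(t,|k|)$, and then get all regularity statements from Gaussianity plus the Kolmogorov continuity theorem. Two points where you diverge are worth recording. For part (iii) the paper does not argue by tightness: it introduces the time-reversed process $v(t,x)=\int_0^t\int G(s,x,y)\,dW(s,y)$, which has the same law as $u(t,x)$ for each fixed $t$ and converges in mean square (pointwise in $x$) to $v_\infty$ because $G(\cdot,x,\cdot)\in L^2([0,\infty)_s\times\R^d_y)$; your route via finite-dimensional convergence plus tightness in $C^{0,\beta}_{\mathrm{loc}}$ (using the uniform-in-$t$ bound $0\le F(t,|k|)\le C(d,H)$) is heavier but delivers convergence in law in the H\"older topology, which is slightly stronger than what the paper states. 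For uniqueness, the paper simply plugs in the explicit backward-transported test function $\widehat{g_t}(s,k)=|k|^{-H-\frac12}\widehat{g_0}\bigl(\frac{|k|-c(s-t)}{|k|}k\bigr)$ and asserts it lies in $C^\infty((0,\infty),D(A^\ast)\cap\Sc)$; your observation that the adjoint flow pushes spectral support inward past $|k|=\kappa$, so that this function can leave $X$ unless $\widehat{g_0}$ is supported at distance at least $c(t-s)$ from the sphere, is a genuine subtlety that the paper glosses over, and your fix (high-frequency test data plus a bootstrap in time, using continuity of $t\mapsto w(t)$ in $(X\cap\Sc)'$) is a legitimate and more careful way to close that step.
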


In the case of non-positive $H$, one needs to be more careful since the solution converges to a very rough object. In this direction, we have the following result:

\begin{thm}\label{thm:main_H_neg}
Let $H\in [-d/2,0]$, and let the forcing $f$ as in \eqref{eq:rforcing-phy}. Then points (i) and (ii) in \Cref{thm:main_H_pos} still hold. Moreover, we have that:
\begin{itemize}
\item [\emph{(iii)}] As $t\rightarrow\infty$, the solution $u(t)$ converges in law %(and in $(X \cap \Sc (\R_x^d))'$) 
to a random Gaussian measure $u_{\infty}$ acting on $X \cap \Sc (\R_x^d)$ with zero-mean, i.e. for any $g\in X\cap\Sc (\R^d)$, $\langle u(t),g\rangle$ converges in law to a Gaussian random variable $\langle u_{\infty},g\rangle$ with $\E [ \langle u_{\infty} , g \rangle] = 0$.
\item [\emph{(iv)}]  We have the following asymptotic behavior:
\begin{equation}\label{corr-infty}
\begin{split}
\lim_{t\rightarrow\infty} \E [ \langle u(t), g_1 \rangle {\langle u(t), g_2 \rangle}] =  & \ \E [ \lim_{t\rightarrow\infty} \langle u(t), g_1 \rangle {\langle u(t), g_2 \rangle}] \\
=& \ 
C(d,H)
 \int_{\R_k^d}  \chi_{|k| >  \kappa} \, |k|^{-(2H+d)} \,\widehat{g_1}(k) \,\overline{\widehat{g_2}(k)}  \, dk\\
&  - \int_{\R_k^d}  \chi_{|k| >  \kappa} \, |k|^{-(2H+d)}\, \Psi_{d,H}(|k|)\, \widehat{g_1}(k) \,\overline{\widehat{g_2}(k)} \, dk.
\end{split}
\end{equation}
for any $g_1,g_2\in X\cap\Sc (\R^d)$, where $\Psi_{d,H}$ are given in Theorem \ref{heuristic-theorem}.
\end{itemize}
\end{thm}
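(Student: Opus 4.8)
The plan is to reduce everything to the explicit Wiener-integral representation \eqref{eq:mainsol}, to compute the finite-time two-point bilinear form \emph{exactly}, and then to pass to the limit; the case $H\in[-d/2,0]$ then differs from \Cref{thm:main_H_pos} only in that the limiting object is a genuine distribution rather than a function.

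\emph{Points (i) and (ii).} These are proved exactly as in \Cref{thm:main_H_pos}, and none of the arguments there uses the sign of $H$. One checks that \eqref{eq:mainsol} defines an element of $C((0,\infty),(X\cap\Sc(\R_x^d))')$ — the required $L^2_{s,y}$ bound on the deterministic kernel $(s,y)\mapsto\langle e^{-(t-s)A}\tau_y\varphi,g\rangle$ is precisely the finiteness of $\E[|\langle u(t),g\rangle|^2]$ obtained below — then that \eqref{eq:mainsol} solves the weak formulation \eqref{eq:main_weak} via a stochastic Fubini, the identity $\pa_t e^{-tA}=-Ae^{-tA}$, and the adjoint relation of \Cref{prop:adjoint}, and finally uniqueness. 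For (ii), smoothness of $x\mapsto u(t,x)$ at fixed $t>0$ follows because $\E[|\langle u(t),g\rangle|^2]=\int_{\R_k^d}\chi_{|k|>\kappa}|k|^{-(2H+d)}F(t,|k|)|\widehat g(k)|^2\,dk$ with the spectral weight $|k|^{-(2H+d)}F(t,|k|)$ decaying faster than any polynomial in $|k|$ (since $\psi=\widehat\varphi^{\,2}\in\Sc$, see \Cref{rk-F}); and $\alpha$-Hölder continuity in $t$ for $\alpha<1/2$ follows from the It\^o isometry bound $\E[|\langle u(t+h)-u(t),g\rangle|^2]\lesssim h$ together with Kolmogorov's continuity theorem. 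The only inputs needed are $2H+d\ge0$ and $H+d-\tfrac12\ge0$, both of which hold for every $H\in[-d/2,0]$ and $d\ge1$, so the weights in \eqref{eq:semi-group} remain bounded on the region $\{|k|>ct+\kappa\}$.

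\emph{Finite-time correlations.} For real $g_1,g_2\in X\cap\Sc(\R_x^d)$ the Wiener isometry applied to \eqref{eq:mainsol} gives
\[
\E[\langle u(t),g_1\rangle\overline{\langle u(t),g_2\rangle}]=\int_0^t\!\!\int_{\R_y^d}\langle e^{-(t-s)A}\tau_y\varphi,g_1\rangle\,\overline{\langle e^{-(t-s)A}\tau_y\varphi,g_2\rangle}\,dy\,ds.
\]
Using $\widehat{\tau_y\varphi}(k)=e^{-2\pi i y\cdot k}\widehat\varphi(k)$ together with the radiality of $\widehat\varphi$, one recognises, for each fixed $s<t$, the map $y\mapsto\langle e^{-(t-s)A}\tau_y\varphi,g_j\rangle$ as the Fourier transform in $y$ of an explicit $L^2(\R^d)$ function: the pull-back of $\overline{\widehat{g_j}}$, times a bounded weight and a Jacobian factor, by the radial dilation $k\mapsto\tfrac{|k|-c(t-s)}{|k|}k$, which is a diffeomorphism of $\{|k|>c(t-s)+\kappa\}$ onto $\{|\eta|>\kappa\}$. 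Plancherel in $y$, followed by the inverse change of variables, then collapses the $y$-integral together with one of the two remaining $k$-integrals; using the relation $\widehat{C_f}(k)=\psi(|k|)=\widehat\varphi(k)^2$ one is left with
\[
\E[\langle u(t),g_1\rangle\overline{\langle u(t),g_2\rangle}]=\int_{\R_k^d}\chi_{|k|>\kappa}\,|k|^{-(2H+d)}\,F(t,|k|)\,\widehat{g_1}(k)\,\overline{\widehat{g_2}(k)}\,dk,
\]
with $F(t,|k|)$ exactly the function of \Cref{rk-F}. This is the rigorous replacement for the Dirac-mass manipulations in the heuristic proof of \Cref{heuristic-theorem}: the hypothesis $\kappa>0$ is precisely what keeps the dilation a diffeomorphism with Jacobian bounded away from $0$ and $\infty$ on the region of integration, so that every change of variables is legitimate.

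\emph{Limit and convergence in law (points (iii) and (iv)).} By \Cref{rk-F}, for fixed $k$ with $|k|>\kappa$ the function $t\mapsto F(t,|k|)$ is non-decreasing, bounded by $C(d,H)=\Psi_{d,H}(\kappa)$, and stationary, equal to $\Psi_{d,H}(\kappa)-\Psi_{d,H}(|k|)$, once $t\ge(|k|-\kappa)/c$. Since $2H+d\ge0$ yields $|k|^{-(2H+d)}\le\kappa^{-(2H+d)}$ on $\{|k|\ge\kappa\}$, the integrand above is dominated by $\kappa^{-(2H+d)}C(d,H)\,|\widehat{g_1}|\,|\widehat{g_2}|\in L^1$, so dominated convergence gives \eqref{corr-infty} with limiting bilinear form $Q(g_1,g_2):=\int_{\R_k^d}\chi_{|k|>\kappa}|k|^{-(2H+d)}(C(d,H)-\Psi_{d,H}(|k|))\widehat{g_1}(k)\,\overline{\widehat{g_2}(k)}\,dk$. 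The interchange $\lim_t\E[\cdots]=\E[\lim_t\cdots]$ in \eqref{corr-infty} follows from the mean-square convergence $\langle u(t),g\rangle\to\langle u_\infty,g\rangle$ — obtained as in Steps 1--2 of the proof of \Cref{thm:1d_cmain} after replacing $\langle u(t),g\rangle$ by the equal-in-law rescaled Wiener integral with deterministic kernel $\langle e^{-sA}\tau_y\varphi,g\rangle$, which lies in $L^2((0,\infty)\times\R_y^d)$ by the bound just obtained — together with polarization and Cauchy--Schwarz, which upgrade it to $L^1$-convergence of the product. Finally $Q$ is symmetric, continuous (bounded by $\kappa^{-(2H+d)}C(d,H)\norm{g_1}_{L^2}\norm{g_2}_{L^2}$), and positive semidefinite, since $C(d,H)-\Psi_{d,H}(|k|)=\tfrac1c\int_\kappa^{|k|}s^{2H+d}\psi(s)\,ds\ge0$ for $|k|>\kappa$; hence there is a centered Gaussian random measure $u_\infty$ on $X\cap\Sc(\R_x^d)$ with covariance $Q$ and $\E[\langle u_\infty,g\rangle]=0$, and since every Gaussian vector $(\langle u(t),g_j\rangle)_{j=1}^n$ has covariance converging entrywise to $(Q(g_i,g_j))_{i,j}$, the finite-dimensional laws of $u(t)$ converge to those of $u_\infty$; this is convergence in law of $u(t)$ to $u_\infty$.

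\emph{Main obstacle.} The one genuinely delicate point is the one isolated in items (4)--(5) of \Cref{sec-heuristic}: turning the formal product of Dirac masses with overlapping singular supports into an honest identity. The plan circumvents it by never Fourier-transforming the rough solution — one first evaluates the \emph{deterministic} bilinear integrand $\langle e^{-(t-s)A}\tau_y\varphi,g_1\rangle\,\overline{\langle e^{-(t-s)A}\tau_y\varphi,g_2\rangle}$, a smooth and rapidly decaying function of $y$, and only afterwards applies the Wiener isometry and Plancherel, i.e. performs the $k$-integrals against the Schwartz functions $\widehat{g_j}$ before the $y$-integral; here $\kappa>0$ is essential. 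A second, conceptual, point specific to $H\in[-d/2,0]$ is that $u_\infty$ need not be a function (let alone Hölder continuous), so everything must be phrased in $(X\cap\Sc(\R_x^d))'$ and one must check that $Q$ extends continuously there, rather than attempting to give pointwise meaning to $\E[u_\infty(x_1)u_\infty(x_2)]$ as in \eqref{eq:correlations_infty}.
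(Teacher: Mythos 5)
Your proposal is correct and follows essentially the same route as the paper: Wiener isometry plus Plancherel in $y$ and the radial change of variables to obtain the spectral weight $|k|^{-(2H+d)}F(t,|k|)$ (the paper's \Cref{thm:correlations} and Step 1 of \Cref{thm:correlations_tested}), then mean-square convergence and polarization to pass to the limit (\Cref{thm:correlations_tested_infty}). The only substantive difference is that you take the limit by dominated convergence, which suffices for the statement, whereas the paper estimates the tail term $\chi_{|k|>ct+\kappa}$ explicitly against $H^{n,n}$ norms so as to also extract the convergence rate $(ct)^{-(2H+d)-2n}$ quoted in \Cref{rk:negative_H}.
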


Before we prove these theorems, let us make a final comment about the operator $A$. This operator can be regarded as an operator of degree 0 plus a bounded operator of degree $-1$ as  shown in the following proposition.
\begin{prop}
For all $u \in D(A)$ and $v \in D(A^\ast)$, the following equalities hold:
\begin{equation}\label{fibration}
A u (x)  =   \int_{\mathbb{R}_k^d / B_{\kappa} }  a(x,k)  e^{i k \cdot x} \, \widehat{u}(k) \, d k
\quad \text{and} \quad
A^\ast v(x) =  \int_{\mathbb{R}_k^d / B_{\kappa} } a^\ast(x,k) e^{i k \cdot x} \, \widehat{v}(k) d k .
\end{equation}
where
$$
a(x,k) :=-i \frac{k}{|k|} \cdot x + \frac{H+\frac{1}{2} }{|k|} 
\quad \text{and} \quad
 a^\ast(x,k) := i \frac{k}{|k|} \cdot x + \frac{H+d-\frac{1}{2}}{|k|} \cdotp
$$
For any multi-index $(\alpha,\beta)$ with $| \alpha | \geq 1$, one has
%$$
%|x| \leq |a(x,k)| \leq  |x| + \left(H+\frac{1}{2}\right) |k|^{-1} 
%\quad \text{and} \quad
%|x| \leq |a^\ast(x,k)| \leq |x| + \left(H+d-\frac{1}{2}\right) |k|^{-1} 
%$$
$$
|x| |k|^{-|\beta|} \lesssim |  \partial_k^\beta a(x,k)| + | \partial_k^\beta a^\ast(x,k)| \lesssim  |x| |k|^{-|\beta|} + |k|^{-1-|\beta|} 
$$
and
$$
| \partial_x^\alpha \partial_k^\beta a(x,k)| + | \partial_x^\alpha \partial_k^\beta a^\ast(x,k)| \lesssim  |k|^{-|\beta|},
$$
for any $(x,k) \in \mathbb{R}^{2d}$ such that $|k| \geq \kappa$. The implicit constant doesn't depend on $\kappa$, $x$ or $k$.
%\begin{equation}
%\begin{split}
%A u (x) & =   \int_{\mathbb{R}_k^d / B_{\kappa} }  \left[ -i \frac{k}{|k|} \cdot x + \frac{H+\frac{1}{2} }{|k|} \right]  e^{i k \cdot x} \, \widehat{u}(k) \, d k ,\\
%A^\ast v(x) & =  \int_{\mathbb{R}_k^d / B_{\kappa} } \left[  i \frac{k}{|k|} \cdot x + \frac{H+d-\frac{1}{2}}{|k|} \right]e^{i k \cdot x} \, \widehat{v}(k) d k .
%\end{split}
%\end{equation}
\end{prop}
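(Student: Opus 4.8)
The plan is to obtain both identities in \eqref{fibration} by an integration by parts in the frequency variable, and then to read the symbol estimates off the homogeneity of the two building blocks $k\mapsto k/|k|$ (homogeneous of degree $0$) and $k\mapsto 1/|k|$ (homogeneous of degree $-1$).

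First I would establish the representation formulas. Arguing for $u\in X\cap\Sc(\R_x^d)$ first and then extending by density (using that $X\cap\Sc(\R_x^d)$ is stable under $e^{-tA}$, as noted after \eqref{eq:semi-group}), I rewrite $\widehat{Au}$ by means of \eqref{vector-identities}, or equivalently expand $\vdiv_k\big(\tfrac{k}{|k|}\widehat u\big)=\sum_j\partial_{k_j}\big(\tfrac{k_j}{|k|}\widehat u\big)$, apply $\Fc^{-1}$, and integrate by parts in each $k_j$ over the region $\{|k|>\kappa\}$. The boundary term at infinity vanishes because $\widehat u$ is Schwartz, and the boundary term on the sphere $\{|k|=\kappa\}$ vanishes because $\widehat u|_{|k|=\kappa}=0$ for $u\in D(A)$; what remains is exactly $\int_{\R_k^d/B_\kappa}a(x,k)\,e^{ik\cdot x}\,\widehat u(k)\,dk$ with $a$ as stated (up to the normalization of the Fourier transform). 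For $A^\ast$ I run the same computation starting from the expression given in \Cref{prop:adjoint}: the divergence term now carries the opposite sign, which turns $-i\tfrac{k}{|k|}\cdot x$ into $+i\tfrac{k}{|k|}\cdot x$, and the order $-1$ coefficient becomes $H+d-\tfrac12$. The delicate point here is that $D(A^\ast)$ imposes no trace condition at $|k|=\kappa$, so the integration by parts now leaves a boundary term on that sphere; this term is genuinely of lower order and must be handled separately, e.g.\ by first working on the dense subclass where $\widehat v|_{|k|=\kappa}=0$.

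Next I would prove the symbol bounds by splitting $a=a_0+a_{-1}$ and $a^\ast=-a_0+a_{-1}^\ast$, where $a_0(x,k):=-i\tfrac{k}{|k|}\cdot x$, $a_{-1}(k):=\tfrac{H+1/2}{|k|}$ and $a_{-1}^\ast(k):=\tfrac{H+d-1/2}{|k|}$. For fixed $x$, the map $k\mapsto a_0(x,k)$ is homogeneous of degree $0$ and linear in $x$, so $\partial_k^\beta a_0(x,\cdot)$ is homogeneous of degree $-|\beta|$: it equals $|k|^{-|\beta|}$ times a fixed smooth function on $\mathbb{S}^{d-1}$ that is linear in $x$, whence $|\partial_k^\beta a_0(x,k)|\lesssim|x|\,|k|^{-|\beta|}$ with a constant depending only on $d$ and $\beta$ (a supremum over the unit sphere). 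Similarly $a_{-1}$ and $a_{-1}^\ast$ are homogeneous of degree $-1$, so $|\partial_k^\beta a_{-1}(k)|+|\partial_k^\beta a_{-1}^\ast(k)|\lesssim_{d,H,\beta}|k|^{-1-|\beta|}$; summing the two estimates gives the upper bound. Since none of these coefficients involves $\kappa$, the constants are automatically $\kappa$-independent — the restriction $|k|\ge\kappa$ only cuts out a region, it does not enter the bound. The two-sided character (the lower bound) records that the transport part $a_0$ survives, up to sign, in both $a$ and $a^\ast$ and is not cancelled by the order $-1$ remainder; making this quantitative — which is where one sees that the operator has degree exactly $0$ — is best done by recombining the frequency derivatives of $a$ and $a^\ast$ so as to isolate $a_0$ and absorb the order $-1$ contributions, and this is the step I expect to require the most care. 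Finally, the $x$-derivative estimate is immediate from the splitting: $a_0$ is linear in $x$, so $\partial_x^\alpha a_0=0$ for $|\alpha|\ge2$ while $\partial_{x_j}a_0(x,k)=-i\,k_j/|k|$ is homogeneous of degree $0$ in $k$, and $\partial_x^\alpha a_{-1}=\partial_x^\alpha a_{-1}^\ast=0$ for $|\alpha|\ge1$; hence for $|\alpha|\ge1$ each of $\partial_x^\alpha\partial_k^\beta a$ and $\partial_x^\alpha\partial_k^\beta a^\ast$ is either $0$ or homogeneous of degree $-|\beta|$ in $k$, giving $|\partial_x^\alpha\partial_k^\beta a(x,k)|+|\partial_x^\alpha\partial_k^\beta a^\ast(x,k)|\lesssim_{d,\beta}|k|^{-|\beta|}$ on $\{|k|\ge\kappa\}$, uniformly in $\kappa$. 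The main obstacle is therefore not the (routine) differential calculus but the accounting for the frequency-boundary term in the $A^\ast$ identity together with the clean packaging of the lower bound.
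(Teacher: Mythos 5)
Your derivation of the two identities in \eqref{fibration} is the same as the paper's: insert the vector-calculus identity \eqref{vector-identities}, integrate by parts in $k$ over $\{|k|>\kappa\}$, and use $\vdiv_k\big(\tfrac{k}{|k|}e^{ik\cdot x}\big)=\tfrac{ik\cdot x+(d-1)}{|k|}e^{ik\cdot x}$ (the paper's \eqref{useful-symbol2}); the trace condition $\widehat u|_{|k|=\kappa}=0$ kills the inner boundary term for $u\in D(A)$. The paper writes this out only for $A$ and disposes of $A^\ast$ with ``similar computations hold''; you are right to flag that this is not innocent. Since $D(A^\ast)$ carries no trace condition, the same integration by parts for $v\in D(A^\ast)$ leaves the surface term
\begin{equation*}
c\,\kappa^{d-1}\int_{\mathbb{S}^{d-1}}\widehat v(\kappa\theta)\,e^{i\kappa\theta\cdot x}\,d\sigma(\theta),
\end{equation*}
which is a nonzero smooth function of $x$ whenever $\widehat v|_{|k|=\kappa}\neq 0$. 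So the second identity in \eqref{fibration} really only holds on the trace-zero subspace. Your proposed repair, however, does not work: the set $\{v\in D(A^\ast):\widehat v|_{|k|=\kappa}=0\}$ is a \emph{closed} proper subspace for the graph norm (the trace is continuous, as used in the definition of $D(A)$), hence not dense in $D(A^\ast)$; and even if it were, a density argument cannot erase a discrepancy that is itself continuous. The honest conclusion is that the $A^\ast$ formula needs either the extra boundary term or the restriction to trace-zero $v$ — you located the defect but did not close it.

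On the symbol bounds, which the paper does not prove at all: your homogeneity argument gives the upper bound and the mixed $\partial_x^\alpha\partial_k^\beta$ bound cleanly, and the $\kappa$-independence of the constants is automatic as you say. But the lower bound cannot be salvaged by ``isolating $a_0$ and absorbing the order $-1$ part'': as a pointwise inequality it is simply false. Take $\beta=0$ and $x\perp k$; then $a=\tfrac{H+1/2}{|k|}$ and $a^\ast=\tfrac{H+d-1/2}{|k|}$, so the right-hand side is $O(|k|^{-1})\le O(\kappa^{-1})$ while the left-hand side is $|x|$, which can be arbitrarily large. The cancellation $\partial_k^\beta a+\partial_k^\beta a^\ast=(2H+d)\,\partial_k^\beta|k|^{-1}$ shows the $x$-dependent part can only be recovered from the \emph{difference} $a^\ast-a=2i\tfrac{k\cdot x}{|k|}+\tfrac{d-1}{|k|}$, and $\big|\partial_k^\beta\tfrac{k\cdot x}{|k|}\big|$ degenerates on a codimension-one set of directions of $k$ for each fixed $x$. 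The statement is only true after taking a supremum over $k$ (e.g.\ $\sup_{|k|\ge\kappa}|k|^{|\beta|}\big(|\partial_k^\beta a|+|\partial_k^\beta a^\ast|\big)\gtrsim|x|$, attained at $k\parallel x$), which is all that is needed for the paper's purpose of showing $a$ is not a standard symbol of degree $0$. You correctly identified this as the delicate step, but the step as you plan it would fail; the fix is to reinterpret the inequality, not to refine the absorption.
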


\begin{proof}
Let $u \in D(A)$. Using the vector calculus identities \eqref{vector-identities}, one gets
$$
A u = \int_{\mathbb{R}_k^d / B_{\kappa} } \left[ \frac{k}{|k|} \cdot \nabla_k  \widehat{u} + \frac{H+d-\frac{1}{2}}{|k|} \widehat{u} \right]  e^{i k \cdot x} dk.
$$
Integrating by parts
$$
A u = - \int_{\mathbb{R}_k^d / B_{\kappa} } \mbox{div}_k \left( \frac{k}{| k |}  e^{i k \cdot x} \right)  \widehat{u} + \int_{\mathbb{R}_k^d / B_{\kappa} } \frac{H+d-\frac{1}{2}}{|k|} \widehat{u}  e^{i k \cdot x} dk.
$$
As one can easily verify with a few vector calculus identities
\begin{equation}\label{useful-symbol2}
\mbox{div}_k \left( \frac{k}{| k |}  e^{i k \cdot x} \right) =\frac{ i k \cdot x + (d-1)}{|k|} e^{i k \cdot x}
\end{equation}
such that
$$
A u = - \int_{\mathbb{R}_k^d / B_{\kappa} } \frac{ i k \cdot x + (d-1)}{|k|} \widehat{u}  e^{i k \cdot x} dk + \int_{\mathbb{R}_k^d / B_{\kappa} } \frac{H+d-\frac{1}{2}}{|k|} \widehat{u}  e^{i k \cdot x} dk
$$
which gives the expression in \eqref{fibration}. Similar computations hold for $A^\ast$.
\end{proof}

Several recent works \cite{CSR,Carles-quasi-rec,Zworski} show that linear equations involving operators of degree 0 may be useful in showcasing behavior related to turbulence, such as loss of regularity at long time.
In \cite{CSR},  Yves Colin de Verdi{\`e}re and Laure Saint-Raymond in \cite{CSR} consider an equation $\partial_t u + i A u=f$ where~$f$ is an $L^2$ deterministic monochromatic forcing and $A$ is an homogeneous self-adjoint operator of degree 0 that satisfies some technical (but general) assumptions. The authors show that, as~$t \to \infty$, the solution resembles more and more the generalized eigenfunctions of the operator $A$, which do not have $L^2$ finite energy. The fact that $A$ is of order 0 is essential to ensure that its generalized eigenfunctions live in a space akin to $H^{-\frac{1}{2}-}$, which is less regular than $L^2$. R. Carles and C. Cheverry in \cite{Carles-quasi-rec} have also introduced an operator of degree 0 in the context of nuclear magnetic resonance. More precisely, they have shown that for some highly oscillatory source, at long times (i-e diffractive time), the solution produces constructive and destructive interferences which are interpreted as turbulent effects.

It is therefore interesting to compare our model with this existing literature. A first difference is that these other works are developed in deterministic setting. But even if one were to consider our model with a monochromatic deterministic source, it is hard to use similar techniques as those in \cite{CSR}. Firstly, because our operator satisfies $D(A) \subsetneq D(A^\ast)$ and it cannot thus be self-adjoint or skew-adjoint. As a result, one cannot use Mourre's commutator theory as in \cite{CSR} to prove a limiting amplitude principle which allows a nice spectral representation of the solution at infinite time. Secondly, the principal symbol of our operator $A$ (see \eqref{fibration}) grows with respect to $x$. This is not a standard pseudo-differential operator\footnote{A standard operator of degree $m$ is an operator whose symbol $a \in C^\infty(\mathbb{R}^{2d})$ statisfies $$|\partial_x^\alpha \partial_k^\beta a(x,k) | \lesssim (1+ |k|)^{m-|\beta|}$$ for all multi-index $\alpha$ and $\beta$.}  of degree 0. A more profound study of this special operator \eqref{def:A} is therefore interesting, and it is postponed for a future work. 
Nevertheless one can mention some conserved quantities in equation \eqref{eq:maineq-div-phy} when the source is null and initial datum is considered:
\begin{itemize}
\item conservation of volume for all $H$, i-e $$\pa_t \int_{\R_x^d} u(t,x)\, dx=0,$$
\item conservation of $L^2$-norm for $H=-d/2$, i-e $$\pa_t \int_{\R^d_x} |\widehat{u}(t,x)|^2\, dx=0,$$
\item conservation of trace at origin $H=-1/2$, i-e $$\pa_t u |_{x=0}=0.$$
\end{itemize}

\subsection{Well-posedness}

 \subsubsection{Uniqueness}
 
 Suppose that we have two solutions which live a.s. in $L^1_{\mathrm{loc}} ( [0,\infty)_t \times \R^d)$ satisfying \eqref{eq:main_weak}. When tested against test functions whose Fourier transform is supported in $B(0,\kappa)$, both solutions are null and thus they agree. When tested against $g \in C^\infty( (0,\infty), D(A^\ast) \cap \Sc (\R_x^d))$,  their difference $v$ a.s. satisfies
\begin{equation}\label{eq:main_weak_uniq}
 \int_{\R_x^d} v(t,x)g(t,x) dx + \int_0^t \int_{\R_x^d} v(s,x) (-\pa_s + A^\ast) g(s,x) ds dx 
 = 0 \qquad \mbox{for all } t>0.
\end{equation}
If $g_0 \in  D(A^\ast) \cap \Sc (\R_x^d)) $ then 
$$
g_t (s,\cdot) = \mathcal{F}^{-1} \left[ |k|^{-H-\frac{1}{2}} \widehat{g_0} \left( \frac{|k|-c(s-t)}{|k|} k \right) \right]\in C^\infty( (0,\infty), D(A^\ast) \cap \Sc (\R_x^d))
$$
satisfies
$$
(-\pa_s+ A^\ast) g_t = 0 \quad \text{and} \quad g_t |_{s=t}=g_0.
$$
Using this choice of test function, \eqref{eq:main_weak_uniq} yields that a.s. for all $t>0$
$$
\int_{\R_x^d} v(t,x)g_0(x) dx =0.
$$
As a result, almost surely, for all $t$, $v(t,\cdot)=0$ almost everywhere. More precisely, for any two solutions $u_1, u_2$ of our problem, we have:
$$
\mathbb{P} \left[ \forall t>0,\ u_1 (t,x)=u_2(t,x) \ \mbox{almost everywhere in}\ x\right]=1.
$$

\begin{rk} Note that one could obtain a stronger version of uniqueness by slightly changing the definition of weak solution. For instance, one could add the condition that $\E |u(t,x)|^2\in L^1_{\mathrm{loc}}((0,\infty)_t\times\R^d_x)$. Under this new definition one could show that any two solutions $u_1$ and $u_2$ must satisfy
\[
\E |u_1 (t,x)-u_2 (t,x)|=0
\]
$(t,x)$-almost everywhere. This follows directly from the above argument to prove uniqueness together with the fact that the variance is a.e. finite.
\end{rk}

\subsubsection{Existence of a weak solution}

Let us start by giving an expression for $e^{-tA}[\tau_y \varphi](x)$, which will enable us to prove that \eqref{eq:mainsol} does provide a weak solution to \eqref{eq:maineq-div-phy}.

\begin{prop}\label{reg-G}
For all $\varphi\in  X \cap \Sc (\R_x^d)$, the function $G(t,x,y):=  e^{- t A} [\tau_y\varphi](x)$ given by
\begin{equation}\label{eq:green}
\begin{split}
G(t,x,y) & =  \int_{\R^d_k} \chi_{|k| > ct + \kappa} \,  \left( \frac{|k|-ct}{|k|} \right)^{H+d-\frac{1}{2}} \widehat{\varphi} \left( \frac{|k|-ct}{|k|} k \right) e^{- i 2 \pi \left( \frac{|k|-ct}{|k|} \right) k \cdot y +i 2 \pi k\cdot x}\, dk\\
& = \int_{\R^d_k} \left( \frac{|k|}{|k|+ct}\right)^{H+\frac{1}{2}} \widehat{\varphi} (k)\, e^{i 2 \pi \left( \frac{|k|+ct}{|k|} \right) k \cdot x -i 2 \pi k\cdot y}\, dk.
\end{split}
\end{equation}
is a smooth function in all variables. Moreover:
\begin{itemize}
\item For each multi-index $\beta$ and each fixed $y\in\R^d$, $\pa_y^{\beta}G(\cdot,\cdot,y)\in C^1( [0,\infty)_t,  X \cap \Sc( \mathbb{R}_{x}^d))$. Additionally, any Schwartz seminorm of $\pa_y^{\beta}G(\cdot,\cdot,y)$ (with respect to $x$) is locally uniform in $y$.
\item For each $x\in\R^d$, $\pa_x^{\beta} G(\cdot,x,\cdot)\in C^1([0,\infty)_t,\Sc( \mathbb{R}_{y}^d))$. Additionally, any Schwartz seminorm of $\pa_x^{\beta}G(\cdot,x,\cdot)$ (with respect to $y$) is locally uniform in $x$.
\end{itemize}
Finally, its Fourier transform in $y$ is given by
\[
(\mathcal{F}_2 G)(t,x,k)= \left( \frac{|k|}{|k|+ct}\right)^{H+\frac{1}{2}}\, \overline{\widehat{\varphi} (k)}\, e^{-i 2 \pi \left( \frac{|k|+ct}{|k|} \right) k \cdot x}.
\]
\end{prop}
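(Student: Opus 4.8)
The plan is to derive the explicit formula \eqref{eq:green} first, and then read off every regularity statement directly from it.

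\emph{Step 1 (the formula).} Since $\widehat{\tau_y\varphi}(k)=e^{-2\pi i k\cdot y}\widehat\varphi(k)$, substituting $u_0=\tau_y\varphi$ into the definition \eqref{eq:semi-group} of $e^{-tA}$ and writing the inverse Fourier transform as an integral gives at once the first line of \eqref{eq:green}. For the second line I would perform the change of variables $k\mapsto k'=\tfrac{|k|-ct}{|k|}k$, which in polar coordinates $k=|k|\,\omega$ reads $|k'|=|k|-ct$, $k'/|k'|=\omega$, so that $k=\tfrac{|k'|+ct}{|k'|}k'$ and $dk=\bigl(\tfrac{|k'|+ct}{|k'|}\bigr)^{d-1}dk'$; the constraint $|k|>ct+\kappa$ becomes $|k'|>\kappa$, which is anyway forced by $\widehat\varphi(k')=0$ on $B(0,\kappa)$. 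Collecting the powers of $\tfrac{|k'|}{|k'|+ct}$ — namely $H+d-\tfrac12$ from the semigroup symbol and $-(d-1)$ from the Jacobian — produces the exponent $H+\tfrac12$, giving the second line. Throughout, the point I will use repeatedly is that $\widehat\varphi$, being smooth and vanishing on the open ball $B(0,\kappa)$, vanishes there together with all its derivatives; hence the characteristic functions are harmless, and the factor $\bigl(\tfrac{|k|}{|k|+ct}\bigr)^{H+1/2}$ (which is singular at $k=0$ when $H+\tfrac12<0$) is multiplied by something flat near $k=0$, so the integrand is smooth and Schwartz in $k$, with seminorms locally uniform in $t\in[0,\infty)$ (in fact the integrand extends smoothly to $t>-\kappa/c$).

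\emph{Step 2 (joint smoothness).} From the second line of \eqref{eq:green} the integrand is $C^\infty$ in $(t,x,y,k)$ on $\{|k|\ge\kappa\}$, and any derivative of order $\le m$ in $(t,x,y)$ is bounded, uniformly for $(t,x,y)$ in a fixed compact set, by $C_{m,K}\,(1+|k|)^{m}|\widehat\varphi(k)|$ — each $\partial_x$ produces a factor $\lesssim|k|+ct$, each $\partial_y$ a factor $\lesssim|k|$, each $\partial_t$ bounded factors together with a factor $\lesssim|x|$. This dominating function is integrable, so differentiation under the integral sign is legitimate and $G\in C^\infty(\R^{2d+1})$.

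\emph{Step 3 (the function-space claims).} For $\partial_y^\beta G(\cdot,\cdot,y)$ I would return to the first line of \eqref{eq:green}, which exhibits $\mathcal F_x\bigl[\partial_y^\beta G(t,\cdot,y)\bigr](\xi)$ as $\chi_{|\xi|>ct+\kappa}$ times $\widehat\varphi$ (multiplied by a polynomial in $\xi$ and a unimodular phase in $y$) composed with the bounded-Jacobian map $\xi\mapsto\tfrac{|\xi|-ct}{|\xi|}\xi$. Because $\widehat\varphi$ is flat on $\overline{B(0,\kappa)}$ and $|\xi|=\bigl|\tfrac{|\xi|-ct}{|\xi|}\xi\bigr|+ct$, this is a Schwartz function of $\xi$ supported in $\{|\xi|\ge ct+\kappa\}\subseteq\{|\xi|\ge\kappa\}$, with Schwartz seminorms locally uniform in $(t,y)$, and it depends $C^1$ (indeed $C^\infty$) on $t$ with values in $\Sc(\R_\xi^d)$ — the only point here is that the $t$-derivative of the characteristic function is killed by the flatness of $\widehat\varphi$ at $|k|=\kappa$. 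Since $\mathcal F_x$ is a topological isomorphism of $\Sc$ carrying $X\cap\Sc$ onto the closed subspace of Schwartz functions vanishing on $B(0,\kappa)$, this yields $\partial_y^\beta G(\cdot,\cdot,y)\in C^1([0,\infty)_t,X\cap\Sc(\R_x^d))$ with seminorms locally uniform in $y$. The statement for $\partial_x^\beta G(\cdot,x,\cdot)$ is proved the same way from the transform in $y$: the second line of \eqref{eq:green} shows $G(t,x,\cdot)$ is the $y$-Fourier transform of $\Phi_{t,x}(k):=\bigl(\tfrac{|k|}{|k|+ct}\bigr)^{H+1/2}\widehat\varphi(k)\,e^{2\pi i\frac{|k|+ct}{|k|}k\cdot x}$, a Schwartz function of $k$ supported in $\{|k|\ge\kappa\}$; differentiating in $x$ multiplies $\Phi_{t,x}$ by a polynomial in $k$ whose coefficients are locally bounded in $(t,x)$, so $\partial_x^\beta G(t,x,\cdot)$ is the Fourier transform of a Schwartz function, hence Schwartz in $y$ with seminorms locally uniform in $(t,x)$ and $C^1$ in $t$. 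Finally, reading the same identity as $G(t,x,\cdot)=\widehat{\Phi_{t,x}}$ (Fourier transform in the variable $k\mapsto y$) and using $\widehat\varphi(-k)=\overline{\widehat\varphi(k)}$, valid because $\varphi$ is real, gives $(\mathcal F_2 G)(t,x,k)=\Phi_{t,x}(-k)=\bigl(\tfrac{|k|}{|k|+ct}\bigr)^{H+1/2}\overline{\widehat\varphi(k)}\,e^{-2\pi i\frac{|k|+ct}{|k|}k\cdot x}$, the claimed expression.

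\emph{Main obstacle.} None of the estimates is hard; the delicate part is bookkeeping. One must make the change of variables of Step 1 rigorous and, above all, track the interplay between the negative powers of $|k|$, the characteristic functions at $|k|=\kappa$ and $|k|=ct+\kappa$, and the flatness of $\widehat\varphi$ on $\overline{B(0,\kappa)}$ — it is this flatness alone that ensures the integrand, its Fourier transforms in $x$ and $y$, and all of their $t$-derivatives are genuinely Schwartz (not merely smooth away from small $|k|$) with the required local uniformity.
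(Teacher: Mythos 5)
Your proposal is correct and follows essentially the same route as the paper: derive \eqref{eq:green} from \eqref{eq:semi-group} via the polar change of variables $k\mapsto\tfrac{|k|-ct}{|k|}k$, obtain smoothness by differentiating under the integral, and get the Schwartz/decay claims by recognizing $G$ (in each variable) as the Fourier transform of a Schwartz function of $k$ supported in $\{|k|\ge\kappa\}$ — your invocation of $\mathcal F:\Sc\to\Sc$ is exactly the integration-by-parts-in-$k$ argument the paper carries out by hand, including the observation that the $x$-dependence of the phase costs only locally uniform powers of $x$. Your explicit tracking of the flatness of $\widehat\varphi$ on $\overline{B(0,\kappa)}$ (which neutralizes the characteristic functions and their $t$-derivatives) is a point the paper leaves largely implicit, and is a welcome addition rather than a deviation.
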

\begin{proof}
The expression \eqref{eq:green} comes from the definition of $e^{-tA}$ in \eqref{eq:semi-group}, and the change of variables~$k\mapsto (|k|+ct)/|k| k$. The second expression directly yields the formula for $(\mathcal{F}_2 G)(t,x,k)$. 

The fact that $G$ is smooth in all three variables follows easily from the integral expression and the fact that we can interchange differentiation and the integral. The latter step will be justified next.
 
Let us show that for any multi-index $\beta$ and any fixed $x$, $\pa_x^{\beta} G(\cdot,x,\cdot)\in C^1([0,\infty)_t,\Sc( \mathbb{R}_{y}^d))$ (the case of fixed $y$ follows a similar proof). Note that any derivative $\pa_y^n \pa_x^{\beta} G(t,x,y)$ results in a factor of $(-i 2 \pi k)^n$, which can be absorbed by $\widehat{\varphi}(k)$, and thus smoothness is no problem. Let us focus on decay. For $|y|\leq 1$, note that
\[
|\pa_x^{\beta} G(t,x,y)| \leq (2 \pi)^{|\beta|} \int_{\R^d_k} \left( \frac{|k|+ct}{|k|} \right)^{|\beta|} \, |\widehat{\varphi} (k)|\, dk.
\]
Recall that the support of $\varphi$ guarantees that there are not problems of integration around $k=0$.

For $|y|\geq 1$ and for any $j\in \{1,\ldots, d\}$,
\[
\begin{split}
\Big | (-iy_j)^n \pa_x^{\beta} G(t,x,y)\Big | & = \Big | \int_{\R^d_k} \left( \frac{|k|}{|k|+ct}\right)^{H+\frac{1}{2}-|\beta|} \widehat{\varphi} (k)\, e^{i 2 \pi \left( \frac{|k|+ct}{|k|} \right) k \cdot x} \pa_{k_j}^n e^{-i 2 \pi k\cdot y}\, dk\Big |\\
& \leq (2 \pi)^{|\beta|} \int_{\R^d_k} \Big | \pa_{k_j}^n \left[\left( \frac{|k|}{|k|+ct}\right)^{H+\frac{1}{2}-|\beta|} \widehat{\varphi} (k)\, e^{i 2 \pi \left( \frac{|k|+ct}{|k|} \right) k \cdot x}\right]\,\Big |\, dk
\end{split}
\]
The integrand of the last expression can be shown to be absolutely integrable on the support of $\widehat{\varphi}$.  Note, however, that one needs to pay some powers of $x$ (which is fixed) to control this term. The implicit constant will therefore depend on such powers, but it is locally uniform in $x$ (i.e. the same constant for all $x$ in a compact set).

This concludes the proof that for each fixed $y\in\R^d$, $G(\cdot,\cdot,y)\in C( [0,\infty)_t,  X \cap \Sc( \mathbb{R}_{x}^d))$.
In order to improve this space to $C^1( [0,\infty)_t,  X \cap \Sc( \mathbb{R}_{x}^d))$, it suffices to differentiate $G$ in time and repeat the above procedure.  

An analogous argument based on integration by parts shows that for each fixed $y\in\R^d$ and each multi-index $\beta$, $\pa_y^{\beta} G(\cdot,\cdot,y)\in C^1( [0,\infty)_t,  X \cap \Sc( \mathbb{R}_{x}^d))$.
\end{proof}

Using these nice properties of $G$, one obtains the following corollary.

\begin{cor} The function $u(t,x)$ defined in \eqref{eq:mainsol} is a well-defined 
Gaussian process. For any $0<\alpha<1/2$, it has a.s.\ $\alpha$-Holder continuous paths with respect to $t$  and a.s.\ smooth paths with respect to $x$. 

In particular, for any multi-index $\beta$, any $t_1,t_2>0$ and any $x_1,x_2\in\R^d$ there exists a (locally uniform) constant $C>0$ such that
\begin{equation}\label{eq:Holder_finite_time}
\E |\pa^{\beta}_x u(t_1,x_1)-\pa^{\beta}_x u(t_2,x_2)|^2 \leq C\left( |t_1-t_2| + |x_1-x_2|^2\right).
\end{equation}
\end{cor}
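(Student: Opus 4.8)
The plan is to recognize the function in \eqref{eq:mainsol} as a Wiener integral of a deterministic, well-behaved kernel and then run a Kolmogorov-type argument. Concretely, write $u(t,x)=\int_0^t\int_{\R^d_y}G(t-s,x,y)\,dW(s,y)$ with $G(t,x,y)=e^{-tA}[\tau_y\varphi](x)$ and $\varphi\in X\cap\Sc(\R^d_x)$ as in \eqref{eq:rforcing-phy}. By \Cref{reg-G}, for each fixed $(t,x)$ the map $s\mapsto G(t-s,x,\cdot)$ is continuous from $[0,t]$ into $\Sc(\R^d_y)$, so $\int_0^t\|G(t-s,x,\cdot)\|_{L^2_y}^2\,ds<\infty$; hence $u(t,x)$ is a well-defined centred Gaussian variable, and since every finite linear combination $\sum_i\lambda_i u(t_i,x_i)$ is again such a Wiener integral, $u$ is a Gaussian process. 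To differentiate in $x$, I would note that the difference quotients $h^{-1}\big(G(t-s,x+he_j,y)-G(t-s,x,y)\big)$ converge to $\pa_{x_j}G(t-s,x,y)$ in $L^2((0,t)_s\times\R^d_y)$ — immediate from the smoothness of $G$ and the local-in-$x$ uniformity of its Schwartz seminorms in $y$ from \Cref{reg-G} — so the associated Wiener integrals converge in $L^2(\Omega)$; iterating, $\pa_x^\beta u(t,x)=\int_0^t\int_{\R^d_y}G_\beta(t-s,x,y)\,dW(s,y)$ with $G_\beta:=\pa_x^\beta G$, again Gaussian.

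For the quantitative bound \eqref{eq:Holder_finite_time}, assume $t_1\le t_2$, split the Wiener integral over $(0,t_1)$ and $(t_1,t_2)$, and use the Wiener isometry together with the independence of the two pieces:
\[
\E\big|\pa_x^\beta u(t_1,x_1)-\pa_x^\beta u(t_2,x_2)\big|^2 = \int_0^{t_1}\!\!\int_{\R^d_y}\big|G_\beta(t_1-s,x_1,y)-G_\beta(t_2-s,x_2,y)\big|^2 dy\,ds + \int_{t_1}^{t_2}\!\!\int_{\R^d_y}\big|G_\beta(t_2-s,x_2,y)\big|^2 dy\,ds.
\]
The last term is bounded by $|t_1-t_2|\sup_{\sigma\in[0,t_2]}\|G_\beta(\sigma,x_2,\cdot)\|_{L^2_y}^2$, which is finite and locally uniform in $x_2$ by \Cref{reg-G}. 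For the first term, I would insert $G_\beta(t_1-s,x_2,y)$ by the triangle inequality to separate the variation in $x$ from the variation in $t$: the $C^1$-dependence of $G_\beta$ on $x$, with Schwartz and locally uniform control in $y$, gives $\|G_\beta(\sigma,x_1,\cdot)-G_\beta(\sigma,x_2,\cdot)\|_{L^2_y}\lesssim|x_1-x_2|$, while the $C^1$-dependence on $t$ gives $\|G_\beta(\sigma_1,x_2,\cdot)-G_\beta(\sigma_2,x_2,\cdot)\|_{L^2_y}\lesssim|\sigma_1-\sigma_2|$; integrating over $s\in(0,t_1)$ produces $|x_1-x_2|^2$ and $|t_1-t_2|^2$. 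Summing and absorbing $|t_1-t_2|^2\lesssim|t_1-t_2|$ on bounded time intervals yields \eqref{eq:Holder_finite_time} with a locally uniform constant.

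Finally, I would pass from \eqref{eq:Holder_finite_time} to path regularity by Gaussianity plus Kolmogorov's continuity criterion. Since $\pa_x^\beta u(t_1,x_1)-\pa_x^\beta u(t_2,x_2)$ is Gaussian, \eqref{eq:Holder_finite_time} upgrades to $\E|\pa_x^\beta u(t_1,x_1)-\pa_x^\beta u(t_2,x_2)|^p\lesssim_p\big(|t_1-t_2|^{1/2}+|x_1-x_2|\big)^p$ for every $p<\infty$; Kolmogorov's theorem on $\R\times\R^d$, with the quasi-metric $|t_1-t_2|^{1/2}+|x_1-x_2|$, then gives a modification whose paths are $\gamma$-Hölder in that quasi-metric for any $\gamma<1-\tfrac{d+1}{p}$, and letting $p\to\infty$ produces $\alpha$-Hölder paths in $t$ for any $\alpha<1/2$ and Hölder paths of any exponent in $x$. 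Applying this to every multi-index $\beta$, intersecting the countably many full-probability events, and identifying the continuous limit of the difference quotients of $u$ (which converge in $L^2(\Omega)$ to the candidate derivative and, by the uniform modulus estimates above, locally uniformly in $(t,x)$ almost surely) with the classical partial derivatives, one concludes that almost surely $x\mapsto u(t,x)$ is $C^\infty$. The one genuinely delicate point is the first (not second) power of $|t_1-t_2|$ in \eqref{eq:Holder_finite_time}: it comes from the disjoint-time piece $\int_{t_1}^{t_2}$ of the Wiener integral, is sharp, and reflects the white-in-time nature of $f$ — this is exactly why the time regularity stops at $1/2$; everything else is routine once \Cref{reg-G} is available.
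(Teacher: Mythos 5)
Your proposal is correct and follows essentially the same route as the paper: the Wiener isometry applied to the kernel $G=e^{-tA}[\tau_y\varphi]$, the splitting of the time integral over $(0,t_1)$ and $(t_1,t_2)$ (the latter producing the sharp first power of $|t_1-t_2|$), the $C^1$-in-time and locally uniform Schwartz-in-$x$ control from \Cref{reg-G} to handle the remaining increments, and Gaussianity plus Kolmogorov's continuity criterion to pass to path regularity. The only cosmetic differences are that the paper estimates the $t$- and $x$-increments separately rather than via a single triangle-inequality insertion, and is terser about differentiating under the stochastic integral.
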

\begin{proof}
Let $G$ be as in \eqref{eq:green}, then $u$ can be written as
\[
u(t,x)=\int_0^t \int_{\R^d_y} G(t-s,x,y) \, dW(s,y)
\]
%=\int_0^t \int_{\R^d_y} G(s,x,y) \, d\widetilde{W}(s,y),
%where $d \widetilde{W}$ is another Gaussian white noise.
 By \Cref{reg-G}, for any fixed $(t,x)$, $G(\cdot,x,\cdot)\in C^1( [0,\infty)_s,  X \cap \Sc( \mathbb{R}_{x}^d))\subset L^2 ([0,t]_s \times \R^d_y)$ and therefore $u$ is a well-defined It\^o process with zero mean and variance 
\[
\E | u(t,x)|^2 = \int_0^t \int_{\R^d_y} |G(s,x,y)|^2 \, dy\, ds.
\]
Next let us study the regularity. Firstly, note that for any multi-index $\beta$
\[
\pa^{\beta}_x u(t,x) = \int_0^t \int_{\R^d_y} \pa^{\beta}_x G(t-s,x,y) \, dW(s,y).
\]
A simple calculation shows that for $t_1<t_2$, 
\begin{equation}\label{eq:change_t}
\begin{split}
\E |\pa^{\beta}_x u(t_1,x) - \pa^{\beta}_x u(t_2,x)|^2 = &\ \int_{0}^{t_1} \int_{\R^d_y} |\pa_x^{\beta} G(t_2-s,x,y)- \pa_x^{\beta} G(t_1-s,x,y)|^2 \, dy\, ds\\
& + \int_{t_1}^{t_2} \int_{\R^d_y} |\pa_x^{\beta} G(t_2-s,x,y)|^2 \, dy\, ds\\
\leq &\ \int_{0}^{t_1} \int_{\R^d_y} |\pa_x^{\beta} G(t_2-t_1+s,x,y)- \pa_x^{\beta} G(s,x,y)|^2 \, dy\, ds\\
& +  |t_1-t_2|\, \norm{\pa_x^{\beta} G(\cdot,x,\cdot)}_{L^{\infty}([0,t_2],L^2(\R^d_y))}^2\\
 \lesssim &\ t_1\, |t_1-t_2|^2\,  \norm{\pa_x^{\beta} \pa_s G(\cdot ,x,\cdot )}_{L^{\infty}([0,t_2],L^2(\R^d_y))}^2 \\
 & +  |t_1-t_2|\,   \norm{\pa_x^{\beta} G(\cdot,x,\cdot)}_{L^{\infty}([0,t_2],L^2(\R^d_y))}^2 
\end{split}
\end{equation}
using the fact that $\pa_x^{\beta} G(\cdot,x,\cdot)\in C^1( [0,\infty)_s,  X \cap \Sc( \mathbb{R}_{x}^d))$.

Similarly, for any $x_1,x_2$, we have that 
\begin{equation}\label{eq:change_x}
\begin{split}
\E |\pa^{\beta}_x u(t,x_1) - \pa^{\beta}_x u(t,x_2)|^2 & = \int_0^t \int_{\R^d_y} |G(s,x_1,y)-G(s,x_2,y)|^2 \, dy\, ds\\
& = \int_0^t \int_{\R^d_y} \Big | \int_0^1 (x_2-x_1)\cdot \nabla_x G(s, (1-\lambda)x_1 + \lambda x_2,y) \, d\lambda \Big |^2 \, dy\, ds\\
& \leq  |x_2-x_1|^2\, \int_0^t \int_{\R^d_y}  \int_0^1 |\nabla_x G(s, (1-\lambda)x_1 + \lambda x_2,y)|^2\, d\lambda \, dy\, ds\\
& \leq  |x_2-x_1|^2\, \int_0^1 \norm{\nabla_x G(s, (1-\lambda)x_1 + \lambda x_2,y)}^2_{L^2 ([0,t]\times \R^d_y)}\, d\lambda 
\end{split}
\end{equation}
The latter factor is finite since $\pa_x^{\beta} G(\cdot,x,\cdot)\in C^1([0,\infty)_t,\Sc( \mathbb{R}_{y}^d))$ locally uniformly in $x$.

One may use \eqref{eq:change_t} and \eqref{eq:change_x} to obtain \eqref{eq:Holder_finite_time}. Given that $\pa_x^{\beta} u(t,x)$ is a Gaussian process, \eqref{eq:Holder_finite_time} immediately implies that for any $m\in\N$,
\[
\E |\pa^{\beta}_x u(t_1,x_1)-\pa^{\beta}_x u(t_2,x_2)|^{2m} \leq C_m\left( |t_1-t_2|^m + |x_1-x_2|^{2m}\right).
\]
The Kolmogorov continuity theorem guarantees that $\pa_x^{\beta} u(t,x)$ is therefore a.s. $\alpha$-H\"older continuous in time for any $0<\alpha<(m-d)/(2m)$.
\end{proof}

We are ready to show that the function $u$ defined in \eqref{eq:mainsol} is indeed a weak solution to \eqref{eq:maineq-div-phy} in the sense given by \Cref{def:weak_sol}.

\begin{proof}[Proof that $u$ is a weak solution]

{\bf Step 1.} We start by checking \eqref{eq:main_weak2}. Choose any $g \in C^\infty( (0,\infty), \Sc (\R_x^d))$ with $\widehat{g}(t,k)=0$ for all $|k|>\kappa$. Then
\begin{multline}\label{eq:Fubini_weak}
\int_0^t \int_{\R_x^d} \left( \int_0^s \int_{\R^d_y} G(s-s',x,y) \, dW(s',y)\right) g(s,x) \, ds dx  =\\
\int_0^t \int_0^s \int_{\R^d_y} \left( \int_{\R_x^d}  G(s-s',x,y) \, g(s,x) \, dx\right) \, dW(s',y)\, ds ,
\end{multline}

In order to justify this, it suffices to show that $G(s-s',x,y) \, g(s,x)$ is absolutely integrable  with respect to the Lebesgue measure in $x\in\R^d$ and $s\in [0,t]$,  and with respect to $d\widetilde{W}(s',y))$ for $s'\in [0,t]$ and $y\in\R^d$.
%\blue{(R: Actually, would it sufficient to have $\int_x |G(s',x,y) \, g(s,x)|\in L^2_y$ thanks to It\^{o}?).} 

As proved in \Cref{reg-G}, one may show that $|G(s-s',x,y)|\leq C(s-s',x)\, \langle y\rangle^{-d-1}$ where $C(s-s',x)$ depends on $s-s'$ and $x$ in a polynomial way. To compensate for this growth in $x$, recall that for any $N$, $|g(s,x)|\lesssim_N \langle x\rangle^{-N}$. As a result, $|G(s-s',x,y) \, g(s,x)|\leq C(t,N)\, \langle y\rangle^{-d-1} \langle x\rangle^{-N}$. This guarantees that we may integrate in whichever order we prefer.

In order to check \eqref{eq:main_weak2}, it suffices to show  that for all $s'\in [0,s]$, $s\in [0,t]$ and $y\in\R^d$,
\begin{equation}\label{eq:test_zero}
\int_{\R_x^d} G(s-s',x,y) \, g(s,x) \, dx= 0 .
\end{equation}
This follows from the Plancherel theorem, which allows us to rewrite the integral in \eqref{eq:test_zero} as
\begin{multline*}
\int_{\R_k^d} \widehat{G}(s-s',k,y) \, \widehat{g}(s,k) \, dk=\\
 \int_{\R^d_k} \chi_{|k| > c(s-s') + \kappa} \,  \left( \frac{|k|-c(s-s')}{|k|} \right)^{H+d-\frac{1}{2}} \widehat{\varphi} \left( \frac{|k|-c(s-s')}{|k|} k \right) e^{- i 2 \pi \left( \frac{|k|-c(s-s')}{|k|} \right) k \cdot y} \, \widehat{g}(s,k) \, dk.
\end{multline*}
Note that the support of $\widehat{g}(s,\cdot)$ and that of $\widehat{\varphi}$ do not overlap. Therefore this integral is zero.

{\bf Step 2.} Next we check \eqref{eq:main_weak}. Fix some $g \in C^\infty( (0,\infty), D(A^\ast) \cap \Sc (\R_x^d))$. 
After exchanging the order of integration (which may be justified using \Cref{reg-G} as in Step 1), \eqref{eq:main_weak} is equivalent to:
\begin{multline}\label{eq:test_nonzero}
 \int_{\R_x^d} G(t-s',x,y) \, g(t,x) \, dx + \int_{s'}^t \int_{\R_x^d} G(s-s',x,y) \, (-\pa_s + A^\ast) g(s,x)\, dx\, ds \\
 = \int_{\R_x^d}  \varphi (x-y) g(s',x) dx
\end{multline}
for all $s'\in [0,t]$ and all $y\in\R^d$. Next we may use the pre-dual of $-\pa_s + A^\ast$ to rewrite the second term on the left-hand side. After using the fact that $G(0,x,y)=\varphi(x-y)$, we find that \eqref{eq:test_nonzero} holds if and only if:
\begin{equation}\label{eq:test_nonzero2}
\int_{s'}^t \int_{\R_x^d} (\pa_s + A) G(s-s',x,y) \, g(s,x)\, dx\, ds = 0 \qquad \forall\ s'\in [0,t],\ y\in\R^d.
\end{equation}
Therefore, it suffices to prove that $(\pa_t + A) G(t,x,y)=0$. Taking the Fourier transform in $x$, which is allowed by \Cref{reg-G}, it suffices to check that:
\begin{equation}\label{eq:test_nonzero3}
\pa_t \widehat{G}(t,k,y) +  c\, \mbox{div}_k \left( \frac{k}{| k |}  \widehat{G}(t,k,y) \right) + c \frac{H+d-\frac{1}{2}}{|k|}\,  \widehat{G}(t,k,y)=0
\end{equation}
One may easily check that \eqref{eq:test_nonzero3} does indeed hold by direct calculation using the fact that:
\[
\widehat{G}(t,k,y) =\chi_{|k| > ct + \kappa} \,  \left( \frac{|k|-ct}{|k|} \right)^{H+d-\frac{1}{2}} \widehat{\varphi} \left( \frac{|k|-ct}{|k|} k \right) e^{- i 2 \pi \left( \frac{|k|-ct}{|k|} \right) k \cdot y} .
\]
\end{proof}

\subsection{Correlations for $H\in (0,1)$}

Our first result is an explicit formula for the correlations of our solution at any finite time. Note that this result holds regardless of the value of $H$, because $t>0$ is finite. However, we will only be able to take the limit (as a function) when $H\in (0,1)$.

\begin{prop}\label{thm:correlations}
For any finite time, the correlation function satisfies
$$
\E [ u(t,x_1) u(t,x_2)]= 
 \int_{\R_k^d} |k|^{-(2H+d)}\, F(t,|k|)\, e^{i 2 \pi k(x_1-x_2)} \, dk
$$
where $F$ is the function given in Remark \ref{rk-F}.
\end{prop}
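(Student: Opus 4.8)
The plan is to work from the explicit solution formula \eqref{eq:mainsol}, namely $u(t,x)=\int_0^t\int_{\R^d_y}G(t-s,x,y)\,dW(s,y)$ with the kernel $G(t,x,y)=e^{-tA}[\tau_y\varphi](x)$ studied in \Cref{reg-G}, and to reduce the two-point correlation to a deterministic integral via the Wiener isometry. Since $\varphi$ is real, radial and Schwartz, its Fourier transform $\widehat\varphi$ is real and even, so the formula for $\mathcal{F}_2 G$ (the Fourier transform of $G$ in the variable $y$) given in \Cref{reg-G} satisfies $(\mathcal{F}_2 G)(t,x,-k)=\overline{(\mathcal{F}_2 G)(t,x,k)}$; hence $G(t,x,\cdot)$ is real-valued. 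Moreover \Cref{reg-G} gives $G(\tau,x_i,\cdot)\in\Sc(\R^d_y)\subset L^2(\R^d_y)$ for every $\tau\geq 0$, locally uniformly in $x_i$, so $(s,y)\mapsto G(t-s,x_i,y)$ lies in $L^2([0,t]_s\times\R^d_y)$. The Wiener isometry (see \cite{Janson}) then yields, after the substitution $\tau:=t-s$,
\[
\E[u(t,x_1)\,u(t,x_2)] = \int_0^t\left(\int_{\R^d_y}G(\tau,x_1,y)\,G(\tau,x_2,y)\,dy\right)d\tau .
\]

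Next I would evaluate the inner integral at fixed $\tau$ by Plancherel's theorem in $y$, using $(\mathcal{F}_2 G)(\tau,x,k)=\bigl(\tfrac{|k|}{|k|+c\tau}\bigr)^{H+\frac{1}{2}}\widehat\varphi(k)\,e^{-i2\pi(\frac{|k|+c\tau}{|k|})k\cdot x}$ from \Cref{reg-G} (here $\overline{\widehat\varphi}=\widehat\varphi$). Because $G(\tau,x_i,\cdot)$ is real, this gives
\[
\int_{\R^d_y}G(\tau,x_1,y)\,G(\tau,x_2,y)\,dy = \int_{|k|>\kappa}\Bigl(\tfrac{|k|}{|k|+c\tau}\Bigr)^{2H+1}|\widehat\varphi(k)|^2\,e^{i2\pi\left(\frac{|k|+c\tau}{|k|}\right)k\cdot(x_2-x_1)}\,dk .
\]
I then perform the radial change of variables $k\mapsto\ell:=\tfrac{|k|+c\tau}{|k|}\,k$, so that $|\ell|=|k|+c\tau$ with the direction unchanged; its inverse is $k=\tfrac{|\ell|-c\tau}{|\ell|}\,\ell$, the region $\{|k|>\kappa\}$ maps onto $\{|\ell|>c\tau+\kappa\}$, the Jacobian is $dk=\bigl(\tfrac{|\ell|-c\tau}{|\ell|}\bigr)^{d-1}d\ell$, and $\tfrac{|k|}{|k|+c\tau}=\tfrac{|\ell|-c\tau}{|\ell|}$. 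Recalling that $|\widehat\varphi(k)|^2=\widehat{C_f}(k)=\psi(|k|)=\psi(|\ell|-c\tau)$, and using the evenness of the amplitude to flip the sign of the exponent, the previous display becomes
\[
\int_{\R^d_y}G(\tau,x_1,y)\,G(\tau,x_2,y)\,dy = \int_{|\ell|>c\tau+\kappa}\Bigl(\tfrac{|\ell|-c\tau}{|\ell|}\Bigr)^{2H+d}\psi(|\ell|-c\tau)\,e^{i2\pi\,\ell\cdot(x_1-x_2)}\,d\ell .
\]

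Finally I would insert this into the first display and use Tonelli's theorem to exchange the $\tau$-- and $\ell$--integrations; this is legitimate since the amplitude is non-negative, $\psi$ is Schwartz, and for $\tau\in[0,t]$ and $|\ell|>c\tau+\kappa$ one has $\tfrac{\kappa}{ct+\kappa}\leq\tfrac{|\ell|-c\tau}{|\ell|}\leq 1$, so no singularity occurs near the boundary sphere $|\ell|=c\tau+\kappa$ and the $\tau$-integral of the amplitude is controlled by $t\int_{\R^d}\psi(|\ell|)\,d\ell$ up to a constant. It then remains to compute, for fixed $\ell$, the one-dimensional integral $\int_0^t\chi_{\{|\ell|>c\tau+\kappa\}}\bigl(\tfrac{|\ell|-c\tau}{|\ell|}\bigr)^{2H+d}\psi(|\ell|-c\tau)\,d\tau$: the substitution $s=|\ell|-c\tau$ turns it into $\tfrac{1}{c}|\ell|^{-(2H+d)}\int_{\max(|\ell|-ct,\,\kappa)}^{|\ell|}s^{2H+d}\psi(s)\,ds$ when $|\ell|>\kappa$ (and into $0$ otherwise), and comparing with the definition \eqref{eq:BigPsi} of $\Psi_{d,H}$ and with the definition of $F$ in \Cref{rk-F} --- treating the cases $|\ell|>ct+\kappa$ and $\kappa<|\ell|\leq ct+\kappa$ separately, and invoking $\psi\equiv 0$ on $(0,\kappa)$ to cover $|\ell|\leq\kappa$ --- this integral equals exactly $|\ell|^{-(2H+d)}F(t,|\ell|)$. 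Substituting back gives the asserted identity. I expect the only mildly delicate points to be bookkeeping: correctly tracking the radial Jacobian in the change of variables $k\mapsto\tfrac{|k|+c\tau}{|k|}k$, and the (routine, since $\varphi\in\Sc$) integrability verifications needed for the Wiener isometry and for Tonelli's theorem; the rest reproduces, now rigorously and in physical space, the heuristic computation underlying \Cref{heuristic-theorem}.
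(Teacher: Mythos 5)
Your proposal is correct and follows essentially the same route as the paper's proof: the Wiener isometry applied to the explicit solution \eqref{eq:mainsol}, the substitution $s\mapsto t-s$, Plancherel in $y$ via the formula for $\mathcal{F}_2 G$ from \Cref{reg-G}, and the radial change of variables $k\mapsto\frac{|k|+c\tau}{|k|}k$ with Jacobian $\left(\frac{|\ell|-c\tau}{|\ell|}\right)^{d-1}$, after which the inner $\tau$-integral is identified with $|\ell|^{-(2H+d)}F(t,|\ell|)$ exactly as in \Cref{rk-F}. The only difference is that you spell out the final matching with $\Psi_{d,H}$ and the Tonelli justification, which the paper leaves implicit.
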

\begin{proof}
For any finite time $t>0$, one has
$$
\E [ u(t,x_1) u(t,x_2)] = \int_0^t \int_{\R^d} e^{- (t-s) A} [\tau_y\varphi](x_1) e^{- (t-s) A} [\tau_y\varphi](x_2) \, dy\, ds.
$$
The change of variables $s \mapsto t-s$ yields
$$
\E [ u(t,x_1) u(t,x_2)]= \int_0^t \int_{\R^d} e^{- s A} [\tau_y\varphi](x_1) e^{- s A} [\tau_y\varphi](x_2) \, dy\, ds.
$$
Using \Cref{reg-G} together with the Plancherel theorem (in the $y$ variable), we obtain
\[
\E [ u(t,x_1) u(t,x_2)]= \int_0^t \int_{\R^d} \left( \frac{|k|}{|k|+cs}\right)^{2H+1}\, |\widehat{\varphi} (k)|^2\, e^{-i 2 \pi \left( \frac{|k|+cs}{|k|} \right) k \cdot (x_1-x_2)} dk\, ds.
\]
Next we perform the change of variables $k\mapsto \frac{|k|}{|k|+cs} \, k$, which yields
\[
\begin{split}
\E [ u(t,x_1) u(t,x_2)] & = \int_0^t \int_{\R^d} \chi_{|k| > cs + \kappa}\, \left( \frac{|k|-cs}{|k|}\right)^{2H+d}\, \Big | \widehat{\varphi} \left(\frac{|k|-cs}{|k|}\, k\right) \Big |^2\, e^{-i 2 \pi k \cdot (x_1-x_2)}\, dk\, ds\\ 
& = \int_{\R^d} e^{-i 2 \pi k \cdot (x_1-x_2)}\, |k|^{-2H-d} \,\left( \int_0^t  \chi_{|k| > cs + \kappa}\, (|k|-cs)^{2H+d}\, \Big | \widehat{\varphi} \left(\frac{|k|-cs}{|k|}\, k\right) \Big |^2\, ds\right) \, dk.
\end{split}
\]
This concludes the proof.
\end{proof}

As a corollary, we take the limit as $t\rightarrow\infty$ and show points (ii) and (iv) of Theorem \ref{thm:main_H_pos}.

\begin{cor}\label{thm:correlations_infty}
Suppose that $H\in (0,1)$. 
Then $u(t,x)$ converges in law to a zero-mean Gaussian field $u_{\infty}(x)$
%\[
%\lim_{t\rightarrow\infty} \sup_{x\in\R^d} \E |u_{\infty}(x) - u(t,x)|^2 = 0.
%\]
%where $u_{\infty}$ is the following zero-mean Gaussian field.
%\begin{equation}\label{eq:u_inf_H}
%u_{\infty}(x) \underset{\mathrm{law}}{=} \int_0^{\infty} \int_{\R^d_y} G(s,x,y) \, dW(s,y),
%\end{equation}
with the following correlation structure:
%and where $d \widetilde{W}$ is a different Gaussian white noise from $dW$ in \eqref{eq:rforcing-phy}. 
\begin{equation*}
%\begin{split}
\E [u_{\infty} (x_1) u_{\infty}(x_2)]  %&
 =  \lim_{t\rightarrow\infty} \E [u(t,x_1) u(t,x_2)]  %\\
%& = C(d,H) \,\mathcal{K}_H(x_1-x_2) - (\mathcal{K}_H \ast \mathcal{J}_H) (x_1-x_2).
%\end{split}
\end{equation*}
Moreover
$$
\E [u(t,x_1) u(t,x_2)] = C(d,H) \,\mathcal{K}_H(x_1-x_2) - (\mathcal{K}_H \ast \mathcal{J}_H) (x_1-x_2) + O \Big((ct)^{-2H} \Big)
$$
where
\begin{equation}\label{eq:KH_GH}
%\begin{split}
\mathcal{K}_H := \mathcal{F}^{-1} \left[ \chi_{|k| >  \kappa} \, |k|^{-(2H+d)} \right]
\quad \text{and} \quad
\mathcal{J}_H := \mathcal{F}^{-1} \left[ \chi_{|k| >  \kappa} \,\Psi_{d,H}(|k|) \right]
%\mathcal{K}_H (x) & = \int_{\R^d_k} e^{2\pi i k\cdot x} \, \chi_{|k| >  \kappa} \, |k|^{-(2H+d)} \, dk\\
%\mathcal{G}_H (x) & = \int_{\R^d_k} e^{2\pi i k\cdot x} \,\chi_{|k| >  \kappa} \, |k|^{-(2H+d)}\, \Psi_{d,H}(|k|)\, dk,
%\end{split}
\end{equation}
and where $C(d,H)$ and $\Psi_{d,H}$ are given in Theorem \ref{heuristic-theorem}. 
\end{cor}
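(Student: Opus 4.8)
The plan is to read the limit directly off the finite-time formula of \Cref{thm:correlations}: for all $t>0$,
\[
\E[u(t,x_1)u(t,x_2)] = \int_{\R_k^d} |k|^{-(2H+d)}\,F(t,|k|)\, e^{2\pi i k\cdot(x_1-x_2)}\,dk ,
\]
with $F$ the function of \Cref{rk-F}, and to split off from $F$ the part that is stationary in $t$. Since $\psi$ vanishes on $[0,\kappa]$ we have $\Psi_{d,H}(\kappa)=\Psi_{d,H}(0)=C(d,H)$, whence $F(t,|k|) = \left(C(d,H)-\Psi_{d,H}(|k|)\right) - r_t(|k|)$, where $r_t(|k|)=0$ for $\kappa<|k|\le ct+\kappa$ and $r_t(|k|) = C(d,H)-\Psi_{d,H}(|k|-ct) = \frac1c\int_0^{|k|-ct} s^{2H+d}\psi(s)\,ds$ for $|k|>ct+\kappa$; in particular $0\le r_t\le C(d,H)$ pointwise, for every $t$. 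Accordingly
\[
\E[u(t,x_1)u(t,x_2)] = T(x_1-x_2) - R_t(x_1-x_2),
\]
with $T(z):=\int_{|k|>\kappa}|k|^{-(2H+d)}\left(C(d,H)-\Psi_{d,H}(|k|)\right) e^{2\pi i k\cdot z}\,dk$ and $R_t(z):=\int_{|k|>ct+\kappa}|k|^{-(2H+d)}\,r_t(|k|)\, e^{2\pi i k\cdot z}\,dk$.

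First I would check that $T$ is a bounded continuous function and identify it with $C(d,H)\mathcal K_H-\mathcal K_H\ast\mathcal J_H$. For $H\in(0,1)$ the integrand defining $T$ lies in $L^1(\R_k^d)$: it is bounded near $|k|=\kappa$ since $\kappa>0$, and $0\le C(d,H)-\Psi_{d,H}(|k|)\le C(d,H)$ while $\int_{|k|>\kappa}|k|^{-(2H+d)}\,dk<\infty$ precisely because $H>0$. As $\varphi\in\Sc$, $\psi$ is rapidly decaying (indeed $\psi(|k|)=\widehat{C_f}(k)=|\widehat\varphi(k)|^2$), hence so is $\Psi_{d,H}$, so $\mathcal J_H\in\Sc(\R_x^d)$ and $\mathcal K_H\in L^\infty$; thus $(\mathcal K_H\ast\mathcal J_H)(z)=\int\mathcal K_H(z-w)\mathcal J_H(w)\,dw$ is absolutely convergent. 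Substituting $\mathcal K_H(z-w)=\int_{|k|>\kappa}|k|^{-(2H+d)}e^{2\pi i k\cdot(z-w)}\,dk$ and applying Fubini (valid since $\chi_{|k|>\kappa}|k|^{-(2H+d)}\in L^1_k$ and $\mathcal J_H\in L^1_w$) gives $(\mathcal K_H\ast\mathcal J_H)(z)=\int_{|k|>\kappa}|k|^{-(2H+d)}\,\widehat{\mathcal J_H}(k)\,e^{2\pi i k\cdot z}\,dk = \int_{|k|>\kappa}|k|^{-(2H+d)}\Psi_{d,H}(|k|)\, e^{2\pi i k\cdot z}\,dk$; subtracting this from $C(d,H)\mathcal K_H(z)=\int_{|k|>\kappa}C(d,H)|k|^{-(2H+d)}e^{2\pi i k\cdot z}\,dk$ yields exactly $T(z)$.

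Next I would bound the error. Using $0\le r_t\le C(d,H)$ and $|e^{2\pi i k\cdot z}|=1$,
\[
|R_t(z)| \le C(d,H)\int_{|k|>ct+\kappa}|k|^{-(2H+d)}\,dk = \frac{C(d,H)\,|\mathbb{S}^{d-1}|}{2H}\,(ct+\kappa)^{-2H},
\]
uniformly in $z$. Hence $\E[u(t,x_1)u(t,x_2)] = C(d,H)\mathcal K_H(x_1-x_2) - (\mathcal K_H\ast\mathcal J_H)(x_1-x_2) + O\left((ct)^{-2H}\right)$, which is the ``moreover'' assertion, and in particular $\E[u(t,x_1)u(t,x_2)]\to T(x_1-x_2)$ for all $x_1,x_2$ since $H>0$.

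It remains to package this as convergence in law. By the corollary above on the finite-time regularity of $u$, each $u(t,\cdot)$ is a centered Gaussian field, so for any points $x_1,\dots,x_n$ the vector $(u(t,x_j))_j$ is centered Gaussian with covariance converging to $(T(x_i-x_j))_{i,j}$; thus the finite-dimensional marginals converge. The kernel $T$ is a pointwise limit of the positive semidefinite kernels $(x_1,x_2)\mapsto\E[u(t,x_1)u(t,x_2)]$, hence positive semidefinite, so Kolmogorov's extension theorem furnishes a centered Gaussian field $u_\infty$ with $\E[u_\infty(x_1)u_\infty(x_2)]=T(x_1-x_2)$ and $u(t,\cdot)\to u_\infty$ in finite-dimensional distributions, giving $\E[u_\infty(x_1)u_\infty(x_2)]=\lim_{t\to\infty}\E[u(t,x_1)u(t,x_2)]$. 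For the $\alpha$-H\"older regularity of $u_\infty$ and for tightness (needed if one wants convergence in law in $C^{0,\alpha}_{\mathrm{loc}}(\R^d)$ for $\alpha<H$), I would use that $0\le F(t,|k|)\le C(d,H)$ uniformly in $t$ to get $\E|u(t,x_1)-u(t,x_2)|^2 = 2\int_{|k|>\kappa}|k|^{-(2H+d)}F(t,|k|)\bigl(1-\cos(2\pi k\cdot(x_1-x_2))\bigr)\,dk \lesssim |x_1-x_2|^{2H}$ uniformly in $t$, by splitting the integral at $|k|\sim|x_1-x_2|^{-1}$ and using $H<1$; Gaussianity and the Kolmogorov continuity/tightness criterion then finish it, and the same increment bound applied to $u_\infty$ shows its paths are a.s.\ $\alpha$-H\"older for every $\alpha<H$. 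The only genuinely delicate steps are this last uniform-in-$t$ increment estimate and being precise about the path space in which the convergence in law is stated; the rest is bookkeeping on the explicit kernel.
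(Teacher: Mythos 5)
Your proof is correct, but it follows a genuinely different route from the paper's. The paper introduces the time-reversed process $v(t,x)=\int_0^t\int_{\R^d_y}G(s,x,y)\,dW(s,y)$, which has the same law as $u(t,x)$ at each fixed $t$, shows that $v_\infty(x)=\int_0^\infty\int G\,dW$ is a well-defined stochastic integral (via $\int_0^\infty\int|G(s,x,y)|^2\,dy\,ds<\infty$, which is where $H>0$ enters), and obtains convergence from the explicit tail computation $\E|v_\infty(x)-v(t,x)|^2=\frac{1}{2cH}\int(|k|+ct)^{-2H}|k|^{2H+1}|\widehat\varphi(k)|^2\,dk$; the rate $O((ct)^{-2H})$ for the correlations is then only asserted via ``a careful analysis'' of $F$. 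You instead work entirely at the level of the covariance kernel of \Cref{thm:correlations}: the decomposition $F(t,|k|)=\bigl(C(d,H)-\Psi_{d,H}(|k|)\bigr)-r_t(|k|)$ with $0\le r_t\le C(d,H)$ supported in $|k|>ct+\kappa$ is exactly the ``careful analysis'' the paper omits, and it yields the explicit, uniform-in-$(x_1,x_2)$ bound $\lesssim (ct)^{-2H}$; convergence in law is then recovered from Gaussianity plus convergence of covariances and the Kolmogorov extension theorem. What the paper's route buys is a concrete realization of $u_\infty$ on the same probability space as $v(t)$ (with $L^2(\Omega)$ convergence, stronger than f.d.d.\ convergence in law); what yours buys is a cleaner and fully explicit error term, plus the uniform-in-$t$ increment estimate that feeds directly into the H\"older/tightness discussion of \Cref{thm:Holder_sol}. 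One small caveat: your Fubini step treats $\mathcal K_H\ast\mathcal J_H$ as a classical convolution using ``$\mathcal J_H\in L^1$'', but $\widehat{\mathcal J_H}=\chi_{|k|>\kappa}\Psi_{d,H}(|k|)$ jumps at $|k|=\kappa$ (since $\Psi_{d,H}(\kappa)=C(d,H)>0$), so $\mathcal J_H$ is not actually Schwartz nor obviously integrable; the identification is nonetheless immediate if the convolution is read spectrally, i.e.\ as $\mathcal F^{-1}$ of the product $\chi_{|k|>\kappa}|k|^{-(2H+d)}\Psi_{d,H}(|k|)$, which lies in $L^1(\R^d_k)$ precisely because $H>0$ and $\kappa>0$.
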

\begin{proof}
Defining 
\[
v(t,x)=\int_0^{t} \int_{\R^d_y} G(s,x,y) \, dW(s,y),
\]
we note that for fixed $t$, $v(t,x)$ and $u(t,x)$ are Gaussian random variables with the same law, since the mean and the correlations coincide.

First we show that
\[
v_{\infty}(x):=\int_0^{\infty} \int_{\R^d_y} G(s,x,y) \, dW(s,y)
\]
is a well-defined Gaussian field. This is due to the fact that for fixed $x\in\R^d$, the function $G(\cdot,x,\cdot)\in L^2 ([0,\infty]_s \times \R^d_y)$. Indeed, following the arguments in \Cref{thm:correlations} one easily finds that
\[
\int_0^{\infty} \int_{\R^d_y} |G(s,x,y)|^2 \, dy\, ds = \int_0^{\infty} \int_{\R^d} \left( \frac{|k|}{|k|+cs}\right)^{2H+1}\, |\widehat{\varphi} (k)|^2\, dk\, ds.
\]

The convergence of $v(t)$ to $v_{\infty}$ is now standard. Note that:
\[
v_{\infty}(x) - v(t,x) = \int_t^{\infty} \int_{\R^d_y} G(s,x,y) \, dW(s,y),
\]
and therefore
\[
\begin{split}
\E |v_{\infty}(x) - v(t,x)|^2 & =  \int_t^{\infty} \int_{\R^d_y} |G(s,x,y)|^2 \, dy\, ds \\
& = \int_t^{\infty} \int_{\R^d} \left( \frac{|k|}{|k|+cs}\right)^{2H+1}\, |\widehat{\varphi} (k)|^2\, dk\, ds\\
& =\frac{1}{2cH}\, \int_{\R^d} (|k|+ct)^{-2H} |k|^{2H+1}\, |\widehat{\varphi} (k)|^2\, dk\\
\end{split}
\]

The same argument allows us to take the limit as $t\rightarrow\infty$ for the two-point correlations of $v(t)$, which coincide with the two-point correlations of $u(t)$ in  \Cref{thm:correlations}, and which yield the correlations of $u_{\infty}$. The fact that we can exchange the limit and the expectation admits the same argument as in Step 2 in the proof of \Cref{thm:1d_cmain}. Indeed, note that the variance $\E |u_{\infty}(x)|^2$ is finite and independent of $x\in\R^d$. 

The rate of convergence of the correlations as $t\rightarrow\infty$ depend directly on the function $F$ in \Cref{thm:correlations}. A careful analysis shows that the rate of convergence is given by a multiple of $(ct)^{-2H}$.
\end{proof}

Now that we know that, whenever $H\in (0,1)$, $u_{\infty}$
%\begin{equation}\label{eq:u_inf_H}
%u_{\infty}(x)= \int_0^{\infty} \int_{\R^d_y} G(s,x,y) \, d\widetilde{W}(s,y)
%\end{equation}
is a well-defined Gaussian field indexed by $x\in\R^d$, we may wonder about its continuity and its regularity. In this direction, we have the following result:

\begin{cor}\label{thm:Holder_sol}
Suppose that $H\in (0,1)$. The limit $u_{\infty}=\lim_{t\rightarrow\infty} u(t)$ is a Gaussian field in $x\in\R^d$ which has a modification with a.s. $\alpha$-H\"older continuous paths for any $0<\alpha<H$. More precisely, for all $x, \ell \in\R^d$, the variance of the increment satisfies
\begin{equation}\label{eq:Holder_sol}
\E | \delta_\ell u_{\infty}(x) |^2\leq  \,  \frac{C_{\mathcal{K}}}{H (1-H)} |\ell|^{2H} + C_{\mathcal{J}} |\ell|^2,
\end{equation}
where $C_{\mathcal{K}}$ and $C_{\mathcal{J}}$ only depend of $H$ and $d$ without blowing up when $H$ tends to $0$ or $1$.
\end{cor}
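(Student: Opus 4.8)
The starting point is the identity for the limiting covariance established in \Cref{thm:correlations_infty}. Since $u_\infty$ is a centred, real, statistically homogeneous Gaussian field, its covariance is $\E[u_\infty(x_1)u_\infty(x_2)]=\rho(x_1-x_2)$ with spectral density
\[
\widehat\rho(k)=\chi_{|k|>\kappa}\,|k|^{-(2H+d)}\bigl(C(d,H)-\Psi_{d,H}(|k|)\bigr)\geq 0,
\]
which is integrable because $H>0$ and $\kappa>0$, and which is non-negative because $\Psi_{d,H}$ is non-increasing with $\Psi_{d,H}(0)=C(d,H)$. Homogeneity then gives, independently of $x$, that $\E|\delta_\ell u_\infty(x)|^2=2(\rho(0)-\rho(\ell))=\int_{|k|>\kappa}|e^{2\pi i k\cdot\ell}-1|^2\,|k|^{-(2H+d)}\bigl(C(d,H)-\Psi_{d,H}(|k|)\bigr)\,dk$. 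The plan is to split this integral into a singular piece carrying the weight $C(d,H)|k|^{-(2H+d)}$, which produces the $|\ell|^{2H}$ term, and a regular piece carrying the weight $|k|^{-(2H+d)}\Psi_{d,H}(|k|)$, which is far more integrable and produces only an $O(|\ell|^2)$ term, and then to feed the resulting second-moment bound into the Kolmogorov continuity theorem.

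For the singular piece $C(d,H)\int_{|k|>\kappa}|e^{2\pi i k\cdot\ell}-1|^2|k|^{-(2H+d)}\,dk$, I would cut the integral at the scale $|k|\sim|\ell|^{-1}$. On $\{\kappa<|k|<|\ell|^{-1}\}$ one uses $|e^{2\pi i k\cdot\ell}-1|\leq 2\pi|k|\,|\ell|$ and passes to polar coordinates, obtaining a contribution $\lesssim |\ell|^2\int_\kappa^{|\ell|^{-1}}r^{1-2H}\,dr\lesssim (1-H)^{-1}|\ell|^{2H}$; on $\{|k|>|\ell|^{-1}\}$ one uses $|e^{2\pi i k\cdot\ell}-1|\leq 2$, obtaining $\lesssim\int_{|\ell|^{-1}}^\infty r^{-2H-1}\,dr= (2H)^{-1}|\ell|^{2H}$. (When $|\ell|\geq\kappa^{-1}$ the low-frequency region is empty and one simply uses $\int_{|k|>\kappa}|k|^{-(2H+d)}\,dk\lesssim \kappa^{-2H}/H\leq |\ell|^{2H}/H$.) Summing the two regions and multiplying by $C(d,H)=\Psi_{d,H}(0)$, which stays bounded as $H\to 0$ and as $H\to 1$, yields the first term $C_{\mathcal K}(H(1-H))^{-1}|\ell|^{2H}$ of \eqref{eq:Holder_sol}, with $C_{\mathcal K}$ depending on $d$, $H$ and the forcing but not blowing up at the endpoints of $(0,1)$.

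For the regular piece, the observation is that $\widehat{\mathcal K_H\ast\mathcal J_H}(k)=\chi_{|k|>\kappa}|k|^{-(2H+d)}\Psi_{d,H}(|k|)$ satisfies $|k|^2\,\widehat{\mathcal K_H\ast\mathcal J_H}\in L^1(\R^d_k)$: there is no singularity near $|k|=\kappa>0$, and at infinity $\Psi_{d,H}$ is rapidly decreasing since $\psi=|\widehat\varphi|^2\in\Sc(\R^d)$. Hence $\mathcal K_H\ast\mathcal J_H\in C^2(\R^d)$; it is even, so its gradient vanishes at the origin, and a second-order Taylor expansion gives $\bigl|(\mathcal K_H\ast\mathcal J_H)(0)-(\mathcal K_H\ast\mathcal J_H)(\ell)\bigr|\leq \tfrac12 C_{\mathcal J}|\ell|^2$ for all $\ell$, with $C_{\mathcal J}:=4\pi^2\int_{|k|>\kappa}|k|^{2-(2H+d)}\Psi_{d,H}(|k|)\,dk$ finite and again stable as $H\to 0,1$. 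Writing $\E|\delta_\ell u_\infty(x)|^2=C(d,H)\int_{|k|>\kappa}|e^{2\pi i k\cdot\ell}-1|^2|k|^{-(2H+d)}\,dk-2\bigl((\mathcal K_H\ast\mathcal J_H)(0)-(\mathcal K_H\ast\mathcal J_H)(\ell)\bigr)$ and inserting the two bounds establishes \eqref{eq:Holder_sol}.

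Finally, since $\delta_\ell u_\infty(x)$ is a centred Gaussian variable, its even moments are exact powers of its variance, so \eqref{eq:Holder_sol} gives $\E|\delta_\ell u_\infty(x)|^{2m}\lesssim_m |\ell|^{2Hm}$ for $|\ell|\leq 1$ (where $|\ell|^2\leq|\ell|^{2H}$), uniformly in $x$. The Kolmogorov continuity theorem then provides a modification of $u_\infty$ with a.s.\ $\alpha$-H\"older continuous paths for every $\alpha<H-\tfrac{d}{2m}$, and letting $m\to\infty$ covers every $\alpha<H$. I expect the only real difficulty to be keeping the constant in the singular-piece estimate uniform over $H\in(0,1)$: this is exactly what forces the $|\ell|^{-1}$-scale split and the separate treatment of low and high frequencies, and it is also why one must verify that $C(d,H)$ and the weighted integral defining $C_{\mathcal J}$ do not degenerate at $H=0$ or $H=1$; the remaining steps are routine.
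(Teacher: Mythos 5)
Your proof is correct and follows essentially the same route as the paper: the same decomposition of $\E|\delta_\ell u_\infty(x)|^2$ into the singular $\mathcal{K}_H$ part and the regular $\mathcal{K}_H\ast\mathcal{J}_H$ part, the same frequency cut at $|k|\sim|\ell|^{-1}$ with the bounds $|e^{2\pi i k\cdot\ell}-1|\lesssim|k||\ell|$ below and $\lesssim 1$ above (the paper phrases this via the $1-\cos$ and $\sin$ decomposition, which is equivalent), the same quadratic Taylor bound on the smooth correction, and the same Gaussian-moments-plus-Kolmogorov conclusion. The only cosmetic differences are that you treat the regime $|\ell|\gtrsim\kappa^{-1}$ explicitly (the paper only remarks that it is easier) and you track the endpoint behaviour of the constants slightly more explicitly.
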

\begin{proof}
By \eqref{eq:correlations_infty}, we have that
\begin{equation}\label{eq:Holder1}
\E | \delta_\ell u_{\infty}(x)|^2=2C(d,H) [\mathcal{K}_H(0) -\mathcal{K}_H(\ell)] -  2 [(\mathcal{K}_H \ast \mathcal{J}_H) (0) - (\mathcal{K}_H \ast \mathcal{J}_H) (\ell)].
\end{equation}

{\bf{Step 1.}} The function $(\mathcal{K}_H \ast \mathcal{J}_H)$ in \eqref{eq:KH_GH} is smooth, since $\Psi_{d,H}(|k|)$ decays as rapidly as desired in $k$. Moreover, we may write:
\begin{equation}\label{eq:Euler}
1- e^{2\pi i k\cdot \ell}=1-\cos\left(2\pi k\cdot \ell \right) - i\, \sin\left(2\pi k\cdot \ell \right).
\end{equation}
Then note that 
\[
\int_{\R^d_k} \sin\left(2\pi k\cdot \ell \right) \chi_{|k| >  \kappa} \, |k|^{2-(2H+d)}\, |\Psi_{d,H}(|k|)|\, dk=0
\]
thanks to the fact that $\chi_{|k| >  \kappa} \, |k|^{-(2H+d)}\, \Psi_{d,H}(|k|)$ is rotationally invariant together with the change of variables $k\mapsto -k$.

As a result,
\[
\begin{split}
|(\mathcal{K}_H \ast \mathcal{J}_H)(0) -(\mathcal{K}_H \ast \mathcal{J}_H) (\ell)| & = \int_{\R^d_k} \left[ 1-\cos\left(2\pi k\cdot \ell \right)\right]\, \chi_{|k| >  \kappa} \, |k|^{-(2H+d)}\, |\Psi_{d,H}(|k|)|\, dk\\
&\leq   \int_{\R^d_k}  \frac{(2\pi |k| \ell)^2}{2}\, \chi_{|k| >  \kappa} \, |k|^{-(2H+d)}\, |\Psi_{d,H}(|k|)|\, dk\\
& \leq 2\pi^2\, |\ell|^2\, \int_{\R^d_k}  |k|^{2-(2H+d)}\, |\Psi_{d,H}(|k|)|\, dk.
\end{split}
\]
The latter integral is finite thanks to the rapid decay of $\Psi_{d,H}(|k|)$. This takes care of the second term on the right-hand side of \eqref{eq:Holder1}.

{\bf Step 2.} The leading term, for the purpose of studying the H\"older regularity, is the first term on the right-hand side of \eqref{eq:Holder1}. We separate the integral into two parts:
\begin{equation}\label{eq:Holder2}
\begin{split}
\mathcal{K}_H(0) -\mathcal{K}_H(\ell) =&\ \int_{\R^d_k}  \chi_{|k| >  \kappa} \, \left( 1- e^{2\pi i k\cdot \ell }\right) \, |k|^{-(2H+d)} \, dk\\
=&\ \int_{\R^d_k}  \chi_{|k|<r} \chi_{|k| >  \kappa} \, \left( 1- e^{2\pi i k\cdot \ell}\right) \, |k|^{-(2H+d)} \, dk\\
& +  \int_{\R^d_k}  \chi_{|k|>r} \chi_{|k| >  \kappa} \, \left( 1- e^{2\pi i k\cdot \ell}\right) \, |k|^{-(2H+d)} \, dk.
\end{split}
\end{equation}
We choose $r=(2\pi |\ell|)^{-1}$; one can carefully check that this choice is optimal. Let us further assume that $|\ell|<\kappa$, the alternative scenario is easier and can be studied separately.

The second term on the right-hand side of \eqref{eq:Holder2} yields:
\[
\Big | \int_{\R^d_k}  \chi_{|k|>r} \chi_{|k| >  \kappa} \, \left( 1- e^{2\pi i k\cdot \ell}\right) \, |k|^{-(2H+d)} \, dk \Big | \leq \Big | \int_{\R^d_k}  \chi_{|k|>r}  \, |k|^{-(2H+d)} \, dk \Big |.
\]
Then, changing to spherical coordinates in the variable $k$, 
\[
\Big | \int_{\R^d_k}  \chi_{|k|>r} \, |k|^{-(2H+d)} \, dk \Big | \lesssim \Big | \int_{|k| > r} |k|^{-(2H+1)} \, d |k| \Big |
\lesssim \frac{1}{H}\,  r^{-2H} \lesssim \frac{1}{H} \, |\ell|^{2H}.
\]
The first term  on the right-hand side of \eqref{eq:Holder2} requires more care. Using \eqref{eq:Euler} as before, note that
\[
\int_{|k|<r}  \sin\left(2\pi k\cdot \ell \right) \, \chi_{|k| >  \kappa} \, |k|^{-(2H+d)} \, dk=0,
\]
as is easily seen from the change of variables $k\mapsto -k$. Therefore it suffices to bound
\[
\begin{split}
\Big | \int_{|k|<r}  \left[ 1-\cos\left(2\pi k\cdot \ell \right) \right] \, \chi_{|k| >  \kappa} \, |k|^{-(2H+d)} \, dk \Big | & \leq\int_{|k|<r}  \frac{(2\pi |k||\ell|)^2}{2} \, \chi_{|k| >  \kappa} \, |k|^{-(2H+d)} \, dk\\
& \lesssim 2\pi^2 |\ell|^2 \int_{\kappa}^{r} |k|^{1-2H}\, d|k|\\
&  \lesssim \frac{1}{2-2H} \, |\ell|^{2H}.
\end{split}
\]
This finishes the proof of \eqref{eq:Holder_sol}. Using this bound on the variance, a standard argument based on the Gaussianity of $u_{\infty}$ and the Kolmogorov continuity theorem show that $u_{\infty}$ has a.s. $\alpha$-H\"older continuous paths for any $0<\alpha<H$.
\end{proof}

\subsection{Correlations for $H\in [-d/2,0]$}
For values of $H$ in this range, we will still show that $u_{\infty}=\lim_{t\rightarrow\infty} u(t)$ exists. However, $u_{\infty}$ cannot be interpreted as a function, in fact its variance is not finite. Instead, $u_{\infty}$ is a random distribution acting on a space of test functions $X\cap S(\R^d)$. Our first result is the analogue of \Cref{thm:correlations} in the context of distributions. 

The rate of convergence depends on some integrability and differentiability properties of the test functions chosen. If one tests against Schwartz functions in $X\cap S(\R^d)$, convergence will happen faster than $t^{-n}$ for any $n\in\N$ (as we will show below). However, one may want to test agains less regular or less decaying functions, in which case this rate of convergence can be quantified. 

Before doing so, one needs to identify a good space of test functions for which ``testing'' is still well-defined. Given that $u(t,x)$ does not generally decay in the $x$-variable, one needs to impose a certain decay on the test functions. For $n\in\N$, let us define the space\footnote{Let us mention that a better space might be possible, in the sense that one might require less decay in physical space. However, this requires a careful study of the right functional space for $u(t,x)$ and $G(s,x,y)$. We leave this study for future research.} $H^{n,n}(\R^d)$, which is the closure of $\Sc (\R^d)$ with respect to the norm
\[
\norm{g}_{H^{n,n}(\R^d)} = \sum_{j=0}^n \norm{ \langle x\rangle^j g}_{H^{n-j} (\R^d_x)}.
\]
It is not hard to show that $H^{n,n}(\R^d)$ is a Banach space algebra living inside $L^2(\R^d_x)$, and thus the Fourier transform is well-defined for such functions. Moreover, $g\in H^{n,n}(\R^d_x)$ if and only if $\widehat{g}\in H^{n,n}(\R^d_k)$ (and both norms are comparable).
%
%\red{R: These spaces could probably be improved by proving better bounds on $G(s,x,y)$. Right now, we are trading decay in $y$ by growth in $x$, i.e. $|G(s,x,y)|\lesssim \langle x\rangle^n\, \langle y\rangle^{-n}$. However, I'm confident we could prove $|G(s,x,y)|\lesssim \langle |x|-|y|\rangle^{-(d+1)/2}$ (which barely places $G$ in $L^{\infty}_x L^2_y \cap L^{\infty}_y L^2_x$). This helps because it allows you to consider test functions $g\in L^1$ (without the need for weights). Since then
%\[
%\Big | \int_x G(s,x,y) g(s',x) \, dx\Big | \lesssim \langle \cdot \rangle^{-(d+1)/2} \star g(s',\cdot).
%\]
%By Young's convolution inequality, if $g\in L^1$ then $\int_x G(s,x,y) g(s',x) \, dx\in L^2_y$. Then our test function space could be $W^{1,n}(\R^d)$, where the higher the $n$, the better the decay rate.}

\begin{prop}\label{thm:correlations_tested}
Let $H\in [-d/2,0]$. For any $t>0$ and any test functions $g_1,g_2 \in H^{n,n} (\R^d)$ with $3(d+1)/2<n$,
the correlations satisfy
\begin{equation}\label{corr-dulalisation}
\begin{split}
\E [ \langle u(t), g_1 \rangle {\langle u(t), g_2 \rangle}] =  & \ 
C(d,H)
 \int_{\R_k^d}  \chi_{|k| >  \kappa} \, |k|^{-(2H+d)} \,\widehat{g_1}(k) \,\overline{\widehat{g_2}(k)}  \, dk\\
&  - \int_{\R_k^d}  \chi_{|k| >  \kappa} \, |k|^{-(2H+d)}\, \Psi_{d,H}(|k|)\, \widehat{g_1}(k) \,\overline{\widehat{g_2}(k)} \, dk
\\
 & +\int_{\R_k^d} \chi_{|k| > ct + \kappa}\, |k|^{-(2H+d)} \, [\Psi_{d,H}(|k|-ct) - \Psi_{d,H}(0) ]\, \widehat{g_1}(k) \, \overline{\widehat{g_2}(k)} \,dk\\
 =  & \ 
C(d,H)
 \int_{\R_k^d}  \chi_{|k| >  \kappa} \, |k|^{-(2H+d)} \,\widehat{g_1}(k) \,\overline{\widehat{g_2}(k)}  \, dk\\
&  - \int_{\R_k^d}  \chi_{|k| >  \kappa} \, |k|^{-(2H+d)}\, \Psi_{d,H}(|k|)\, \widehat{g_1}(k) \,\overline{\widehat{g_2}(k)} \, dk \\
& + C(d,H)\, \left(\frac{1}{ct}\right)^{-(2H+d)-2n} \, \norm{g_1}_{H^{n,n}} \, \norm{g_2}_{H^{n,n}} 
\end{split}
\end{equation}
where $C(d,H)$ and $ \Psi_{d,H}$ are given in Theorem \ref{heuristic-theorem}, and $F$ is given by \eqref{rk-F}. 
\end{prop}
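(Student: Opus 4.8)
The plan is to obtain the first displayed identity by testing the finite-time correlation formula of \Cref{thm:correlations} against $g_1,g_2$, then splitting the resulting $k$-integral into a $t$-independent main part plus a remainder supported in $\{|k|>ct+\kappa\}$, and finally to obtain the second identity by estimating that remainder. The conceptual point to bear in mind is that for each \emph{fixed} $t$ the kernel $|k|^{-(2H+d)}F(t,|k|)$ is a bounded function of $k$ lying in $L^1_k(\R^d)$, so $\E[u(t,x_1)u(t,x_2)]$ is an honest function; but when $H\le 0$ its $L^1_k$-norm blows up as $t\to\infty$ (on the annulus $\kappa<|k|<ct$ it is $\approx C(d,H)$, and $\chi_{|k|>\kappa}|k|^{-(2H+d)}$ is not integrable over $\R_k^d$), which is exactly why convergence has to be phrased at the level of the paired quantities. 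The hypothesis $3(d+1)/2<n$ will be used only to ensure $g_i\in L^1(\R_x^d)\cap H^n(\R_x^d)$ with the relevant norms bounded by $\norm{g_i}_{H^{n,n}}$, which is all that is needed for the Fubini step and for the quantitative rate.

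\textbf{Step 1 (dualizing \Cref{thm:correlations}).} By \Cref{thm:correlations}, for fixed $t>0$ one has $\E[u(t,x_1)u(t,x_2)]=\int_{\R_k^d}|k|^{-(2H+d)}F(t,|k|)\,e^{i2\pi k(x_1-x_2)}\,dk$, an absolutely convergent Fourier integral. Testing this identity against $g_1$ and $g_2$ and applying Fubini — valid for fixed $t$ since $u(t,\cdot)$ has finite (constant) variance, $|k|^{-(2H+d)}F(t,|k|)\in L^1_k$, and $g_i\in L^1$ — I would arrive at
\[
\E[\langle u(t),g_1\rangle\,\langle u(t),g_2\rangle]=\int_{\R_k^d}|k|^{-(2H+d)}\,F(t,|k|)\,\widehat{g_1}(k)\,\overline{\widehat{g_2}(k)}\,dk .
\]

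\textbf{Step 2 (splitting $F$) and Step 3 (the remainder).} Recalling from \Cref{rk-F} that $F(t,|k|)=C(d,H)-\Psi_{d,H}(|k|)$ for $\kappa<|k|<ct+\kappa$ and $F(t,|k|)=\Psi_{d,H}(|k|-ct)-\Psi_{d,H}(|k|)$ for $|k|>ct+\kappa$, and using $\Psi_{d,H}(0)=\Psi_{d,H}(\kappa)=C(d,H)$ (since $\psi$ vanishes on $[0,\kappa]$, the function $\Psi_{d,H}$ is constant there), one has for all $|k|>\kappa$
\[
F(t,|k|)=\bigl[\,C(d,H)-\Psi_{d,H}(|k|)\,\bigr]+\chi_{|k|>ct+\kappa}\,\bigl[\,\Psi_{d,H}(|k|-ct)-\Psi_{d,H}(0)\,\bigr],
\]
and inserting this into the formula of Step 1 reproduces the three lines of the first displayed equality. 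It then remains to bound the remainder $E(t):=\int_{\R_k^d}\chi_{|k|>ct+\kappa}\,|k|^{-(2H+d)}\bigl[\Psi_{d,H}(|k|-ct)-\Psi_{d,H}(0)\bigr]\,\widehat{g_1}(k)\,\overline{\widehat{g_2}(k)}\,dk$. Since $\Psi_{d,H}$ is non-negative and non-increasing, $|\Psi_{d,H}(|k|-ct)-\Psi_{d,H}(0)|\le C(d,H)$; on $\{|k|>ct+\kappa\}$ one has $|k|^{-(2H+d)}\le(ct)^{-(2H+d)}$ because $2H+d\ge0$ for $H\ge-d/2$; and writing $|\widehat{g_1}(k)\widehat{g_2}(k)|=\langle k\rangle^{-2n}\bigl(\langle k\rangle^{n}|\widehat{g_1}(k)|\bigr)\bigl(\langle k\rangle^{n}|\widehat{g_2}(k)|\bigr)$ with $\langle k\rangle^{-2n}\le(ct)^{-2n}$ there, the Cauchy--Schwarz inequality together with $\norm{\langle k\rangle^{n}\widehat{g_i}}_{L^2}\lesssim\norm{g_i}_{H^{n,n}}$ gives
\[
|E(t)|\;\lesssim\;C(d,H)\,(ct)^{-(2H+d)-2n}\,\norm{g_1}_{H^{n,n}}\,\norm{g_2}_{H^{n,n}},
\]
which is the remainder of the second displayed equality (consistent with the rate $(ct)^{-(2H+d+2n)}$ announced in \Cref{rk:negative_H}).

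I expect the main obstacle to be conceptual rather than computational: one must not pass to the $t\to\infty$ limit before testing against $g_1,g_2$, since the limiting kernel $C(d,H)\,\chi_{|k|>\kappa}\,|k|^{-(2H+d)}$ is not integrable over $\R_k^d$ when $H\le0$, so every estimate has to be carried out at finite $t$, the fast decay of the remainder $E(t)$ — which rests on the cutoff $\chi_{|k|>ct+\kappa}$ combined with the Sobolev regularity of the $g_i$ — providing the rate of convergence. Once the identification $\Psi_{d,H}(0)=\Psi_{d,H}(\kappa)=C(d,H)$ used in Step 2 is in place, the rest is the short argument of Step 3.
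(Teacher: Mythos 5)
Your proof follows essentially the same route as the paper's: dualize the finite-time correlation formula of \Cref{thm:correlations}, insert the piecewise expression for $F(t,|k|)$ from \Cref{rk-F} together with $\Psi_{d,H}(0)=\Psi_{d,H}(\kappa)=C(d,H)$, and bound the tail term supported on $\{|k|>ct+\kappa\}$ via $|\Psi_{d,H}(|k|-ct)-\Psi_{d,H}(0)|\le C(d,H)$, the pointwise decay of $|k|^{-(2H+d)}\langle k\rangle^{-2n}$ there, and Cauchy--Schwarz against the $H^{n,n}$ norms. The only (immaterial) deviation is that you justify the Fubini/well-definedness step through the finite variance of $u(t,\cdot)$ and $g_i\in L^1$, whereas the paper routes it through the pointwise bounds $|G(s,x,y)|\lesssim_t\langle x\rangle^{d+1}\langle y\rangle^{-d-1}$ on the kernel of the stochastic integral; both uses of $n>3(d+1)/2$ serve the same purpose.
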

\begin{rk}
If one chooses test functions $g_1,g_2\in X\cap \Sc (\R^d)$, then the above result yields a rate of convergence proportional to $(ct)^{-(2H+d)-2n}$ for $n$ as large as desired.
\end{rk}
\begin{proof}
{\bf Step 1.} Firstly, note that $\langle u(t),g_1\rangle$ is well-defined for each $t>0$. This follows after a similar argument to the one that justifies \eqref{eq:Fubini_weak}, i.e. the fact that $G(s,x,y) g_1 (x)$ is absolutely integrable with respect to the Lebesgue measure in $x$ and with respect $dW(s,y)$. Indeed, we showed that $|G(s,x,y)|\lesssim_t \langle x\rangle^{d+1} \langle y\rangle^{-d-1}$, so one needs only show that $\langle x\rangle^{d+1} |g_1 (x)|\in L^1 (\R^d_x)$. But this follows from the choice of $n$ together with the Cauchy-Schwarz inequality:
\[
\int_{\R^d} \langle x\rangle^{d+1} |g_1 (x)| \, dx \lesssim \norm{g_1}_{H^{n,n}}\, \left(\int_{\R^d} \langle x\rangle^{-d-1} \, dx\right)^{1/2}.
\]

By \Cref{thm:correlations}, we have:
\[
\begin{split}
\E [ \langle u(t),g_1 \rangle  {\langle u(t),g_2 \rangle } ] & =
 \int_{\mathbb{R}_{x_1}^d \times \mathbb{R}_{x_2}^d}  \left( \int_{\R_k^d} |k|^{-2H-d} F(t,|k|) e^{i 2 \pi k(x_1-x_2)} \, dk \right) g_1(x_1) g_2(x_2) \, dx_1 dx_2 \\
& =  \int_{\R_k^d} |k|^{-2H-d} F(t,|k|) \,\widehat{g_1}(k) \, \overline{\widehat{g_2}(k)} dk 
\end{split}
\]
Finally, we use the expression for $F(t,|k|)$ obtained in \Cref{rk-F}. This yields \eqref{corr-dulalisation}.

{\bf Step 2.} In order to obtain an asymptotic expansion in terms of $t$, note that it suffices to study the last term on the right-hand side of \eqref{corr-dulalisation} which, using the definition of $\Psi_{d,H}$, can be rewritten as follows:
\[
-\int_{\R_k^d} \chi_{|k| > ct + \kappa}\, |k|^{-(2H+d)} \, \left( \int_{\kappa}^{|k|-ct} s^{2H+d} \psi (s)\, ds \right) \, \widehat{g_1}(k) \, \overline{\widehat{g_2}(k)} \,dk.
\]
Note that generally, we cannot expect a better bound than 
\[
\Big |\chi_{|k| > ct + \kappa}\, \left( \int_{\kappa}^{|k|-ct} s^{2H+d} \psi (s)\, ds \right) \Big | \leq C(d,H) \lesssim 1.
\]
Certainly, better bounds could be obtained should $\psi$ vanish on a large ball around zero, but this is not the physical setting we are interested in. 

%\blue{R: In fact assuming that $\psi(s)\gtrsim 1$ for $s\in [a,b]$ then automatically 
%\[
%\int_{\kappa}^{|k|-ct} s^{2H+d} \psi (s)\, ds  \gtrsim \chi_{|k| > ct +b} (b^{2H+d} - a^{2H+d} )\gtrsim \chi_{|k| > ct +b} .
%\]
%and so our bound must be reasonably sharp up to a constant.}

Therefore, by the Cauchy-Schwarz inequality,
\begin{multline*}
\Big |\int_{\R_k^d} \chi_{|k| > ct + \kappa}\, |k|^{-(2H+d)} \, \left( \int_{\kappa}^{|k|-ct} s^{2H+d} \psi (s)\, ds \right) \, \widehat{g_1}(k) \, \overline{\widehat{g_2}(k)} \,dk \Big | \\
\leq C(d,H)\,\int_{\R_k^d} \chi_{|k| > ct + \kappa}\, |k|^{-(2H+d)} \, |\widehat{g_1}(k)| \, |\widehat{g_2}(k)| \,dk\\
\lesssim C(d,H)\, (ct)^{-(2H+d)-2n} \, \norm{g_1}_{H^{n,n}} \, \norm{g_2}_{H^{n,n}} .
\end{multline*}
\end{proof}

Thanks to the formula for the correlations obtained in \Cref{thm:correlations_tested}, we may now study the asymptotic behavior of $u(t)$ as $t\rightarrow\infty$. That is the content of the next result.

\begin{cor}\label{thm:correlations_tested_infty}
Let $H\in [-d/2,0]$ and $n>3(d+1)/2$. The solution $u(t)$ in \eqref{eq:mainsol} converges in law, as $t\rightarrow\infty$, to a random Gaussian measure acting on $X \cap H^{n,n} (\R^d)$. Moreover, the correlation structure of the limiting Gaussian measure is given by:
\begin{equation}\label{corr-infty-2}
\begin{split}
\E [  \langle u_{\infty}, g_1 \rangle {\langle u_{\infty}, g_2 \rangle}]
=  & \  \lim_{t\rightarrow\infty} \E [ \langle u(t), g_1 \rangle {\langle u(t), g_2 \rangle}] \\
=& \ 
C(d,H)
 \int_{\R_k^d}  \chi_{|k| >  \kappa} \, |k|^{-(2H+d)} \,\widehat{g_1}(k) \,\overline{\widehat{g_2}(k)}  \, dk\\
&  - \int_{\R_k^d}  \chi_{|k| >  \kappa} \, |k|^{-(2H+d)}\, \Psi_{d,H}(|k|)\, \widehat{g_1}(k) \,\overline{\widehat{g_2}(k)} \, dk.
\end{split}
\end{equation}
for any $g_1,g_2\in X\cap H^{n,n} (\R^d)$.
\end{cor}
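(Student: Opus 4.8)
The plan is to follow the template of the proof of \Cref{thm:1d_cmain}: introduce the time‑reversed process
\[
v(t,x) := \int_0^t \int_{\R^d_y} G(s,x,y)\, dW(s,y),
\]
with $G$ as in \Cref{reg-G}, which for each fixed $t$ has the same law as $u(t,x)$ (both are centered Gaussian with the same covariance, computed in \Cref{thm:correlations}); then show that $\langle v(t),g\rangle$ converges in $L^2(\Omega)$ for every test function $g\in X\cap H^{n,n}(\R^d)$, and transfer this to $u$. The only point where the hypothesis $H\in[-d/2,0]$ is used is that $2H+d\ge 0$, which keeps $|k|^{-(2H+d)}$ bounded on $\{|k|>\kappa\}$.

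First I would fix $g\in X\cap H^{n,n}(\R^d)$ and prove that $(s,y)\mapsto \int_{\R^d_x} G(s,x,y)g(x)\,dx$ lies in $L^2([0,\infty)_s\times\R^d_y)$. By \Cref{thm:correlations_tested} (with $g_1=g_2=g$), the quantity $\E|\langle v(t),g\rangle|^2$ equals $\int_0^t\int_{\R^d_y}\big|\int G(s,x,y)g(x)\,dx\big|^2\,dy\,ds$, which is nondecreasing in $t$; and, by the same proposition, it also equals
\[
\int_{\R^d_k}\chi_{|k|>\kappa}\,|k|^{-(2H+d)}\big(C(d,H)-\Psi_{d,H}(|k|)\big)\,|\widehat g(k)|^2\,dk \;+\; O\big((ct)^{-(2H+d)-2n}\big).
\]
Since $2H+d\ge 0$ and $n>3(d+1)/2$, the leading term is finite and $t$-independent while the error exponent $-(2H+d)-2n$ is strictly negative, so $\E|\langle v(t),g\rangle|^2$ is bounded; being monotone, it converges, which gives the claimed $L^2$ membership of the kernel. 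Consequently the Wiener integral $\langle v_\infty,g\rangle:=\int_0^\infty\int_{\R^d_y}\big(\int_{\R^d_x}G(s,x,y)g(x)\,dx\big)dW(s,y)$ is well defined, and $\E|\langle v_\infty,g\rangle-\langle v(t),g\rangle|^2=\int_t^\infty\int_{\R^d_y}\big|\int G(s,x,y)g(x)\,dx\big|^2\,dy\,ds\to 0$, i.e.\ $\langle v(t),g\rangle\to\langle v_\infty,g\rangle$ in $L^2(\Omega)$.

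Next I would transfer this to $u$. For each fixed $t$, $\langle u(t),g\rangle$ and $\langle v(t),g\rangle$ are centered Gaussians with equal variance; since the latter converges, $\langle u(t),g\rangle$ converges in law to a centered Gaussian $\langle u_\infty,g\rangle$ with variance $\E|\langle v_\infty,g\rangle|^2$ (and all moments converge). For the cross-correlations I would argue as in Step 2 of the proof of \Cref{thm:1d_cmain}: the polarization identity together with the $L^2(\Omega)$-convergence of $\langle v(t),\cdot\rangle$ yields $\E[\langle u_\infty,g_1\rangle\langle u_\infty,g_2\rangle]=\E[\langle v_\infty,g_1\rangle\langle v_\infty,g_2\rangle]=\lim_{t\to\infty}\E[\langle v(t),g_1\rangle\langle v(t),g_2\rangle]=\lim_{t\to\infty}\E[\langle u(t),g_1\rangle\langle u(t),g_2\rangle]$. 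Inserting the formula \eqref{corr-dulalisation} from \Cref{thm:correlations_tested} and letting $t\to\infty$, the last line there vanishes (its prefactor is $O((ct)^{-(2H+d)-2n})\to0$), leaving exactly \eqref{corr-infty-2}. Finally, to see that $u_\infty$ is a random Gaussian measure on $X\cap H^{n,n}(\R^d)$, note that $g\mapsto\langle u_\infty,g\rangle$ is linear (an $L^2(\Omega)$-limit of the linear maps $g\mapsto\langle u(t),g\rangle$), each $\langle u_\infty,g\rangle$ is centered Gaussian, and the bilinear covariance form in \eqref{corr-infty-2} is continuous on $X\cap H^{n,n}(\R^d)$, because both $|k|^{-(2H+d)}$ (as $2H+d\ge0$ on $\{|k|>\kappa\}$) and $\Psi_{d,H}$ are bounded, so the right-hand side is $\lesssim \norm{\widehat{g_1}}_{L^2}\norm{\widehat{g_2}}_{L^2}\lesssim\norm{g_1}_{H^{n,n}}\norm{g_2}_{H^{n,n}}$.

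I expect the main (and essentially only) obstacle to be the $L^2$-integrability of the tested Green kernel over the \emph{infinite} time horizon; this is what forces the passage through \Cref{thm:correlations_tested}, whose finite-time identity combined with the monotonicity in $t$ of $\E|\langle v(t),g\rangle|^2$ makes the argument work. The exchange of limit and expectation for the cross-correlations is handled, as above, by polarization. All the hard analytic estimates — smoothness and decay of $G$, the explicit correlation formula, the rate $(ct)^{-(2H+d)-2n}$ — have already been established in \Cref{reg-G} and \Cref{thm:correlations_tested}, so the remaining work is bookkeeping.
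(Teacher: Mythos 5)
Your proposal is correct and follows essentially the same route as the paper: pass to the time-reversed process $v$, use the finite-time correlation formula of \Cref{thm:correlations_tested} to obtain $L^2(\Omega)$ convergence of $\langle v(t),g\rangle$, transfer to $u$ by equality in law at each fixed $t$, and handle the cross-correlations by polarization as in Step 2 of the proof of \Cref{thm:1d_cmain}. The one place where you are more explicit than the paper --- replacing the pointwise bound $\int_0^\infty\!\int |G(s,x,y)|^2\,dy\,ds<\infty$ (which fails for $H\le 0$) by the monotone-and-bounded argument for the tested kernel --- is precisely the adaptation that the paper's phrase ``similar argument to the one in the proof of \Cref{thm:correlations_infty}'' is silently invoking.
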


\begin{proof}
The convergence of $\langle u(t),g\rangle$ as $t\rightarrow\infty$ follows from %\eqref{corr-infty-2} for $g_1=g_2$ 
a similar argument to the one in the proof of \Cref{thm:correlations_infty}, so let us focus on proving \eqref{corr-infty-2}. The second  equality in \eqref{corr-infty-2} follows by taking $t\rightarrow\infty$ in \eqref{corr-dulalisation}. Finally, the first equality admits the same argument as Step 2 in the proof of \Cref{thm:1d_cmain}, so we omit it.
\end{proof}

\subsection{The case $H=-d/2$}\label{sec-whitenoise}

In this case, note that the limiting Gaussian field $u_{\infty}$ obtained in \Cref{thm:correlations_tested_infty} seems to  differ from the usual white noise measure in two ways:
\begin{enumerate}[(i)]
\item the space of test functions $X\cap \Sc (\R^d)$, which only admits functions whose Fourier transform vanishes in $|k|<\kappa$; and
\item the top order term in \eqref{corr-infty-2}, which involves a cut-off to wavenumbers $|k|>\kappa$.
\end{enumerate}

Point (ii) can be fixed by defining a continuous zero-mean Gaussian field $u_{\mathcal{I}}$ with correlations
\begin{equation}\label{eq:def_ureg}
\begin{split}
\E [u_{\mathcal{I}} (x_1) u_{\mathcal{I}}(x_2)] & =  \mathcal{J}_{-d/2}(x_1-x_2) + \mathcal{F}^{-1}\left[  \chi_{|k| \leq \kappa} \right](x_1-x_2)\\
& =  \int_{\R^d_k} \chi_{|k| >  \kappa} \,\Psi_{d,H}(|k|)\, e^{2\pi i k\cdot(x_1-x_2)} \, dk  + \mathcal{F}^{-1}\left[  \chi_{|k| \leq \kappa} \right](x_1-x_2)\\
& =: \mathcal{I}(x_1-x_2),
\end{split}
\end{equation}
where $\mathcal{J}$ was defined in \eqref{eq:KH_GH}.
Note that $\mathcal{I}(x)$ is a continuous, radial, square-integrable function, and therefore they are much more regular than a delta function. For instance, in the one-dimensional case $d=1$, the second term on the right-hand side of \eqref{eq:def_ureg} corresponds to the sinc function.

Using \eqref{eq:def_ureg}, we may rewrite the correlations of our solution $u_{\infty}$ in \eqref{corr-infty-2} as follows: 
\begin{equation}\label{eq:true_whitenoise}
\begin{split}
 \E [ \langle u_{\infty}, g_1 \rangle {\langle u_{\infty}, g_2 \rangle}]
 & =  \E [ \langle  dW, g_1 \rangle {\langle  dW, g_2 \rangle}]
- \int_{\mathbb{R}^{2d}}  \mathcal{I}(x_1-x_2)\, g_1(x_1) \, g_2(x_2) \, dx_1 dx_2\\
& = \int_{\R^d_x} g_1(x) g_2(x)\, dx- \int_{\mathbb{R}^{2d}} \mathcal{I}(x_1-x_2)\, g_1(x_1) \, g_2(x_2) \, dx_1 dx_2
\end{split}
\end{equation}
for any $g_1,g_2 \in  X\cap \mathcal{S} (\R_x^d)$. This solves point (ii). % at the begining of \Cref{sec-whitenoise}.

Let us next focus on point (i), regarding the space of test functions. There is a natural way to regard $u(t)$ as a distribution in $\Sc(\R^d)'$ (as opposed to $(X\cap \Sc(\R^d))'$). Fix a test function $h\in \Sc (\R^d)$ such that 
\begin{equation}\label{eq:cond_h}
\begin{cases}
\widehat{h}(k) = 0 \qquad \mbox{if}\ |k|\geq \kappa,\\
\widehat{h}(k) \  \mbox{is radial and takes values in}\  [0,1]\ \mbox{for}\ |k|<\kappa.
\end{cases}
\end{equation}
For any test function $g\in \Sc (\R^d)$, one may decompose
\[
g = \mathcal{L}_1 g+\mathcal{L}_2 g := \mathcal{F}^{-1} [ \,\widehat{g} \, \widehat{h}\, ] + \mathcal{F}^{-1} [ \,\widehat{g} \, (1-\widehat{h}) \,] .
\]
%Then note that we may define 
%\[
%\langle u(t), g\rangle :=  \langle u(t), \mathcal{L}_2 g\rangle .
%\]
%This is completely equivalent to small 
Then one may make a small modification to the notion of solution in \Cref{def:weak_sol}. Indeed, one could define a weak solution to to \eqref{eq:maineq-div-phy} to be a stochastic process $u$ such that $u\in L^1_{\mathrm{loc}} ( [0,\infty)_t \times \R^d)$ almost surely and such that for all $g \in C^\infty( (0,\infty), \Sc (\R_x^d))$
\begin{multline*}
 \int_{\R_x^d} u(t,x) \mathcal{L}_2g(t,x) dx + \int_0^t \int_{\R_x^d} u(s,x) (-\pa_s + A^\ast) \mathcal{L}_2g(s,x) ds dx \\
 = \int_0^t  \int_{\R_y^d}   \left( \int_{\R_x^d}  \varphi (x-y)\, \mathcal{L}_2 g(s,x) dx \right) dW(s,y)
\end{multline*}
and
\[
\int_0^t \int_{\R_x^d} u(s,x)\, \mathcal{L}_1 g(s,x) ds dx = 0 .
\]

Note that the solution found in \eqref{eq:mainsol} is still a solution with this new definition. Moreover, the solution does not depend on the choice of function $h$ as long as it satisfies \eqref{eq:cond_h}. 

This allows us to view $u(t)$ and $u_{\infty}$ as distributions in $\Sc(\R^d)'$, and thus to extend \eqref{eq:true_whitenoise} to test functions in $\Sc(\R^d)$. Then the top order of \eqref{eq:true_whitenoise} does indeed correspond to a white noise in the classical sense.

\section{Numerical Simulations}\label{sec-simulations}

Our goal in this section is to perform numerical simulations to illustrate the theory presented in the previous sections. We recall the continuous problem \eqref{eq:maineq-div-viscous}:
\begin{equation}\label{eq:maineq-div-viscous_simul}
\begin{cases}
\pa_t \widehat{u}(t,k) +  \mathfrak{L}\big( \widehat{u} \big) (t,k) + \mathfrak{D}\big( \widehat{u} \big) (t,k) = \widehat{f}(t,k) 
& \text{ for }  t>0, k \in \mathbb{R}^d,  |k| >\kappa \\
\widehat{u}(t, k)=0 &\text{ for }  t>0, k \in \mathbb{R}^d, |k| \leq \kappa, \\
\widehat{u}(0,k)=0 &\text{ for }  k \in \mathbb{R}^d,\\
\end{cases}
\end{equation}
where
\begin{equation}\label{flux-numeric}
\mathfrak{L}\big( \widehat{u} \big) (t,k):= c\vdiv_k \left(  \frac{ k}{| k |} \, \widehat{u}(t,k)\right)
\quad \text{and} \quad
\mathfrak{D}\big( \widehat{u} \big) (t,k):= \left(c\, \displaystyle \frac{H+ \frac{1}{2}}{|k|} \,  + (2\pi)^2\nu |k|^2\right)\widehat{u}(t,k)
\end{equation}
Here, $H$ is eventually the Hurst exponent of the solution (at infinite time), and $\nu>0$ denotes viscosity. 
%\gbrep{The introduction of viscosity is necessary to perform a statistical analysis of the (statistically) stationary solution at a finite time, whereas the inviscid problem proposed in \eqref{eq:maineq-div} will eventually populate scales as small as $1/t$ as time $t$ goes on, leading to numerical instabilities and a blow up when simulated in a finite periodic box with a finite resolution.}
The introduction of viscosity is necessary in order to reach a statistically stationary state at a finite time, state in which a statistical analysis is possible. In the inviscid problem proposed in \eqref{eq:maineq-div}, on the other hand, scales as small as $1/ct$ are populated at time $t$, leading to numerical instabilities and a blow up when simulated in a finite periodic box with a finite resolution.

Our numerical method combines ideas from time predictor-corrector schemes \cite{kloeden2011numerical} and pseudo-spectral methods \cite{pope2000turbulent}. In particular, this implies using the Discrete Fourier Transform (DFT), which forces certain choices regarding the discretization. 

We will denote by $\widehat{u}[t,k]$ the (discrete) vector whose continuous counterpart is $\widehat{u}(t,k)$. We denote by $\Delta t >0$ the time stepping and by $\Delta x>0$ the mesh size. The mesh size is also the smallest accessible length scale.

\subsection{Numerical method}

\subsubsection{Discretization}

We discretize our physical space by considering a discrete periodic box $x\in(\Z/N\Z)^d$ of unit length $L_{tot}=1$, using $N=2^n$ collocation points in each direction, with $n\in\N$ where we adopt the convention that $0$ is not a natural number. Therefore, the mesh size is $$\Delta x=L_{tot}/N.$$ The wave vector $k=(k_i)_{1\le i\le d}$ is discretized as 
\[
k_i=[0,1,...,N/2,-N/2+1,-N/2+2,...,-1]\Delta k,
 \]
where the spectral resolution is given by $$\Delta k=1/L_{tot}.$$ This discretization is standard when using the Discrete Fourier Transform (DFT). The choice of starting from $k_i=0$ as the first element of the array is dictated by the convention that is used to define the DFT and its inverse.

In order to discretize the divergence term $\mathfrak{L}\big( \widehat{u} \big)$ in \eqref{flux-numeric}, we introduce the following discretization of derivatives in each component $k_j$ of the wavenumber:
\begin{equation}\label{eq:MeaningDifferentK}
 \partial_{k_j}\widehat{g}[t,k] = \text{DFT}\left[ -2\pi i \tilde{x}_j \text{DFT}^{-1}\left[\widehat{g}[t,k]\right] \right],
 \end{equation}
where the component $\tilde{x}_j =[0,1,...,N/2-1,0,-N/2+1,-N/2+2,...,-1]\Delta x$, considered as a vector and  $\text{DFT}^{-1}\left[\widehat{g}[k]\right]$ as a scalar, corresponds to the component $j$ of the position $x_j$ where the ``Nyquist'' mode has been set to 0 in order to respect the parity of the differentiation. 

\subsubsection{Discretization of the forcing}

In order to produce the forcing term $\widehat{f}[t,k]$, we generate an instance of $N^d$ independent, 
zero-average and unit-variance Gaussian random variables at each time step, which we store in a vector $g[t,x]$. 
Then we weigh them by the appropriate factor $(\Delta x)^{d/2}$, we take the DFT and multiply by the indicator function
 $\chi_{3\le |k|L_{tot} \le 5}$, which ensures in particular that no energy is injected at the mode $k=0$, and mostly at large scales.
Mathematically, one gets
\begin{equation}\label{eq:num_forcing}
\widehat{f}[t,k] = \chi_{3\le |k|L_{tot} \le 5}\, \text{DFT}\left[ (\Delta x)^{d/2}\, g[t,x]\right].
\end{equation}
Finally, we choose $\kappa = 1/L_{tot}$ which is smaller than the smallest non vanishing wavenumber of the source $\widehat{f}[t,k]$ (see Remark \ref{rk-kappa-source}).

%, such that the discrete solution $\widehat{u}[t,k]$ evolves in time according to the fourth step of the algorithm depicted in Paragraph \ref{Sec:Algo} for any non vanishing wavevector $k\ne 0$, whereas it is frozen in its initial value $\widehat{u}[t,0]=0$ at any time only at the origin of the spectral domain $|k|=0$.

\subsubsection{Algorithm}\label{Sec:Algo}

\

\noindent $\bullet$ We pre-compute the forcing $\widehat{f}[t,k]$ as in \eqref{eq:num_forcing}.

\noindent $\bullet$ Initialization step: $\widehat{u}[0,k]=0$.

\noindent $\bullet$ Induction step: given $\widehat{u}[t,k]$, we compute $\widehat{u}[t+\Delta t,k]$ via the following procedure:
\begin{enumerate}
\item {\it{(Prediction step - spatial part)}} Compute $ \mathfrak{L}\big( \widehat{u} \big)[t,k]$ according to \eqref{eq:MeaningDifferentK}, and compute the numerical damping $\mathfrak{D}\big( \widehat{u} \big)[t,k]$.
\item {\it{(Prediction step - temporal part)}} For each $k$ such that $|k| \geq \kappa$, compute the predictor $\widehat{u}^*[t,k]$ according to
\[
\frac{\widehat{u}^*[t,k] - \widehat{u}[t,k]}{\Delta t} +  \mathfrak{L}\big( \widehat{u} \big)[t,k] + \mathfrak{D}\big( \widehat{u} \big)[t,k] = \widehat{f}[t,k] (\Delta t)^{-\frac{1}{2}}\ .
\]
\item {\it{(Correction step - spatial part)}} Compute $ \mathfrak{L}\big( \widehat{u}^* \big)[t,k]$ with \eqref{eq:MeaningDifferentK}, and the numerical damping $\mathfrak{D}\big( \widehat{u}^* \big)[t,k]$.
\item {\it{(Correction step - temporal part)}}  For each $k$ such that $|k| \geq \kappa$, compute the corrector $\widehat{u}[t+\Delta t,k]$ according to
\[
\frac{\widehat{u}[t+\Delta t,k]- \widehat{u}[t,k]}{\Delta t} + \frac{\mathfrak{L}\big( \widehat{u}^* \big)[t,k] + \mathfrak{D}\big( \widehat{u}^* \big)[t,k] + \mathfrak{L}\big( \widehat{u} \big)[t,k] + \mathfrak{D}\big( \widehat{u} \big)[t,k]}{2} =  \widehat{f}[t,k]  (\Delta t)^{-\frac{1}{2}}\ .
\]
\end{enumerate}

\subsection{Discussion of the numerical method} 

\subsubsection{Discretization of the operators $\mathfrak{L}$}

For any choice made for the estimation of the derivatives entering in $\mathfrak{L}\big( \widehat{u} \big)$ (see \eqref{flux-numeric}), the way the wavevector is discretized in a Cartesian fashion is not well adapted to this numerical problem which has natural spherical symmetry. Besides the forcing $ \widehat{f}$, which is only statistically isotropic, the deterministic part of the evolution is spherically symmetric\footnote{
Indeed, all quantities involve the vector amplitude $|k|$. Moreover, $\mathfrak{L}\big( \widehat{u} \big)$ can be written as a radial derivative with respect to $|k|$ (see \eqref{vector-identities}).}.
%as it can be seen while considering a division by the wave vector amplitude $|k|$ and noticing that the divergence operator can be written as a radial derivatives with respect to $|k|$,}
%{since it only involves the vector amplitude $|k|$ and the divergence operator, which can be written as a radial derivative with respect to $|k|$,}
%as it is argued in \eqref{vector-identities}.
Nonetheless, the way it is presently discretized in a Cartesian form allows an easy and standard implementation of the DFT (using the Fast Fourier Transform algorithm) such that the solution $u[t,x]$ in physical space, over the discrete set of positions $x=(x_i)_{1\le i\le d}$ with $x_i=[0,1,...,N/2,-N/2+1,-N/2+2,...,-1]\Delta x$, is obtained with an inverse DFT of $\widehat{u}[t,k]$.

As we have already mentioned, we cannot make sense of $\widehat{u}(t,k)$ as a pointwise function of $k$. To this regard, the formulation of the problem in Fourier space needs to be justified and a proper meaning has to be given to the divergence of a rough field, see \Cref{sec-heuristic,sec-rigorous} for full details. For the purpose of numerical simulations we have thus proposed \eqref{eq:MeaningDifferentK} as a numerical realization of the derivatives, with the use of back and forth DFTs and an appropriate multiplication by a space-dependent factor.

\subsubsection{Discretization of the operators $\partial_t$}

Concerning time integration for a given time stepping $\Delta t$, we use an explicit predictor-corrector method with a single independent realization of the forcing, consisting in predicting the solution $\widehat{u}[t+\Delta t,k]$ using an explicit Euler discretization scheme taking previous time step $\widehat{u}[t,k]$ as an initial condition, and correcting by another Euler scheme which weighs the initial condition and the prediction equally.

\subsubsection{Choice of discretization parameters}\label{RK:Parameters}

Let us give the last relevant parameters that we use for our simulations, keeping in mind that the spatial resolution $\Delta x=L_{tot}/N$ is determined by the physical length of the side of the periodic box (henceforth, we choose $L_{tot}=1$ without loss of generality), and the number of collocation points $N$ in each direction.  As we will see, the viscosity $\nu$ is chosen accordingly. 

Firstly, the time step $\Delta t$ has to be chosen. At this stage, a CFL criterion for the dynamics proposed in our algorithm is not clear. Thus for the time being, we assume that a reasonable stability criterion may come from the time derivative and the viscous term. In that spirit, it seems reasonable to consider the numerical stability of the heat equation, which is imposed by the viscous term  and which requires\footnote{A more sophisticated scheme could be written including explicitly the exact solution of the underlying heat equation in the time stepping; this is known as an exponential scheme and would allow for $\nu\Delta t$ of order $\Delta x$.} that $\nu\Delta t<(\Delta x)^2/2$ (see for instance Ref. \cite{press2007numerical}).

Forthcoming simulations will be done for very high values of the number of collocation points $N$ in order to consider small values for the viscosity $\nu$. Even in the most comfortable situation where $\nu\Delta t$ is of order $\Delta x$, the numerical cost will eventually become prohibitive. For this reason, we choose a time step independent of resolution and dimension $d$, and given by $\Delta t=5 \times 10^{-3}$. Additional simulations (data not shown) for $d=1$ and $d=2$ using $\Delta t=(\Delta x)^2/2$ and for moderate numbers of collocation points $N$ have given similar numerical results as the ones obtained with $\Delta t=5 \times10^{-3}$. It is unclear at this stage why this chosen value of the time step does not lead to numerical instabilities, although we could invoke the fact that viscosities will be chosen very small.

\subsection{Simulations}
 
 \subsubsection{Determination of viscosity and averaging procedure} \label{Sec:DeterViscAver}
 
 All predictions that have been made in the theoretical sections concern the statistical behavior of the solution $u(t,x)$ of the continuous problem recalled in \eqref{eq:maineq-div-viscous}, with or without viscosity. From a numerical point of view, we need to define a time $T_\ast$ at which the system has reached a statistically stationary regime, in which mathematical expectations will be approximated by an empirical average in time. Because of the cascading process of energy will eventually populate modes at higher and higher wavenumber $k$ as time goes on, we introduce viscosity in order to damp all the energy once it has reached the highest accessible wave-number $$k_{\max} := (N/2)\Delta k.$$
 %ARGH ARGH ARGH This provides numerical stability ARGH, prevents blow up ARGH, and ensures the emergence of such an expected statistically steady regime. 
In an inviscid ($\nu=0$) regime, energy injected at low wavenumbers eventually reaches $k_{\max}$ at a time scale of the order of $$T_\ast := k_{\max}/c.$$ 
%and way before then, viscosity began to dissipate energy. 
We choose $\nu$ smaller than $ck_{\max}^{-3}$ to ensure a strong decrease of the spectral density at high wavenumbers (see Theorem \ref{heuristic-theorem_nu}). In practice, exploratory simulations will be carried out with various values of $\nu$, all of them satisfying this above criterion. This criterion is essential to ensure that no waves are reflected at the boundary of the artificial wavenumber periodic box.
We will consider that $T_\ast $ is the time when the statistically stationary regime is reached.
%We thus run the simulation starting from the null initial condition until $T_\ast$ such that we are confident that all accessible wavenumbers have been reached, and take this as the onset of the statistically stationary regime in presence of viscosity. 
 
We let then the simulation run after  $T_\ast$, and we will average statistical estimators that we will define later over a set made of 100 instances every 
$(10^3 \Delta t)$ to ensure statistical independence\footnote{The number of instances and the time between each samples are empirically chosen.}. More explicitly, forthcoming time averages $\langle g[t]\rangle_t$ of a given function of time $g[t]$ will be taken as
\begin{equation}\label{time-average}
 \langle g[t] \rangle_t := \frac{1}{100} \sum_{n=1}^{100} g[T_\ast + n \times (10^3 \Delta t)].
\end{equation}
Accordingly, we stop the integration in time at $$T^\ast= T_\ast + 100 \times (10^3\Delta t).$$

% In this spirit we fix this important parameters for the simulation:
%\begin{itemize}
%\item $T_\ast := k_{\max}/c$ is the time when the statistically stationary regime is reached
%\item $T^\ast= T_\ast + 100 \times (10^3\Delta t)$  is time at which we stop the simulation where 100 is the number of instances and $(10^3 \Delta t)$  is the time between each samples to guarantee statistical independence\footnote{The number of instances and the time between each samples are empirically chosen.}.
%\item $\nu$ is taken in order that
%\begin{equation}\label{viscosity-end-of-spectrum}
%k_{\max}^{-(2H+d)} e^{- \frac{8 \pi^2 \nu}{3c} k_{\max}^3} \approx 0.
%\end{equation}
%In practise, exploratory simulations will be carried out with various values of $\nu$, all of them satisfying this above criterion. This criterion is essential to ensure that no waves are reflected at the boundary of the artificial periodic box.
%\end{itemize}

  %This fully justifies the use of viscosity in forthcoming simulations that will take care of damping the energy at these wave-numbers, such that the system does not blow up at later time $t>k_{\max}/c$. 

With these choices, simulations are longer and longer as the number of collocation points $N$ increases, but it allows for smaller and smaller viscosities, which is necessary to develop an extended inertial range.

 \subsubsection{Definition and estimation of key statistical quantities}\label{RK:DefEstimators}

 %Let us define some key statistical estimators that we will be using in the sequel. \\
 
First of all, the $L^2$-norm $\sigma^2_u[t]$ of the solution $u[t,x]$ at a given instant $t$ is defined by
\begin{equation}\label{eq:L2norm}
\sigma^2_u[t] =\sum_k |\widehat{u}[t,k]|^2 \Delta k,
\end{equation}
where the sum is taken over all possible values of the discrete wave vector $k$.
%, and can be alternatively obtained while summing $(u[t,x])^2$ over all positions since the DFT is defined as an isometry (i.e. the Parseval identity). 
Initially, we have $\sigma^2_u[0]=0$, it will then quickly grow and end up fluctuating around a certain average value way before $T_*$ (data not shown). This aforementioned mean value $\langle \sigma^2_u[t]\rangle_t$, where the time average procedure is defined in \eqref{time-average}, could be considered as the variance of the solution, and we have checked that it is independent of viscosity as expected if $\nu$ is chosen small enough. From the physical point of view, this is consistent with the observation that the variance of the solution of the Navier-Stokes equations get independent of viscosity at large Reynolds numbers \eqref{eq:AsymptVarianceNS}.

To characterize more precisely the statistical behavior of the solution when the $L^2$-norm starts fluctuating around a mean value \eqref{eq:L2norm}, we define the energy spectral density estimated as a periodogram, i.e. the norm square of the Fourier mode, that is
\begin{equation}\label{eq:PSD}
\widehat{C}_u(t,k) =  |\widehat{u}[t,k]|^2,
\end{equation}
and its averaged version
\begin{equation}\label{eq:PSDaverage}
\widehat{C}_u(k) =  \langle |\widehat{u}[t,k]|^2\rangle_t 
\end{equation}
where the time-average is defined in \eqref{time-average}. 

%We have chosen the viscosity $\nu$ in \eqref{viscosity-end-of-spectrum} such that $\widehat{C}_u(k) \approx 0$ for all $k$ such that $|k| > k_{\max}$.\\

Another quantity of great importance is the so-called second-order structure function, i.e. the variance of the increment over a scale $\ell\in \R^d$, and given by
\begin{equation}\label{eq:S2L}
S_2(\ell) =  \langle (u[t,x+\ell]-u[t,x])^2\rangle_{t,x},
\end{equation}
where the time average is defined in \eqref{time-average} and the additional spatial-average is taken over $x\in\R^d$ for a given function of space $g[x]$: 
 $$
\langle g[x] \rangle_{x} := \frac{1}{L_{\rm tot}^d} \sum_{x}  g[x](\Delta x)^d.
$$
where the sum is taken over all possible values of the discrete wave vector $x$.
We recall that concerning solutions of the Navier-Stokes equations, it is observed that $S_2(\ell)$ behaves as $|\ell|^{2/3}$ at infinite Reynolds numbers, corresponding to a regularity of H\"{o}lder type with $H=1/3$ (c.f. \eqref{eq:AsymptVarianceIncrNS}).
  
 \subsubsection{Results and comments in dimension $d=1$}

 \begin{figure}%[t]
    \centering
 \includegraphics[width=14cm]{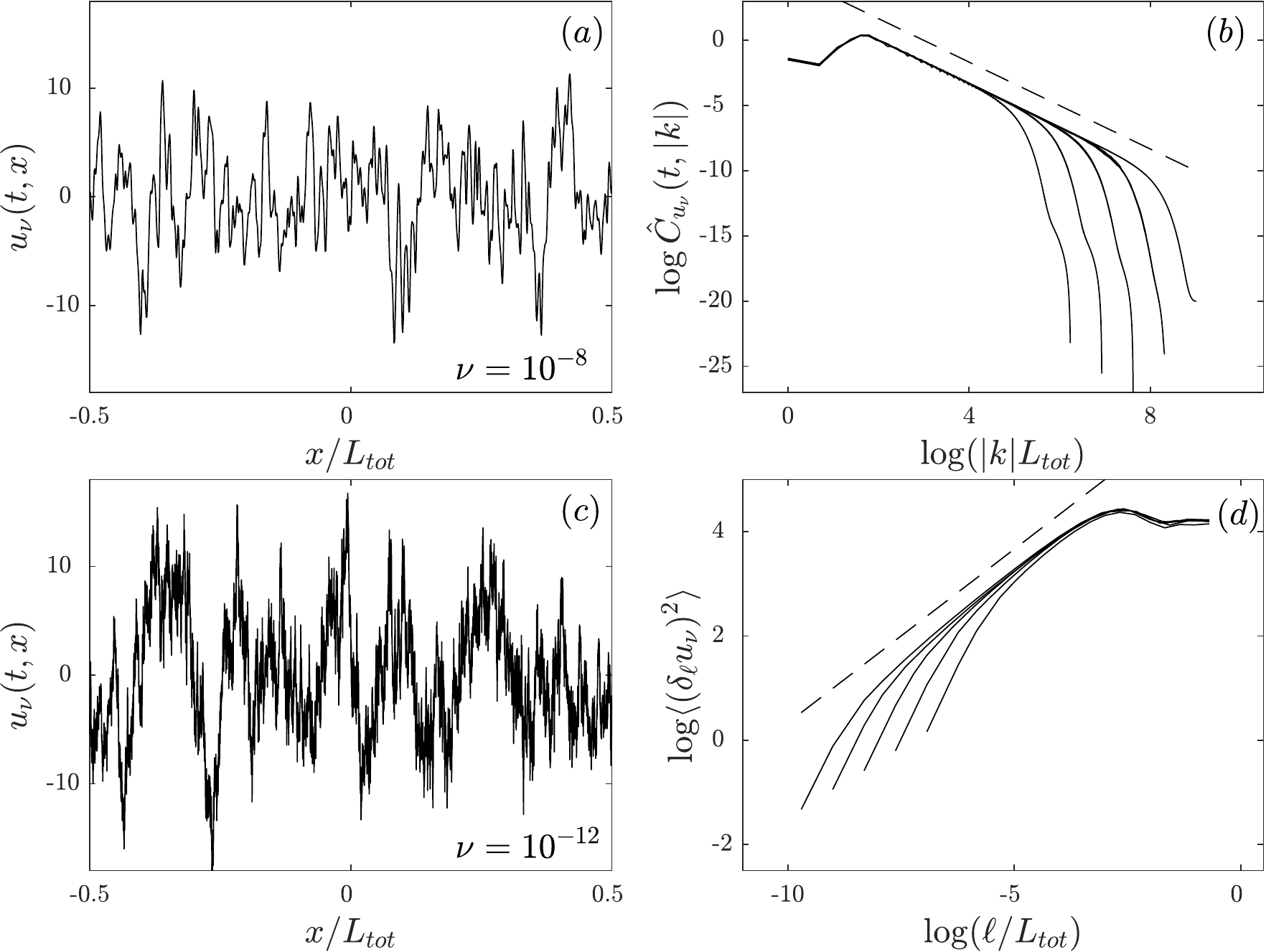}% This is a *.eps file
    \caption{Local and statistical behaviors of the solution $u[t,x]$ to the evolution provided in paragraph \ref{Sec:Algo} for space dimension $d=1$, for a given viscosity $\nu$ in the statistically stationary range. All simulations have been done with $c=1$ and $H=1/3$ (a): spatial profiles of  $u[\cdot,x]$ at a given time $t$ in the statistically stationary regime for $\nu=10^{-8}$ obtained using $N=2^{10}$ collocation points.  (b): estimations of the energy spectral density based on the averaged periodograms (see \Cref{RK:DefEstimators} and \Cref{eq:PSDaverage}) of the solution for various values of viscosity $\nu=10^{-8}$, $10^{-9}$, $10^{-10}$, $10^{-11}$ and $10^{-12}$ (from left to right), using respectively $N=2^{10}$, $2^{11}$, $2^{12}$, $2^{13}$ and $2^{14}$  collocation points. We superimpose with a dashed line the asymptotic prediction $|k|^{-5/3}$. (c): same plot as in (a), but for a lower value of viscosity $\nu=10^{-12}$. (d): Similar plot as for (b) but for the second order structure function $S_2(\ell)$ \eqref{eq:S2L}, i.e. the variance of the increments, following an averaging procedure detailed in the text. We superimpose the expected asymptotic power-law behavior $\ell^{2/3}$. }\label{fig:1D}
  \end{figure}

We display in \Cref{fig:1D} the results of our simulations for space dimension $d=1$. We have used $c=1$ and $H=1/3$, and the time step $\Delta t=5 \times 10^{-3}$, whose value is motivated in section \ref{RK:Parameters}. As explained in \Cref{Sec:DeterViscAver}, we run the simulation until $T_*=k_{\max}/c$ at which all accessible length scales and wave lengths have been populated. After this transient, we then average various estimators such as the energy spectral density \eqref{eq:PSDaverage} and the second order structure function \eqref{eq:S2L} every 2 units of time. We indeed observe (data not shown) much before $T_*$ that the $L^2$-norm of the solution \eqref{eq:L2norm} fluctuates around a mean value. Typical snapshots of the solution $u[\cdot,x]$ are displayed in Figs. \ref{fig:1D}(a) and (c) at a time pertaining to the statistically stationary regime. A moderate viscosity $\nu=10^{-8}$ has been used in (a), whereas a much smaller one $\nu=10^{-12}$ is used in (c). We can see that as $\nu$ gets smaller, the velocity profile develops smaller scales and for the smallest viscosity that has been considered, it looks rougher. The averaged periodograms \eqref{eq:PSDaverage} obtained for all considered viscosities are represented in \Cref{fig:1D}(b). All estimated power spectral densities coincide at low wave lengths since the same forcing term has been used in all simulations. Then, at higher wave lengths, the spectra develop a universal power-law, independent of the characteristics of the forcing, with an exponent $-5/3$ which coincides with the expected $-(2H+d)$ exponent (see Eq.~\eqref{cor-u-heuristic-chi}) when choosing $H=1/3$. Then, at a characteristic wave length determined by the value of viscosity, spectra undergo a strong decrease, which is reminiscent of viscous damping. Similar behaviors are observed on the second order structure function $S_2(\ell)$ \eqref{eq:S2L}. At large length scales $\ell$ of the order of the scales where energy is injected into the system, determined by the spectral support of the forcing, $S_2(\ell)$ is independent of viscosity. At these scales, larger than the correlation length of the spatial profile of $u[t,x]$, $S_2(\ell)$ is approximatively equal to two times the variance of the solution. The fact that all curves superimposed at these scales show that this variance is indeed independent of viscosity. In the so-called inertial range of scales, as it is recalled in Section \ref{Sec:MainResults}, $S_2(\ell)$  develops a power-law behavior, whose exponent $2H$ governs the H\"{o}lder regularity of the asymptotic solution. At smaller length scales, viscosity smooths out any irregular variations such that  $S_2(\ell)$ can be Taylor expanded and becomes proportional to $\ell^2$.

 \begin{figure}%[t]
    \centering
 \includegraphics[width=14cm]{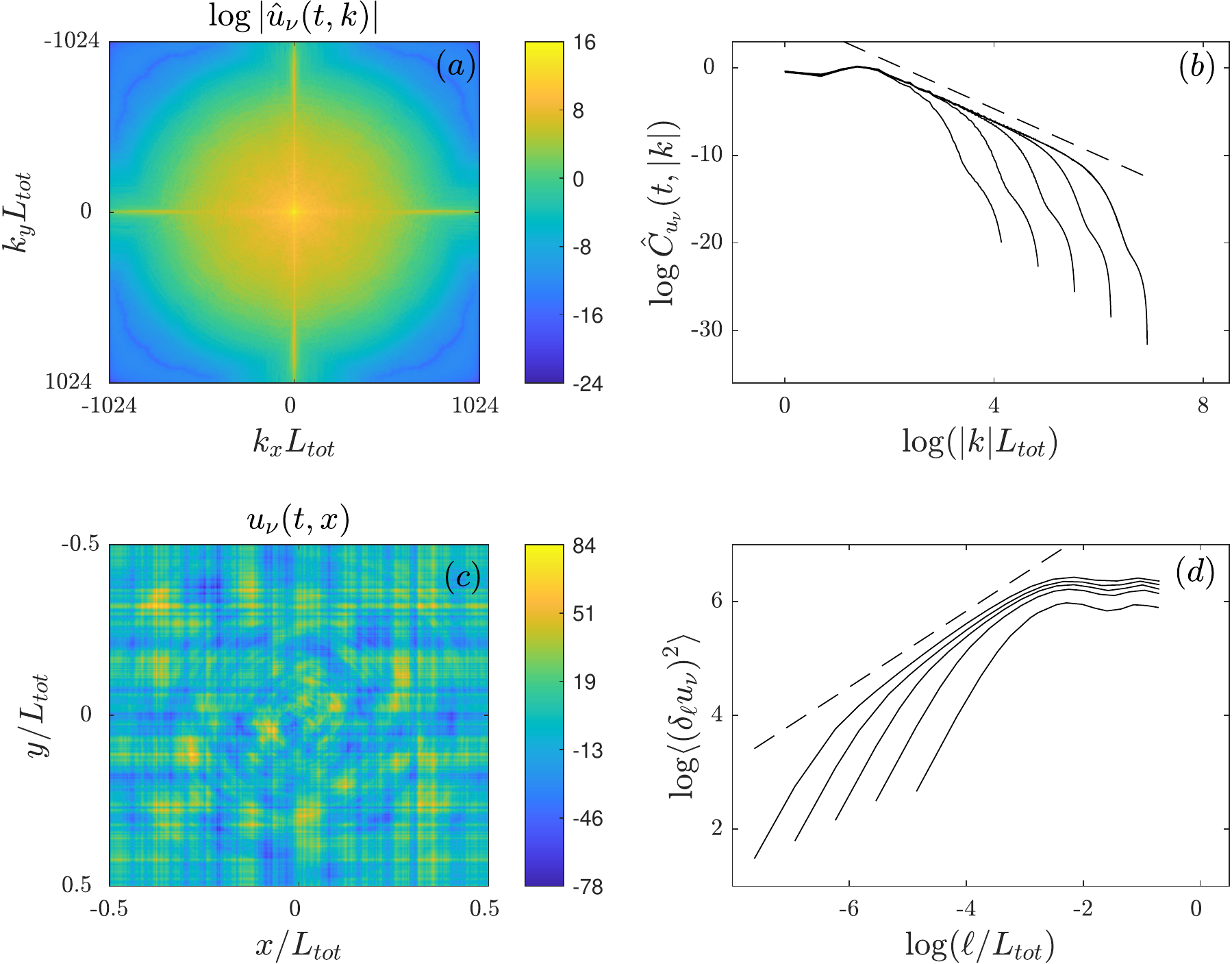}% This is a *.eps file
    \caption{Local and statistical behaviors of the solution $u[t,x]$ to the evolution provided in paragraph \ref{Sec:Algo} for space dimension $d=2$, for a given viscosity $\nu$ in the statistically stationary range. All simulations have been done with $c=1$ and $H=1/3$ (a): Representation of the logarithm of the absolute value of  $\widehat{u}[\cdot,k]$ at a given time in the statistically stationary regime for $\nu=10^{-9}$ obtained using $N=2^{11}$ collocation points.  (b): estimations of the angle-averaged energy spectral density based on the averaged periodograms (see \Cref{RK:DefEstimators} and \eqref{eq:PSDaverage}) of the solution for various values of viscosity $\nu=10^{-5}$, $10^{-6}$, $10^{-7}$, $10^{-8}$ and $10^{-9}$ (from left to right), using respectively $N=2^{7}$, $2^{8}$, $2^{9}$, $2^{10}$ and $2^{11}$  collocation points. We superimpose with a dashed line the asymptotic prediction $|k|^{-2H-d}$, with $H=1/3$ and $d=2$. (c): similar plot as in (a) but for the corresponding spatial profiles of  $u[\cdot,x]$.  (d) Similar plot as for (b) but for the second order structure function $S_2(\ell)$ \eqref{eq:S2L}, i.e. the variance of the increments, following an averaging procedure detailed in the text. We superimpose the expected asymptotic power-law behavior $\ell^{2H}$, with $H=1/3$. }\label{fig:2D}
\end{figure}

 \subsubsection{Results and comments in dimension $d=2$}
 
We now display in \Cref{fig:2D} the results of our simulations for space dimension $d=2$. We have used again $c=1$ and $H=1/3$, and the time step $\Delta t=5 \times10^{-3}$. Similarly to the $d=1$ case, we propagate in time the discrete evolution provided in paragraph \ref{Sec:Algo} using the predictor-corrector method  until time $T_*$ at which all accessible scales have been populated. Before then, viscosity has damped all the energy at small scales such that the Fourier mode at $k_{\max}$ is exponentially small. Hence, we will consider that starting from time $T_*$, the statistically stationary regime has also been reached. We display in \Cref{fig:2D}(a) the amplitude of the Fourier modes $\widehat{u}[t,k]$ at a time $t$ lying in the statistically stationary regime, as a function of the two components $k_x$ and $k_y$ of the wave vector $k$, in a logarithmic representation. We can see that a lot of energy is concentrated along the two lines $(k_x,0)$ and $(0,k_y)$, whereas elsewhere in the plane, up to fluctuations, energy is distributed in a rotation invariant (i.e. isotropic) way. Recall that in a continuous framework, the statistical properties of Fourier modes $\widehat{u}(t,k)$ at any time $t$ are expected to depend only on the amplitude $|k|$. This shows that our numerical representation $\widehat{u}[t,k]$ is intrinsically anisotropic. We interpret this spurious anisotropy as the consequence of the finiteness of our simulation domain, with the implied finiteness of the resolution $\Delta k=1/L_{tot}$, but also the fact that the Fourier modes $\widehat{u}[t,k]$ are distributed on a Cartesian grid, whereas the continuous solution $\widehat{u}(t,k)$ is expected in average to be spherically symmetric. Another limitation of our numerical approach is related to the rough nature of the expected solution. It is clear from the inspection of \Cref{fig:2D}(a) that Fourier modes are correlated and smoother than what is expected from the Fourier transform of a statistically isotropic random field. This is very possibly related to the estimation of the divergence operator entering in the evolution \eqref{eq:maineq-div-viscous} with finite differences, as it is implicitly done using back and forth DFTs (see \Cref{eq:MeaningDifferentK}). A specifically devoted article aimed at exploring the numerical representation of the continuous formulation provided in \eqref{eq:maineq-div-viscous} would be needed, designing for instance some finite volume algorithms able to deal with the rough nature of the underlying fields. It will be the subject of future publications. 

Accordingly, the representation of the solution in physical space, that we display in \Cref{fig:2D}(c), exhibits two types of anisotropies. The first type of anisotropy can be observed along the two directions $x$ and $y$ of the Cartesian frame as straight lines. This anisotropy is consistent with what is observed on the Fourier transform displayed in \Cref{fig:2D}(a) along $k_x$ and $k_y$. From the inspection of \Cref{fig:2D}(c), another type of anisotropy can be evidenced around the origin, i.e. around $x=y=0$. Recall that for a statistically homogenous field, probability laws are expected to be invariant by translation, and thus the origin of the domain should not play a particular role. In our numerical solution, this is clearly not the case, and we believe that this anisotropy is related to the aforementioned correlated nature of the Fourier modes displayed in \Cref{fig:2D}(a). Once again, more work is needed to design proper numerical schemes able to get rid of these spurious anisotropies.

Nonetheless, despite these anisotropies, that are not present in the solution of the continuous framework, our numerical solution behaves in a statistical sense in the expected way. For instance, we display in \Cref{fig:2D}(b) the estimation of the power spectral densities, obtained while averaging in time the square of the amplitude of Fourier modes. We underline that these spectral densities are furthermore averaged over the angles that the wave vector $k$ makes with the axes of the Cartesian frame, such that displayed spectral densities depend only on the wave length amplitude $|k|$. We indeed observe that, as viscosity gets smaller and smaller, spectral densities develop a power-law behavior in the inertial range of scales, with the expected exponent $-(2H+d)$ which is derived in \eqref{cor-u-heuristic-chi}. At higher wavenumbers, Fourier modes are exponentially damped by viscosity. Similar conclusions could be drawn from inspection of the scale dependence of the second order structure function $S_2(\ell)$ that is shown in \Cref{fig:2D}(d). Let us mention here that spatial averages have been obtained over the full spatial domain, averaging furthermore over two structure functions obtained while considering spatial lags $\ell=|\ell|e_x$ and $\ell=|\ell|e_y$, where $e_x$  and $e_y$ are the two orthonormal unit vectors of the Cartesian frame. At the largest scales, above the characteristic ones of the forcing, we can see that $S_2(\ell)$ reaches a plateau, which gets independent of viscosity as $\nu\to 0$. Similarly to the $d=1$ case, above the correlation length of $u[t,x]$, $S_2(\ell)$ coincides with two times the variance of $u[t,x]$, which says in other words that the variance gets itself independent of viscosity if $\nu$ is chosen small enough. At lower scales, i.e. in the inertial range, $S_2(\ell)$ develops a power-law behavior of exponent $2H$, as it is expected from the predictions made in the continuous framework (\Cref{thm:Holder_sol}), over a range of scales which grows as viscosity gets smaller and smaller. Finally, at even smaller scales, viscous effects dominate and smooth out the spatial profiles, such that  $S_2(\ell)$ gets proportional to $\ell^2$.

 \begin{figure}[t]
    \centering
 \includegraphics[width=14cm]{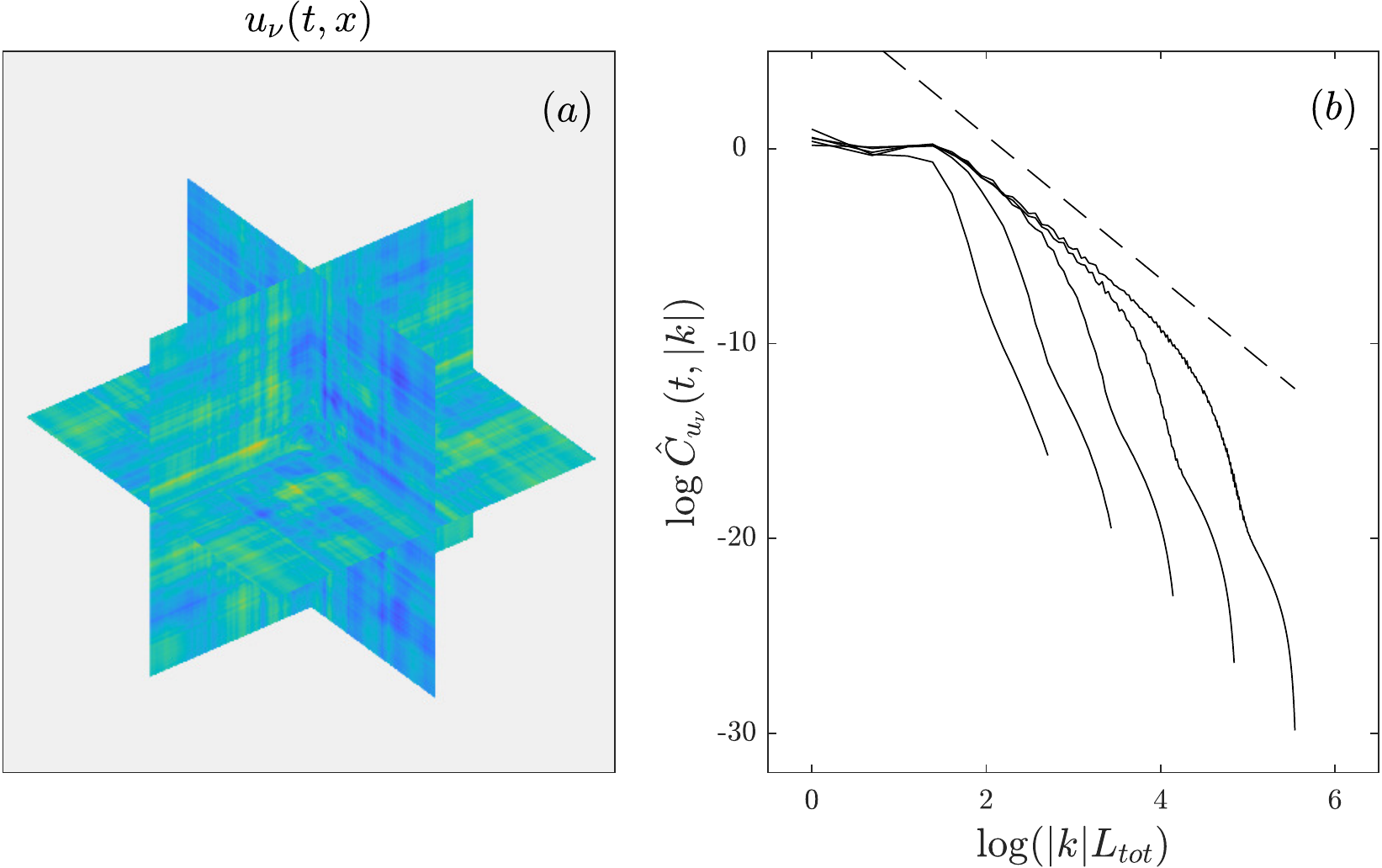}% This is a *.eps file
    \caption{Local and statistical behaviors of the solution $u[t,x]$ to the evolution provided in paragraph \ref{Sec:Algo} for space dimension $d=3$, for a given viscosity $\nu$ in the statistically stationary range. All simulations have been done with $c=1$ and $H=1/3$ (a): Representation of some slices  of  $u[\cdot,x]$ at a given time in the statistically stationary regime for $\nu=10^{-7}$ obtained using $N=2^{9}$ collocation points in each directions.  (b): estimations of the angle-averaged energy spectral density based on the averaged periodograms (see \Cref{RK:DefEstimators} and \eqref{eq:PSDaverage}) of the solution for various values of viscosity $\nu=10^{-3}$, $10^{-4}$, $10^{-5}$, $10^{-6}$ and $10^{-7}$ (from left to right), using respectively $N=2^{5}$, $2^{6}$, $2^{7}$, $2^{8}$ and $2^{9}$  collocation points. We superimpose with a dashed line the asymptotic prediction $|k|^{-2H-d}$, with $H=1/3$ and $d=3$.  }\label{fig:3D}
  \end{figure}

 \subsubsection{Results and comments in dimension $d=3$}
Let us now finish this Section devoted to numerical simulations by presenting in \Cref{fig:3D} the results for space dimension $d=3$. Let us mention that in a three-dimensional setup, simulations are much more demanding from a computational perspective, and the cost of performing back and forth FFTs gets much higher, because derivatives along the 3 directions have to be considered, and also because the number of discretization points, $N^3$, increase tremendously as $N$ increases. For these reasons, and because we are propagating the integration in time from a vanishing initial condition towards the statistically stationary regime before taking averages, we could not go above $N=2^9=512$ collocation points along each direction. Consequently, we have not been able to perform simulations for viscosities smaller than $\nu=10^{-7}$. Nonetheless, we observe (data not shown) that fluctuations as quantified by the $L^2$-norm \eqref{eq:L2norm} get independent of viscosity in a good approximation starting from $\nu=10^{-6}$, as it is expected from the behavior of the solution in a continuous framework. As mentioned, and similarly to the $d=1$ and $d=2$ cases, we go through the transient while integrating the solution until time $T_*$, and only then we start taking averages.

We display in \Cref{fig:3D}(a) a rendering of our three-dimensional simulations in physical space in the Cartesian frame. For the sake of clarity, and because visualizations gets more complicated, we only show three slices along the planes $(x,y,0)$, $(x,0,z)$ and $(0,y,z)$. Bright and dark colors correspond respectively to large positive and large negative fluctuations of the solution, similarly to what has been observed for the $d=2$ case, which is displayed in \Cref{fig:2D}(c). Once again, we observe the two types of anisotropies that we evidenced in the $d=2$ case, one along the three directions of the Cartesian frame, and one around the origin. 

Nonetheless, as it is displayed in  \Cref{fig:3D}(b), the simulations behave as expected in a statistical manner, as it can be observed on the power spectral densities \eqref{eq:PSDaverage}. Once again, these densities are not only averaged in time once the statistically stationary regime has been reached, but they are also averaged over the two angles that the wave vector $k$ does with the Cartesian axes, such that $\widehat{C}_u(k)$ is a function of the norm $|k|$ only. We can see that at large scales, i.e. at low wavenumbers $|k|$, fluctuations are independent of viscosity, even for the highest value $\nu=10^{-3}$ which is used for the lowest number of collocation points $N=2^5$. It is nonetheless crucial to consider smaller values of viscosity, necessitating thus higher values of $N$, up to $N=2^9$, in order to develop an extended inertial range. It is clear for the smallest value of viscosity ($\nu=10^{-7}$ for $N=2^9$) that the spectral density has developed a power-law behavior in the inertial range, with the expected exponent $-(2H+d)$, in a consistent manner with our theoretical predictions \eqref{cor-u-heuristic-chi}.

We thus see that we are able to give an appropriate numerical representation of the continuous framework using the DFT to define the divergence operator entering in the spectral evolution provided in \eqref{eq:maineq-div-viscous}. In particular, we are able to reproduce in a discrete setup the statistical behaviors of the spectral densities (Eq.~\ref{cor-u-heuristic-chi})  and second-order structure functions (\Cref{thm:Holder_sol}), and their related power-law behaviors. Nonetheless, more work is needed to get rid of the anisotropies that are clearly observed for the $d=2$ and $d=3$ cases. To do so, a promising direction could be given while designing a finite volume scheme able to respect the inherent spherical symmetry of the deterministic part of the evolution 
\eqref{eq:maineq-div-viscous}. This is required in order to propose a realistic model of fully developed fluid turbulence that could be used in an efficient way in various applications, in which spatial fluctuations of the velocity field have crucial consequences on the evolutions of dynamical quantities of interest. We keep these important developments for future investigations.

\bibliographystyle{hsiam}
\bibliography{references}
\end{document}